\documentclass[11pt,a4paper]{article}
\usepackage{graphicx,nicefrac}
\usepackage{tabularx,booktabs,multirow}
\usepackage{todonotes}
\usepackage{amssymb,amsmath,amsthm}
\usepackage{mathtools}
\usepackage[pagebackref]{hyperref}
\usepackage{subcaption}
\captionsetup{compatibility=false}
\usepackage{amsmath,amsfonts}
\usepackage{xcolor}
\usepackage{xspace}
\usepackage{bbm}
\usepackage{tikz}
\usepackage{makecell}
\usepackage{authblk}
\usepackage{fullpage}
\usepackage{graphicx}
\usepackage{cleveref}
\usepackage{bm}
\usepackage{bigfoot}
\usepackage[T1]{fontenc}
\usepackage{graphicx}
\usepackage{comment}
\usepackage{nicefrac}

\usepackage{amsmath}
\usepackage{amssymb}
\usepackage{tabularx}
\usepackage[]{todonotes} 
\usepackage{multirow}

\sloppy
\usepackage{tikz}
\usetikzlibrary{arrows}
\usetikzlibrary{automata,arrows,positioning,calc}

\newcommand{\T}{T}
\usepackage{thmtools} 
\usepackage{thm-restate}

\usetikzlibrary{calc,decorations.markings}
\tikzstyle{vertex}=[draw, circle, fill, inner sep = 2pt]
\tikzstyle{squared-vertex}=[draw, fill, inner sep = 2pt]
\tikzstyle{approves}=[decoration={
	markings,
	mark=at position 0.8 with {\arrow{>}}}]

\makeatletter
\def\moverlay{\mathpalette\mov@rlay}
\def\mov@rlay#1#2{\leavevmode\vtop{%
		\baselineskip\z@skip \lineskiplimit-\maxdimen
		\ialign{\hfil$\m@th#1##$\hfil\cr#2\crcr}}}
\newcommand{\charfusion}[3][\mathord]{
	#1{\ifx#1\mathop\vphantom{#2}\fi
		\mathpalette\mov@rlay{#2\cr#3}
	}
	\ifx#1\mathop\expandafter\displaylimits\fi}
\makeatother

\newcommand{\cupdot}{\charfusion[\mathbin]{\cup}{\cdot}}

\newcommand{\Gla}{\ensuremath{G_{\ell - \alpha + 1}}}

\newtheorem{corollary}{Corollary}
\newtheorem{proposition}{Proposition}

\newtheorem{example}{Example}

\crefname{observation}{Observation}{Observation}

\title{Stable Matching with Multilayer Approval Preferences: Approvals can be Harder than Strict Preferences}

\author{Matthias Bentert} 
\author{Niclas Boehmer}
\author{Klaus Heeger}
\author{Tomohiro Koana}

\affil{\small
  Technische Universit\" at Berlin, Algorithmics and Computational Complexity\protect\\
  \{matthias.bentert,niclas.boehmer,heeger,tomohiro.koana\}@tu-berlin.de}

\date{\today}

\begin{document}
\maketitle              
\begin{abstract}
We study stable matching problems where agents have multilayer preferences:
There are~$\ell$~layers each consisting of one preference relation for each agent.
Recently, Chen et al.~[EC '18] studied such problems with strict preferences, establishing four multilayer adaptions of classical notions of stability.
We follow up on their work by analyzing the computational complexity of stable matching problems with multilayer \emph{approval} preferences.
We consider eleven stability notions derived from three well-established stability notions for stable matchings with ties and the four adaptions proposed by Chen et al.
For each stability notion, we show that the problem of finding a stable matching is either polynomial-time solvable or NP-hard.
Furthermore, we examine the influence of the number of layers and the desired ``degree of stability'' on the problems' complexity. 
Somewhat surprisingly, we discover that assuming approval preferences instead of strict preferences does not considerably simplify the situation (and sometimes even makes polynomial-time solvable problems NP-hard).
\end{abstract}

\section{Introduction}
Problems related to matching under preference are a popular and extensively researched topic in computer science, economics, and mathematics \cite{DBLP:books/ws/Manlove13}. 
In the classical \textsc{Stable Marriage} problem, we are given two sets of agents with each agent having strict preferences over the agents from the other side. A matching of agents from one side to the other is \emph{(Gale-Shapley) stable} if there is no so-called blocking pair, i.e., a pair of agents preferring each other to their current partner.
However, in reality, agents may rank the other agents with respect to multiple criteria, with each of these criteria giving rise to a different evaluation of agents.
Motivated by this, Chen et al. \cite{DBLP:conf/sigecom/ChenNS18} pioneered the study of \textsc{Stable Marriage} where agents have multilayer preferences. 
In their model, there are~$\ell$~separate layers, and in each layer, all agents provide a strict ranking of agents from the other side. 
Thus, each agent specifies $\ell$ strict rankings, one for each layer.

Multilayer preferences are a general framework which can model a wide range of situations: 
for instance, a layer may represent a criterion according to which agents evaluate each other. 
Another example concerns uncertain situations:
Here, scenarios for the future each give rise to a separate layer containing the agents' preferences in this scenario. 
Lastly, matching fixed groups to each other (e.\,g.\ couples or classes), preferences of the different group members may be expressed in multiple layers: Each agent represents one group and the preferences in one layer represent the preferences of one (arbitrary) group~member.

Chen et al.\@ \cite{DBLP:conf/sigecom/ChenNS18} considered four different multilayer adaptions of Gale-Shapley stability and showed for each that deciding the existence of a stable matching is NP-hard even for only four layers.
Motivated by this, we study a simpler preference model: multilayer \emph{approval} preferences. Here, in each layer, instead of providing a strict ranking, agents approve some agents and disapprove all others.\footnote{In our model, agents \emph{prefer} agents they approve to agents they disapprove and to having no partner, but are \emph{indifferent} between the later two.}
Moving from strict to approval preferences gives us more options for stability notions.
We study adaptions of the three established stability notions for stable matchings with ties and single-layered preferences: \emph{weak}, \emph{strong}, and \emph{super stability} \cite{DBLP:journals/dam/Irving94,DBLP:books/ws/Manlove13}.
For instance, under the most popular \emph{weak stability} criterion, a matching is stable if there is no pair of agents that both prefer each other to their current partner. 
As approval preferences are easier to cast, are often used in practice, and typically lead to better axiomatic guarantees and faster algorithms, they have already been widely applied to collective-decision problems (see for example~\cite{DBLP:journals/teco/AzizBM20,DBLP:journals/jair/BouveretL08,DBLP:journals/corr/abs-2007-01795,DBLP:conf/aaai/TalmonF19}; for voting, approval preferences even constitute their own subfield). 
However, approval preferences have only rarely been considered in the context of matchings under preferences because stable matchings are trivial to construct for single-layer approvals.\footnote{E.\,g., a weakly stable matching corresponds to a maximal matching in the undirected part of the approval graph (see \Cref{se:prelims}). }
However, (multilayer) approval preferences are of no less practical relevance in matching markets, as they arise, for instance, if preferences
model compatibility, availability, or simply whether agents
have a certain attribute or qualification (from the other agent's
perspective). For instance, when matching students for a group homework, layers could represent weekdays and two agents approve each other in some layer if they are both available on this day of the week. Alternatively, layers could represent whether students deem each other qualified with respect to different criteria (e.\,g., being able to solve the homework, writing down the homework, presenting the homework) or whether they have completed certain relevant previous courses. 

To adapt weak, strong, and super stability to multilayer preferences, we use the following four generalizations proposed by Chen et al. \cite{DBLP:conf/sigecom/ChenNS18} (the meaning of ``favors'' and ``stability'' in the following definitions depends on whether we generalize weak, strong, or super stability; see \Cref{se:prelims} for details).
A matching is \emph{all-layers stable} if it is stable in every layer.
The other three generalizations are each equipped with a desired degree $\alpha$ of stability. 
A matching is \emph{$\alpha$-globally stable} if there are $\alpha$ layers in which the matching is stable.
(In particular, all-layers stability is equivalent to $\ell$-global stability.)
A matching is \emph{$\alpha$-pair stable} if for each unmatched pair of agents, there are $\alpha$ layers in which one of the two agents ``favors'' the current matching to the other agent, i.e., each pair may block in at most $\ell-\alpha$ layers.  
Lastly, in an \emph{$\alpha$-individually stable} matching, for each unmatched pair of agents, one of them ``favors'' the current matching to the other agent in at least $\alpha$ layers.  
Which of these four generalizations is appropriate depends not only on the application but also on what the different layers represent: 
For instance, if each layer captures a potential state of an uncertain scenario, then $\alpha$-global stability may be useful to maximize the probability that stability is established. 
In contrast, when each agent models a group and each layer contains the preferences of a group member, $\alpha$-global stability seems less appealing, as a priori the preferences of different agents within one layer are completely unrelated. 
Here, $\alpha$-individual stability is more natural, as in an $\alpha$-individually stable matching, asking two currently unmatched groups whether they prefer being together, in one of them at least $\alpha$ agents vote against this. 

Combining weak, strong, and super stability with the four multilayer generalizations of Chen et al. \cite{DBLP:conf/sigecom/ChenNS18}, we analyze the computational complexity of deciding the existence of a stable matching for eleven stability notions,\footnote{For individual stability, we have only two stability notions depending on the definition of when one agent ``favors'' another.} 
also taking into account two natural preference restrictions: symmetric approvals, where agents' approvals are mutual, and bipartite approvals, where there is a bipartition of the agents and each agent only approves or disapproves agents from the other set. Symmetric approvals arise for instance if preferences encode compatibility constraints, while in many matching markets approvals are by-design bipartite, e.\,g., when matching applicants to jobs, students to schools, or mentees to mentors.

\subsection{Related Work}
Chen et al. \cite{DBLP:conf/sigecom/ChenNS18} proved that deciding whether an $\alpha$-globally/pair stable matching exists is NP-hard for any $2\leq \alpha \leq \ell$ for \textsc{Stable Marriage} with strict preferences.
For individual stability they proved that the problem is polynomial-time solvable for $\alpha=\ell$ but NP-hard for~$2 \le \alpha \le \frac{2}{3} \ell$. 
Moreover, they identified two preference restrictions that lead to polynomial-time solvability: for $\alpha$-global stability if, within each layer, all agents from one side have the same preferences, and for $\alpha$-pair and $\alpha$-individual stability with $\alpha>\lfloor \ell / 2 \rfloor$ if the preferences of agents from one side do not change between different layers.
In sum, our work differs from Chen et al.'s work in the following points:
We consider approval instead of strict preferences (leading to eleven algorithmic questions, which are all fundamentally different from the work of Chen et al.), we do not restrict ourselves to the bipartite case, and we study several new parameterizations to achieve tractability.
Notably, problems equivalent to deciding whether a \textsc{Stable Marriage} instance with multilayer strict preferences admits an all-layers stable matching have also been studied by Miyazaki and Okamoto \cite{DBLP:journals/jco/MiyazakiO19} and Aziz et al.~\cite{DBLP:journals/algorithmica/AzizBGHMR20}.
Following up on the work of Chen et al. \cite{DBLP:conf/sigecom/ChenNS18}, Wen et al.~\cite{posbasedmm} also studied a bipartite matching problem where agents have multilayer strict preferences over the agents from the other side. 
However, different from our work and the work of Chen et al. \cite{DBLP:conf/sigecom/ChenNS18}, they did not consider any type of stability notion and instead studied the problem of finding a matching that minimizes different types of  ``dissatisfaction scores''.
Moreover, Steindl and Zehavi~\cite{DBLP:conf/eumas/SteindlZ21,DBLP:conf/eumas/SteindlZ21a} studied a multilayer version of the house allocation problem, where agents have multilayer preferences over a set of houses and a matching of houses to agents needs to be found. 
Notably, they considered an extension of Pareto optimality analogous to global stability.  

Multilayer preferences have started to gain increasing popularity in the area of computational social choice, e.\,g., in multiwinner voting \cite{DBLP:conf/ecai/0001T20} and fair division~\cite{DBLP:journals/mss/Suksompong18,DBLP:conf/ijcai/KyropoulouSV19,DBLP:journals/ai/Segal-HaleviS19}; and in their blue sky\ paper Boehmer and Niedermeier \cite{DBLP:conf/atal/BoehmerN21} called for a broader application of multilayer preferences in the area.
Conceptually closely related, uncertain preferences, where different preferences have different probabilities, have also been studied in the areas of stable matching \cite{DBLP:conf/atal/AzizBFGHMR17,DBLP:journals/algorithmica/AzizBGHMR20} and resource allocation \cite{DBLP:journals/ai/AzizBHR19}.
Notably, we are interested in finding a single matching for multiple preference profiles.
An ``opposite'' problem of finding a set of (proportional) matchings given a single preference profile has been studied by Boehmer et al.~\cite{DBLP:journals/corr/abs-2102-07441}, who focused on agents having symmetric/bipartite approval preferences. 

From a technical perspective, our problems can be phrased as finding a matching fulling certain properties in some multi-layer graph, i.e., a graph with multiple edge sets defined over the same vertex set. 
While there exist numerous works on multilayer graphs \cite{DBLP:conf/asunam/MagnaniR11,boccaletti2014structure,DBLP:journals/compnet/KivelaABGMP14}, only few studied matching-related problems and all of them are different from the ones considered here (see for example~\cite{bredereck_komusiewicz_kratsch_molter_niedermeier_sorge_2019}).

\begin{table*}[t]
	\caption{Overview of our results for different stability notions. 
		All algorithmic results are for arbitrary (asymmetric) approvals (except the results marked with $\dagger$), while all hardness results (except the ones marked with $\ddagger$) hold for symmetric approvals. 
		Most hardness results also hold if approvals are bipartite and each agent only approves few agents.}
	\resizebox{\textwidth}{!}{\begin{tabular}{ c|c|c|c|c } 
			
			&  all-layers & global & pair & individual \\ \hline \hline
			\multirow{2}{2em}{weak}  & \multirow{2}{10em}{NP-h. for any $\ell \ge 2$ (T. \ref{th:all-layer-weak})}  &  \multirow{2}{14em}{NP-h. for any $\ell\geq \alpha \geq 2$ (Pr. \ref{pr:global-weak})} & \multicolumn{2}{|c}{P for $\alpha\le \lceil \ell / 2 \rceil$ (T. \ref{th:weak-individual})}  \\ 
			&   &  &  \multicolumn{2}{|c}{NP-h. for any $\ell \geq 2,\alpha>\lceil \ell / 2 \rceil$  (T. \ref{weak:pair-NP})}\\ \hline
			\multirow{3}{2.8em}{strong} &
			P for sy. (Pr. \ref{th:strong-alllayer})$^{\dagger}$ & NP-h. and W[1]-h. wrt. $\alpha$  (Pr. \ref{th:strong-global}) & NP-h. for any~$\ell \geq 2$ and & \multirow{2}{2em}{$-$}\\ 
			&NP-h.  for any $\ell\geq 3$ (T. \ref{thm:all-layer-strong})$^\ddagger$ & FPT wrt. $\ell$ for sy. (Co. \ref{co:strong-globalPara})$^{\dagger}$ & any~$0 < \alpha < \ell$  (T. \ref{th:pair-strong})  & \\
			&& XP wrt. $\alpha$ for sy. (Co. \ref{co:strong-globalPara})$^{\dagger}$  &&
			\\ \hline
			\multirow{3}{2em}{super} & \multicolumn{2}{c|}{} &\multicolumn{2}{|c}{NP-h. for any $\ell \geq 2,\alpha \leq \ell / 2$ (T. \ref{th:super-pair-neg})}\\
			&\multicolumn{2}{c|}{P  (Pr. \ref{th:super-global})} & P for $\alpha\geq 2\ell / 3$ for sy. (T. \ref{th:super-pair-pos})$^{\dagger}$ & P for $\alpha > \ell / 2$ for sy. (T. \ref{th:super-pair-pos})$^{\dagger}$\\
			&\multicolumn{2}{c|}{} & FPT wrt. $\ell$ if~$\alpha > \ell / 2$ for sy. (T. \ref{th:super-pair-pos})$^{\dagger}$ &\\
	\end{tabular}}\label{tab:results} 
\end{table*}

\subsection{Our Contributions}
We conduct an extensive study of stable matching problems with multilayer approval preferences by considering four different multilayer adaptions of three traditional stability concepts.
For each resulting stability notion, we show whether deciding the existence of a stable matching is polynomial-time solvable or NP-hard; often also pinpointing the complexity for all $\alpha\leq \ell\in \mathbb{N}$. See \Cref{tab:results} for an overview.
Lastly, in \Cref{se:similarity}, we analyze two parameters measuring ``similarity'' in the agents' preferences and show that all our problems are fixed-parameter tractable with respect to both parameters.
We present three important takeaways from our results already here: 

First, while constructing stable matchings for weak, strong, and super stability is simple in the one-layer setting, we show that this task is NP-hard for nine of our eleven stability notions in the general multilayer case.
Our hardness results are quite strong, as we often show hardness in restrictive settings, e.\,g., for two-layered symmetric and bipartite approvals with each agent only approving few agents. 
As we have only two examples of questions which are polynomial-time solvable for symmetric preferences but become NP-hard for asymmetric approvals and no such example for bipartite approvals, our results suggest that these two seemingly strong restrictions do not influence the problems' complexity much.
Nevertheless, we identify some tractable cases, e.\,g., when we have different forms of ``similarity'' in the agents' preferences (\Cref{se:similarity}).

Second, from an algorithmic perspective, multilayer approval preferences are not simpler than multilayer strict preferences, rather they sometimes make problems computationally harder: 
Comparing the picture for Gale-Shapley stability for strict preferences and its most natural analogue weak stability for approvals, we identify two questions that are polynomial-time solvable for the former but NP-hard for the latter.\footnote{These are finding an $\ell$-individually stable matching (\Cref{th:pair-weakP}) and finding an $\ell$-pair stable matching for bipartite approvals where the preferences of agents on one side do not change (\Cref{th:all-layer-weak}).}  
Moreover, for strict preferences, a stable matching is guaranteed to exist for one layer, and thus for $\alpha=1$ all problems of Chen et al. \cite{DBLP:conf/sigecom/ChenNS18}  are polynomial-time solvable. 
In contrast, for approval preferences, there are (single-layer) instances with approval preferences in which there is no matching which is strongly stable or super stable.
In fact, we identify several cases which are already NP-hard for $\alpha=1$ (see \Cref{tab:results}).
However, we also find examples where approvals are ``easier'' than strict preferences, e.\,g., finding an~$\lfloor \ell / 2 \rfloor$-individually stable matching is polynomial-time solvable for weak stability and approval preferences but NP-hard for Gale-Shapely stability and strict preferences \cite{DBLP:conf/sigecom/ChenNS18}.

Third, while our complexity picture for pair and individual stability is quite similar, it is significantly different for global stability, a contrast that is seemingly not present for strict preferences. 
Moreover, weak, strong, and super stability also lead to different results, with super stability being the in some sense computationally easiest of the three (which is in line with other works on stable matchings with indifferences).

\section{Preliminaries} \label{se:prelims}

For~$i \in \mathbb{N}$, we use~$[i] = \{1,2,\ldots,i\}$. For a set $S$, we use $\binom{S}{2}$ to denote the set of all~$2$-element subsets of $S$.
\paragraph{Preferences and Matchings.}
Let $A=\{a_1,\dots, a_n\}$ be the set of agents and $\ell\in \mathbb{N}$ be the number of layers, i.e., each agent has~$\ell$~layers of preferences. 
For~$i\in [\ell]$, each agent $a\in A$ approves a subset of agents~$\T_a^i\subseteq A$ in layer~$i$. 
We say that agent $a$ \emph{approves} agent $a'$ in layer $i$ if~${a'\in \T_a^i}$.

A matching $M\subseteq \binom{A}{2}$ is a set of agent pairs where each agent appears in at most one pair.
For a matching~$M$ and an agent~$a\in A$, we say that~$a$ is \emph{matched} in~$M$ if there is an agent~$a'$ such that~$\{a,a'\}\in M$; otherwise $a$ is \emph{unmatched}. 
Further, if $a$ is matched, we denote as $M(a)$ the partner of $a$ in $M$, i.e., if~$\{a,a'\}\in M$, then~$M(a)=a'$.
If~$a$ is unmatched, then we set~$M(a):=\square$.
An agent $a$ is \emph{happy} in matching $M$ in layer~$i$ if~$M(a)\in \T_a^i$ and \emph{unhappy} otherwise.
In layer~$i\in [\ell]$, an agent $a\in A$ \emph{prefers} being matched to an agent from~$\T_a^i$ to being unmatched or matched to an agent from~$A\setminus \T_a^i$.
Moreover, $a$ is \emph{indifferent} between being matched to any agent in~$\T_a^i$, \emph{indifferent} between being matched to any agent in~$A\setminus \T_a^i$, and \emph{indifferent} between being matched to an agent in~$A\setminus \T_a^i$ and being unmatched.
For the sake of brevity, we also say that~$a\in A$ prefers $b$ to~$c$ in layer~$i$ if $b\in\T_a^i$ and~$c\in A\setminus \T_a^i$, and that $a$ is indifferent between $b$ and~$c$ in layer~$i$ if either $b\in\T_a^i$ and~$c\in\T_a^i$ or $b\in A \setminus \T_a^i$ and~$c\in A \setminus \T_a^i$.
Moreover, we say that~$a\in A$ is indifferent in layer $i$ between $b\in A \setminus \T_a^i$ and $\square$ (which represents being unmatched).
The agents' preferences in some layer~$i\in [\ell]$ can also be represented as a directed (approval) graph $G_i=(A,E_i)$ whose vertices are the agents and which contains an arc from an agent~$a$ to an agent~$a'$ if $a$ approves $a'$ in layer~$i$, i.e., $E_i=\{(a,a')\in A\times A \mid a'\in  \T_a^i \}$.
Approvals are \emph{symmetric} if an agent~$a$ approves an agent $a'$ in some layer $i$ if and only if $a'$ approves $a$ in layer~$i$.
For symmetric approvals, the approval graph for layer~${i\in [\ell]}$ can be modeled as an undirected graph.
Approvals are \emph{bipartite} if the graph~$G=(A,\cup_{i\in [\ell]} E_i)$ is~bipartite.

\paragraph{Notions of Stability and Their Relationships.}

We consider generalizations of three different established notions of stability for the single-layer setting:
Under \emph{weak stability}, an agent pair~$\{a,a'\}\in \binom{A}{2}$ blocks a matching~$M$ in layer~$i$ if both~$a$ and~$a'$ prefer each other to~$M(a)$ and~$M(a')$, respectively.
Under \emph{strong stability}, an agent pair~$\{a,a'\}\in \binom{A}{2}$ blocks a matching~$M$ in layer~$i$ if (i) $a$ prefers~$a'$ to~$M(a)$ and (ii) $a'$ prefers~$a$ to~$M(a')$ or is indifferent between $a$ and~$M(a')$ (the roles of~$a$ and~$a'$ are interchangeable).
Under \emph{super stability}, an agent pair~$\{a,a'\}\in \binom{A}{2}$ blocks a matching~$M$ in layer~$i$ if (i) $a$ prefers $a'$ to~$M(a)$ or is indifferent between~$a'$ and~$M(a)$ and (ii) $a'$ prefers~$a$ to~$M(a')$ or is indifferent between~$a$ and~$M(a')$.
A matching without a blocking pair is called stable under the respective stability notion. 
We do not specify under which stability notion a pair blocks a matching in some layer if it is clear from context. 
Note that every super stable matching is strongly stable and every strongly stable matching is weakly stable.

Weakly and strongly stable matchings can be nicely characterized as matchings in the undirected approval graph~${G_i=(A,{E}_i)}$:
A matching $M$ is weakly stable in layer $i$ if and only if $M$ restricted to ${E}_i$ is a maximal matching in~${G}_i$. 
A matching~$M$ is strongly stable in layer $i$ if and only if for each agent~$a$ that has a neighbor in ${G}_i$, it holds that~${\{a, M(a) \} \in E_i}$.

Extending the work of Chen et al. \cite{DBLP:conf/sigecom/ChenNS18}, we study four generalizations of the above described classic stability notions to the multilayer setting (see \Cref{fig:relation} for an overview of the relationship of the different notions). 
\begin{figure}[t]
	\centering
	\begin{tikzpicture}
	\tikzset{edge/.style = {->,> = latex'}}
	\node[text width=1.5cm] (A) at (1.4,0) 
	{all-layers};
	\node[text width=2.2cm] (B) at (6,0) 
	{$\ell$-individual};
	\node[text width=1.5cm] (C) at (0,-1.5) 
	{$\alpha$-global};
	\node[text width=1.2cm] (D) at (3,-1.5) 
	{$\alpha$-pair};
	\node[text width=2.2cm] (E) at (6,-1.5) 
	{$\alpha$-individual};
	\node[text width=1.4cm] (F) at (0,-3) 
	{$1$-global};
	\node[text width=1cm] (G) at (3,-3) 
	{$1$-pair};
	\node[text width=2cm] (H) at (6,-3) 
	{$1$-individual};
	\draw[edge] (B) to (A);
	\draw[edge] (A) to (C);
	\draw[edge] (A) to (D);
	\draw[edge] (C) to (F);
	\draw[edge] (D) to (G);
	\draw[edge] (B) to (E);
	\draw[edge] (E) to (H);
	\draw[edge] (F) to (G);
	\draw[edge] (C) to (D);
	\draw[edge] (E) to (D);
	
	\draw
	(G.-5) edge[auto=right] node {} (H.183)  
	(H.177) edge[auto=right] node {} (G.5);  
	\end{tikzpicture}
	\caption{Overview of the relations between different stability notions for some~${\alpha\in [\ell]}$ (see~\protect\cite{DBLP:conf/sigecom/ChenNS18}). An arc from one notion to another implies that the first implies the second. These relationships apply to the respective adaptions of weak, strong, and super stability (there is no strong individual stability tough).}\label{fig:relation} 
\end{figure}
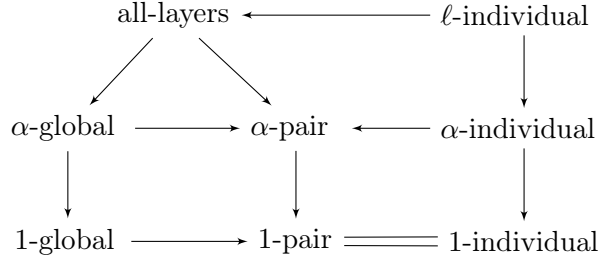%
A matching~$M$ is~\emph{$\alpha$-globally weakly/strongly/super stable} if there is a subset~$S\subseteq [\ell]$ of $\alpha$ layers such that $M$ is weakly/strongly/super stable in each layer from~$S$. 
A matching is \emph{all-layers stable} if it is~$\ell$-globally stable.
Moreover, a matching $M$ is \emph{$\alpha$-pair weakly/strongly/super stable} if for each pair~$\{a,a'\}$ of agents there is a subset~${S\subseteq [\ell]}$ of at least $\alpha$ layers where $\{a,a'\}$ is not blocking under weak/strong/super stability.
Note that for each~$\alpha\in [\ell]$, an~$\alpha$-globally stable matching~$M$ is also $\alpha$-pair stable, as in the $\alpha$ layers in which~$M$ is stable, each pair of agents is not blocking.

Lastly, a matching $M$ is \emph{$\alpha$-individually weakly (super) stable}\footnote{We call it individually ``weak/super'' stability since it coincides with weak/super stability for $\ell=1$.}
if for each unmatched pair~$\{a,a'\}\notin M$, there exists $b\neq b' \in \{a,a'\}$ such that there are $\alpha$ layers in which $b$ does not approve $b'$ or in which $b$ is happy (in which $b$ does not approve $b'$ and in which $b$ is happy).
Individual stability can also be interpreted as follows:
For each unmatched pair, at least one involved agent ``favors'' the current matching to the other agent from the pair in $\alpha$ layers (for weak stability, ``favors''
means prefers or is indifferent, for super stability,
``favors'' means prefers). 
By definition, every~$\alpha$-individually weakly/super stable matching is also~$\alpha$-pair weakly/super stable.
The subtle difference between pair and individual stability is that for each unmatched pair, one agent needs to prevent the pair from blocking in~$\alpha$~layers for individual stability, while the two agents together need to prevent the pair from blocking in~$\alpha$ layers for pair stability.
Note that this difference disappears for~$\alpha = 1$ ($1$-pair weak/super stability is equivalent to $1$-individual weak/super stability).  
In contrast, there are matchings that are all-layers weakly/super stable but not $\ell$-individually weakly/super stable.\footnote{For example, let $A=\{a_1,a_2,a_3,a_4\}$ and $\ell=2$ with $a_1$ and~$a_2$ approving each other in the first layer and~$a_3$ and~$a_4$ approving each other in the second layer. Then, $M=\{\{a_1,a_2\},\{a_3,a_4\}\}$ is all-layers super stable but not~$\ell$-individually super stable. Modifying the instance by letting~$a_1$ and~$a_3$ approve each other in both layers, $M$ is all-layers weakly stable but not $\ell$-individually weakly stable.}

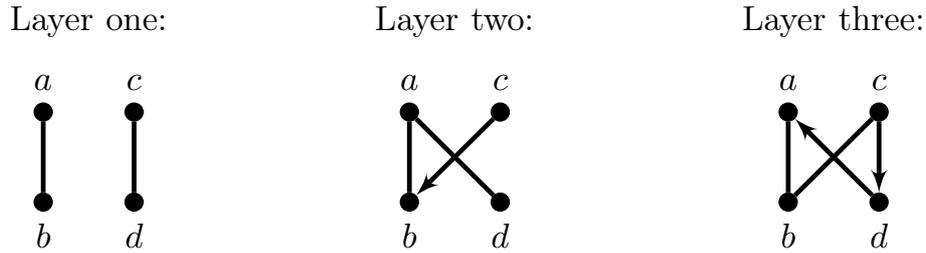
\begin{figure*}[t]
		\begin{center}
			\resizebox{!}{0.22\textwidth}{%
				\begin{tikzpicture}
				\tikzset{edge/.style = {->,> = latex'}}
				\node at (0.5, 1) {Layer one:};
				\node[vertex, label=90:$a$] (c1) at (0,0) {};
				\node[vertex, label=90:$c$] (c2) at ($(c1) + (1,0)$) {};
				
				\node[vertex, label=270:$b$] (c1h) at ($(c1) + (0, -1)$) {};
				\node[vertex, label=270:$d$] (c2h) at ($(c1h) + (1,0)$) {};
				
				\draw[line width=1.5pt] (c1) to (c1h);
				\draw[line width=1.5pt] (c2) to (c2h);
				\end{tikzpicture}}
			\qquad \qquad \qquad
			\resizebox{!}{0.22\textwidth}{%
				\begin{tikzpicture}
				\tikzset{edge/.style = {->,> = latex'}}
				\node at (0.5, 1) {Layer two:};
				\node[vertex, label=90:$a$] (c1) at (0,0) {};
				\node[vertex, label=90:$c$] (c2) at ($(c1) + (1,0)$) {};
				
				\node[vertex, label=270:$b$] (c1h) at ($(c1) + (0, -1)$) {};
				\node[vertex, label=270:$d$] (c2h) at ($(c1h) + (1,0)$) {};
				
				\draw[line width=1.5pt] (c1) to (c1h);
				\draw[line width=1.5pt] (c1) to (c2h);
				\draw[edge,line width=1.5pt] (c2) to (c1h);
				\end{tikzpicture}}
			\qquad \qquad \qquad
			\resizebox{!}{0.22\textwidth}{%
				\begin{tikzpicture}
				\tikzset{edge/.style = {->,> = latex'}}
				\node at (0.5, 1) {Layer three:};
				\node[vertex, label=90:$a$] (c1) at (0,0) {};
				\node[vertex, label=90:$c$] (c2) at ($(c1) + (1,0)$) {};
				
				\node[vertex, label=270:$b$] (c1h) at ($(c1) + (0, -1)$) {};
				\node[vertex, label=270:$d$] (c2h) at ($(c1h) + (1,0)$) {};
				
				\draw[line width=1.5pt] (c1) to (c1h);
				\draw[line width=1.5pt] (c2) to (c1h);
				\draw[edge,line width=1.5pt] (c2) to (c2h);
				\draw[edge,line width=1.5pt] (c2h) to (c1);
				\end{tikzpicture}}
		\end{center}
		\caption{Example with four agents having three-layered approval preferences over each other.
		An undirected edge between two agents means that they mutually approve each other in the respective layer, while a directed edge from an agent~$a^*$ to~$b^*$ means that~$a^*$~approves~$b^*$ (but~$b^*$ not~$a^*$).}
		\label{fig:example}
	\end{figure*}

\begin{example}
\Cref{fig:example} depicts an example with four agents and three layers. 
 For each layer, we depict the approval graph. 
 In the first layer, agent's approvals are symmetric but not in the other two layers. 
 The matching $M_1:=\{\{a,b\},\{c,d\}\}$ is all-layers weakly stable, as the agents~$a$ and~$b$ are happy in all layers. 
 Matching~$M_1$ is $2$-globally/pair strongly stable (because it is strongly stable in layers one and three) and $1$-globally super stable (because it is super stable in layer one). 
 Moreover, $M_1$ is $2$-pair/individually super stable (as the only blocking pairs under super stability are $\{a,d\}$ in layer two and $\{b,c\}$ in layer three). 
 
 The matching $M_2:=\{\{a,d\},\{b,c\}\}$ is $2$-globally/pair weakly stable (as it is weakly stable in layers two and three).
 However, $M_2$ is only $1$-individually weakly stable because of the pair $\{a,b\}$.
 Moreover, $M_2$ is not $\alpha$-globally/pair strongly stable and $\alpha$-globally/pair/individually stable for any $\alpha>0$, as the pair~$\{a,b\}$ blocks $M_2$ in all layers under strong and super stability.
\end{example}

Given a set $A$ of agents and their preferences~$(\T_a^i)_{a\in A,i\in [\ell]}$ in $\ell$ layers, \textsc{All-Layers Weak/Strong/Super Stability} is the problem of deciding whether there is an all-layers weakly/strongly/super stable matching.
In \textsc{Global Weak/Strong/Super Stability}, we are additionally given a parameter $\alpha\in [\ell]$ and the question is to decide whether there is an $\alpha$-globally weakly/strongly/super stable matching.
The problems \textsc{Pair Weak/Strong/Super Stability} and \textsc{Individual Weak/Super Stability} are defined analogously.

\section{Weak Stability} \label{se:weak}
A stable matching is guaranteed to exist in the single-layer weak stability setting: 
Only agents that approve each other can form a blocking pair. 
Thus, a maximal matching in~$G=(A,\{ \{a,a'\}\in \binom{A}{2}\mid a\text{ and } a' \text{ approve each other}\})$ is weakly stable.
It follows that a~$1$-globally/pair/individually weakly stable matching always exists and can be found in linear time. 
\paragraph{All-Layers Stability.}\label{sub:weak-alllayer}
We start by showing that as soon as we add a second layer, \textsc{All-Layers Weak Stability} becomes NP-hard even in very restrictive settings:
\begin{restatable}{theorem}{alllayerweak}
	\label{th:all-layer-weak}
	For each $\ell \ge 2$, \textsc{All-Layers Weak Stability} is NP-hard for symmetric bipartite approvals and NP-hard for (asymmetric) bipartite approvals even if agents from one side approve the same agents in both layers. 
	Both results hold even if each agent approves at most three agents in each layer.
\end{restatable}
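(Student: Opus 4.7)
The plan is to establish NP-hardness for the base case $\ell=2$ and then lift it to arbitrary $\ell\geq 2$ by a padding argument: simply duplicate one of the two layers $\ell-2$ times. Duplicating an existing layer preserves bipartiteness, symmetry, the bounded-degree restriction, and also the property that one designated side has identical approvals in all layers; it does not change whether an all-layers weakly stable matching exists, since the stability condition in the duplicate is identical to the original. Hence everything reduces to a careful construction of the $\ell=2$ instances.

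For the core two-layer reduction I would reduce from a bounded-occurrence variant of 3-SAT (e.g., one in which each variable appears only a constant number of times, which is well known to be NP-hard) so that the degree-at-most-three bound in each layer is inherited for free. The critical tool is the characterization mentioned in \Cref{se:prelims}: a matching $M$ is weakly stable in layer $i$ iff $M\cap E_i$ is a maximal matching in the mutual-approval graph $G_i$. So the task amounts to finding a matching that is simultaneously maximal in two bipartite graphs $G_1,G_2$ on the same vertex set.

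The construction will use two types of gadgets. A variable gadget for $x$ introduces two ``literal vertices'' $T_x,F_x$ together with a short layer-$1$ configuration (for instance, a path or a cherry through a private auxiliary vertex) whose maximality in layer~$1$ forces exactly one of $T_x,F_x$ to be matched inside the gadget, which we interpret as the truth value of $x$. A clause gadget for $C=\ell_1\vee\ell_2\vee\ell_3$ is a small layer-$2$ structure attached to the three literal vertices corresponding to $\ell_1,\ell_2,\ell_3$ whose maximality in layer~$2$ fails unless at least one of those three vertices is still free to be matched inside the clause gadget, i.e., unless $C$ is satisfied under the encoded assignment. Bipartiteness will be maintained by a consistent two-coloring of the gadget vertices, and the degrees stay bounded by three because each variable occurs only a constant number of times and each gadget is local. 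Correctness follows by decoding a common maximal matching into a truth assignment and, conversely, by extending any satisfying assignment to a common maximal matching.

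For the asymmetric variant where agents on one side have the same approvals in both layers, I would reuse the same skeleton but push all layer-dependent preferences onto a single side: place those vertices whose incident approvals are naturally the same in both layers (typically the literal vertices and most auxiliary vertices) on the ``fixed'' side of the bipartition, and place the vertices carrying layer-dependent approvals on the flexible side. Under weak stability a blocking pair still requires mutual approval in some layer, so the combinatorial content of the argument is essentially unchanged, but each previously symmetric edge is realized by two directed approvals and one must verify that the fixed side's approvals truly coincide across both layers. The main obstacle will be simultaneously enforcing all three restrictions (bipartite, degree at most three, and either full symmetry or one side layer-invariant) within the same gadgetry; if one set of gadgets cannot cover both variants cleanly, I would present two parallel constructions sharing the same high-level variable/clause skeleton and proof of correctness.
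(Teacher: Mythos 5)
Your high-level skeleton matches the paper's: reduce from a bounded-occurrence variant of 3-SAT, use the characterization of weak stability as simultaneous maximality in the mutual-approval graphs of the two layers, build variable and clause gadgets, and pad to $\ell>2$ (your padding by duplicating an existing layer is fine and in fact handles the one-side-layer-invariant restriction more smoothly than the paper's empty-layer padding). However, the semantics you assign to the clause gadget contains a genuine flaw. You want the gadget's maximality in layer~2 to succeed exactly when one of the three literal vertices is ``still free to be matched inside the clause gadget,'' i.e., a true literal is one whose vertex is \emph{available to be consumed} by a clause gadget. Since there is a single matching $M$ shared by all layers, a literal vertex can be matched into at most one clause gadget; if the same literal is the unique satisfied literal of two clauses (which the bounded-occurrence restriction explicitly permits), the forward direction of your reduction breaks. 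Worse, an unmatched (``free'') literal vertex with a layer-2 edge into a clause gadget is itself unhappy, so any unhappy clause-gadget neighbour forms a blocking pair with it -- availability creates blocking pairs rather than resolving them. The working encoding is the opposite one: a true literal corresponds to the literal agent being \emph{happily matched inside its variable gadget} in the layer where the clause gadget probes it; the clause gadget is built so that it structurally leaves one of its own vertices unhappy in both layers, and that unhappy vertex is layer-2-adjacent to a literal agent, so stability forces that literal agent to be happy, i.e., true. A happy literal agent can defuse the probes of arbitrarily many clauses without being consumed, which is exactly what resolves the multi-occurrence issue.

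A second, smaller gap concerns the asymmetric variant. You propose to place all layer-dependent approvals on one side of the bipartition, but in the natural gadgetry both sides carry layer-dependent approvals (the variable side distinguishes its two partners between layers, and the clause side's probe edges exist only in layer~2). The missing ingredient is that under weak stability a pair blocks only if \emph{both} agents approve each other, so you may replace the approvals of every agent on the designated side by the union of its approvals over the two layers without creating any new blocking pair, since the other side's layer-dependent approvals still gate every potential block. With that observation the one-side-invariant claim follows from the same construction; without it your placement strategy does not go through.
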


\begin{proof}
  We first show the NP-hardness for~$\ell = 2$ and afterwards derive hardness for $\ell > 3$.
  
	First we show the first part of the theorem by reducing from the NP-hard variant of \textsc{Satisfiability} where each clause consists of exactly three literals and each variable occurs positively in at most two clauses and negatively in at most two clauses \cite{DBLP:journals/eccc/ECCC-TR03-049}.  
	
	Let $(X,C)$ be an instance of the above described variant of \textsc{Satisfiability} where $X$ is the set of variables and $C$ the set of clauses. 
	We construct an instance of \textsc{All-Layers Weak Stability} with two layers and symmetric approvals.
	For each variable $x\in X$ we introduce a variable gadget consisting of four agents: $a_{x}$, $a_{\bar{x}}$, $b^+_{x}$, $b^-_{x}$. 
	In the first layer, $a_x$ and $a_{\bar{x}}$ approve both $b^+_{x}$ and~$b^-_{x}$. 
	In the second layer, $a_x$ and $a_{\bar{x}}$ approve~$b^+_{x}$. 
	Note that matching~$a_x$ to $b^+_{x}$ will correspond to setting $x$ to true, while matching $a_{\bar{x}}$ to~$b^+_{x}$ will correspond to setting $x$ to false. 
	For each clause~$c=z^1\vee z^2 \vee z^3\in C$, we introduce a clause gadget consisting of five agents~$\alpha_c^1$,~$\alpha_c^2$,~$\beta_c^1$,~$\beta_c^2$, $\beta_c^3$. 
	In both layers, agent $\alpha_c^1$ approves agents $\beta_c^1$ and $\beta_c^2$ and agent $\alpha_c^2$ approves agents $\beta_c^2$ and $\beta_c^3$. 
	Moreover, for $i\in [3]$, in the second layer, $\beta_c^i$ approves~$a_{z^i}$. 
	We complete all approvals such that they are symmetric.
	See \Cref{fig:all-layers-weak} for an illustration.

	All agents approve at most two agents in the first layer and at most three agents in the second layer (as each literal appears in at most two clauses).
	Moreover, approvals are clearly bipartite with agents~$a_x$, $a_{\bar x}$ for $x \in X$ and $\alpha_c^1$, $\alpha_c^2$ for~$c \in C$ on the one and $b_x^+$, $b_x^-$ for $ x\in X$ and $\beta_c^1$, $\beta_c^2$, and $\beta_c^3$ for~$c \in C$ on the other side.
	
		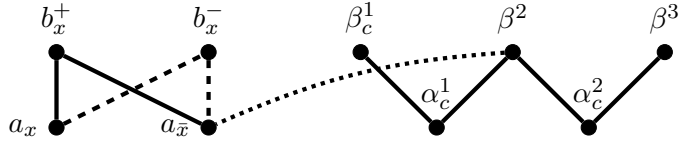
\begin{figure}[t]
		\begin{center}
			\begin{tikzpicture}
			\node[vertex, label=180:$a_x$] (xtrue) at (0,0) {};
			\node[vertex, label=180:$a_{\bar x}$] (xfalse) at ($(xtrue) + (2,0)$) {};
			
			\node[vertex, label=90:$b^+_x$] (xf) at ($(xtrue) + (0, +1)$) {};
			\node[vertex, label=90:$b^-_x$] (xd) at ($(xf) + (2,0)$) {};
			
			\draw (xtrue) edge[ultra thick, dashed] (xd);
			\draw (xfalse) edge[ultra thick] (xf);
			
			\draw (xf) edge[ultra thick] (xtrue);
			\draw (xd) edge[ultra thick, dashed] (xfalse);
			
			\node[vertex, label=90:$\beta_c^1$] (c1) at ($(xf) + (4, 0)$) {};
			\node[vertex, label=90:$\beta^2$] (c2) at ($(c1) + (2,0)$) {};
			\node[vertex, label=90:$\beta^3$] (c3) at ($(c2) + (2,0)$) {};
			
			\node[vertex, label=90:$\alpha_c^1$] (c1h) at ($(c1) + (1, -1)$) {};
			\node[vertex, label=90:$\alpha_c^2$] (c2h) at ($(c1h) + (2,0)$) {};
			
			\draw (c2) edge[ultra thick] (c1h);
			
			\draw (c1h) edge[ultra thick] (c1);
			\draw (c2h) edge[ultra thick] (c2);
			
			\draw (c3) edge[ultra thick] (c2h);
			
			\draw (c2) edge[ultra thick, dotted, bend right=10] (xfalse);
			\end{tikzpicture}
			
		\end{center}
		\caption{Visualization of construction from \Cref{th:all-layer-weak}. 
		We depict the agents introduced for a variable~$x$ and a clause~$c$, where~$\bar x$ is the second literal in~$c$.
        Approvals appearing only in layer~1/2 are dashed/dotted. 
        Approvals appearing in both layers are solid.
		}
		\label{fig:all-layers-weak}
	\end{figure}
	
	{\bfseries ($\Rightarrow$)}
	Let $X'$ be the set of variables that are set to true in a satisfying assignment of $(X,C)$.
	We construct a matching~$M$ in our instance as follows.
	For each~$x\in X'$, we match~$a_x$ with~$b^+_x$ and~$a_{\bar{x}}$~with~$b^-_x$; for each $x\notin X'$, we match~$a_{\bar{x}}$ with $b^+_x$ and $a_{x}$ with $b^-_x$. 
	Moreover, for clause~$c\in C$, let $z_c^i$ for~${i\in [3]}$ be a literal from $c$ that is fulfilled (such a literal exists as we consider a satisfying assignment). 
	We construct a matching inside the clause gadget such that $\beta_c^i$ is unmatched while~$\beta_c^j$~for~$j\in [3]\setminus \{i\}$ are happy in both layers and thus cannot be part of a blocking pair.
	If $i=1$, then we match $\alpha_c^1$ to $\beta_c^2$ and $\alpha_c^2$ to $\beta_c^3$;
	if $i=2$, then we match~$\alpha_c^1$ to $\beta_c^1$ and $\alpha_c^2$ to $\beta_c^3$;
	if $i=3$, then we match $\alpha_c^1$ to $\beta_c^1$ and $\alpha_c^2$ to~$\beta_c^2$.
	As also $\alpha_c^1$ and $\alpha_c^2$ are happy in both layers, agent~$\beta_c^i$ can only form a blocking pair with~$a_{z_c^i}$. 
	Since we started with a satisfying assignment and by the choice of~$i$, $a_{z_c^i}$ is happy in both layers and thus cannot be part of a blocking pair (as it is matched to $b^+_x$).
	Thus, there cannot be a blocking pair involving an agent from a clause gadget and there is also clearly no blocking pair inside a single variable gadget.
	
	{\bfseries ($\Leftarrow$)}
	Assume we are given a matching~$M$ that is weakly stable in both layers. 
	First we show that for each variable~$x\in X$, agent~$a_x$ and $a_{\bar{x}}$ need to be matched to $b^+_x$ or $b^-_x$:
	Assume for the sake of contradiction that without loss of generality $a_x$ is matched to neither~$b^+_x$ nor $b^-_x$. 
	As~$a_x$ only approves~$b^+_x$ and~$b^-_x$ in the first layer, $a_x$ is unhappy in the first layer.
	Moreover, note that~$b^+_x$ and~$b^-_x$ only approve $a_x$ and $a_{\bar{x}}$ in the first layer, implying that one of these two agents is unhappy in the first layer in this case and thus forms a blocking pair together with~$a_x$ in the first layer, contradicting the stability of~$M$. 
	
	Thus, all agents from a variable gadget are matched inside the gadget.
	By analogous arguments, we have that agents~$\alpha_c^1$ and $\alpha_c^2$ are matched to $\beta_c^1$, $\beta_c^2$, or $\beta_c^3$.
	Hence, for each clause~$c\in C$, there is some~$i\in [3]$ such that $\beta_c^{i}$ is unhappy in both layers, as taking into account only agents from clause gadgets, $\beta_c^1$, $\beta_c^2$, $\beta_c^3$ all only approve $\alpha_c^1$ and~$\alpha_c^2$ in some layer.
	Now, let $X'\subseteq X$ be the subset of variables~$x\in X$ where $a_x$ is matched to $b^+_x$. 
	We claim that setting all variables from $X'$ to true and all variables from $X\setminus X'$ to false results in a satisfying assignment.
	Assume for the sake of contradiction that there is a clause $c = z^1 \lor z^2 \lor z^3 \in C$ which is not satisfied.
	Let~$i\in [3]$ such that $\beta_c^{i}$ is unhappy in both layers, as taking into account only agents from clause gadget
	This implies that $z^{i}$ is not satisfied which by the construction of $X'$ implies that $a_{z^{i}}$ is matched to $b^-_{x}$.
	However, from this it follows that both $a_{z^{i}}$ and $\beta_c^{i}$ are unhappy in the second layer and thus form a blocking pair, contradicting the stability of~$M$. 
	
	\medskip
	For the second part of the statement note that a pair~$\{a, b\}$ can only be blocking if both~$a$ and~$b$ approve each other and thus, the construction also goes trough if all $b$-agents that currently approve an $a$-agent only in one layer also approve it in the other layer.
	
	\medskip
	To show hardness for~$\ell >3$, we add $\ell -3$ empty layers (i.e., layers in which no agent approves any other agent) in the case of symmetric preferences.
	In the case of asymmetric preferences where all agents from one side, say~$B$, have the same preferences, we add~$\ell -3$ layers in which only agents from~$B$ approve other agents.
	As every matching is stable in the additional~$\ell -3$~layers, the NP-hardness follows.	
\end{proof}

Since a matching is weakly stable in the case of symmetric approvals in some layer $i\in [\ell]$ if and only if it is a maximal matching in~${G_i=(A,\{ \{a,a'\}\mid a\in T_{a'}^i, a'\in T_a^i\})}$, this also shows that finding a matching that is maximal in both layers of a two-layered graph is NP-hard.
This result might be of independent interest.

\paragraph{Global Stability.}\label{sub:weak-global}
The NP-hardness of \textsc{Global Weak Stability} for $\alpha=\ell \ge 2$ already follows from \Cref{th:all-layer-weak}.
We now analyze the problem's complexity for other values of $\alpha$: 
We have observed that \textsc{Global Weak Stability} is linear-time solvable for $\alpha=1$. 
However, by reducing from \textsc{All-Layers Weak Stability} and adding layers where all matchings are stable or a set of ``conflicting'' layers where a matching can be only stable in one of them, we show~NP-hardness for all other values of $\alpha$: 
\begin{restatable}{proposition}{globalweak}
	\label{pr:global-weak}
	For any $2\leq \alpha\leq \ell$, \textsc{Global Weak Stability} is NP-hard for symmetric bipartite approvals. 
\end{restatable}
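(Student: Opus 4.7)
My plan is to reduce from \textsc{All-Layers Weak Stability} on $\ell'=2$ layers, shown NP-hard in \Cref{th:all-layer-weak}. Given such a 2-layer instance on agent set $A$ with bipartition $A = A_1 \cup A_2$, I will build an $\ell$-layer instance over the enlarged agent set $A' := A \cup \{p, r\} \cup \{q_1, \ldots, q_{\ell - \alpha}\}$, placing $p$ in $A_1$ and $r, q_1, \ldots, q_{\ell - \alpha}$ in $A_2$. Layers~$1$ and~$2$ will carry the approvals of the two original layers on $A$, plus one additional symmetric approval between $p$ and $r$. Layers~$3$ through $\alpha$ will contain no approvals at all (so every matching is trivially weakly stable in them). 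For each $j \in [\ell - \alpha]$, layer $\alpha + j$ will contain only the symmetric approval between $p$ and $q_j$. All new approvals are symmetric and respect the bipartition.

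For the forward direction, I would show: given an all-layers weakly stable matching $M$ of the 2-layer instance, $M \cup \{\{p, r\}\}$ is stable in layers~$1$ and~$2$ (the original stability on $A$ carries over, and $\{p, r\}$ is matched so cannot block) and in each of the $\alpha - 2$ trivial layers, giving $\alpha$ stable layers and hence $\alpha$-global stability.

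For the reverse direction, I will argue: if $M'$ is $\alpha$-globally weakly stable, then since $p$ is in at most one pair and each layer $\alpha + j$ has $\{p, q_j\}$ as a blocking pair whenever it is not matched, at most one conflicting layer is stable. Likewise, $\{p, r\}$ is a blocking pair in each of layers~$1$ and~$2$ unless $\{p, r\} \in M'$. If $\{p, r\} \notin M'$, then both original layers are unstable and at most one conflicting layer is stable, for a total of at most $(\alpha - 2) + 1 = \alpha - 1 < \alpha$ stable layers, a contradiction. Hence $\{p, r\} \in M'$; this forces every conflicting layer to be unstable, so the $\alpha$ stable layers must be exactly layers~$1$ through $\alpha$. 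In particular, both layers~$1$ and~$2$ are stable, which directly implies that $M' \cap \binom{A}{2}$ is weakly stable in both original layers, i.e., is an all-layers stable matching for the 2-layer instance.

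The main subtlety, and the hardest case to get right, will be $\alpha = 2$ with $\ell > 2$: here no trivial layers are available to absorb slack, so without the $\{p, r\}$ gadget the naive conflicting-layer construction would admit the spurious scenario ``one original layer plus one conflicting layer stable''. Embedding $\{p, r\}$ into both original layers and making it conflict with every $\{p, q_j\}$ eliminates this case uniformly. The boundary cases are consistent: when $\alpha = \ell$ there are no conflicting layers and the construction reduces to \Cref{th:all-layer-weak} on $\alpha$ layers (with trivial padding), and when $\alpha = \ell = 2$ there are neither trivial nor conflicting layers, and the reduction is just the base 2-layer instance augmented by the $\{p, r\}$ pair.
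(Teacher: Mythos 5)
Your reduction is correct and is essentially the paper's own construction: the paper likewise reduces from two-layer \textsc{All-Layers Weak Stability}, adds a special pair (its $a^*,b^*$ playing the role of your $p,r$) approving each other in the two original layers, adds $\ell-\alpha$ ``conflicting'' layers in which $a^*$ approves a fresh agent $c_i$ (your $q_j$), and pads with $\alpha-2$ empty layers, with the same counting argument in the backward direction. The only differences are cosmetic (naming and the ordering of the padding versus conflicting layers).
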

\begin{proof}
	We reduce from \textsc{All-Layers Weak Stability} with two layers, which is NP-hard as proven in \Cref{th:all-layer-weak}.
	Given an instance $(A,(\T_a^i)_{a\in A,i\in [2]})$, we construct an instance of \textsc{Global Weak Stability}  with arbitrary~${2\leq \alpha\leq \ell}$.
	The set of agents is~${A\cup \{a^*,b^*\}\cup \{c_i \mid i\in [3,\ell-\alpha+2]\}}$. 
	Note that the set $\{c_i \mid i\in [3,\ell-\alpha+2]\}$ might be empty.
	In the first two layers agents from $A$ have their preferences as described in $(\T_a^i)_{a\in A,i\in [2]}$ while they do not approve any agent in layers three to $\ell$. 
	Agent~$a^*$ approves $b^*$ in the first two layers and agent $c_i$ in layer $i\in [3,\ell-\alpha+2]$ while it does not approve any agent in layers $\ell-\alpha+3$ to $\ell$. 
	We complete approvals to be symmetric. 
	
	{\bfseries ($\Rightarrow$)} 
	Let $M$ be an all-layers stable matching in~$(A,(\T_a^i)_{a\in A,i\in [2]})$.
	Then, we extend $M$ by matching~$a^*$ to $b^*$. 
	Note that the resulting matching $M$ is still stable in the first and second layer. 
	Moreover, as there are no approvals in layers $\ell-\alpha+3$ to $\ell$ matching $M$ is also stable in these layers. 
	Thus, $M$ is stable in $\alpha$ layers. 
	
	{\bfseries ($\Leftarrow$)}
	Assume that there is a matching $M$ that is stable in a subset $S\subseteq [\ell]$ of $\alpha$ layers in the constructed instance. 
	For the sake of contradiction assume that there is some ${i\in [3,\ell-\alpha+2]}$ with $i\in S$. 
	As $a^*$ and $c_i$ approve each other and no other agent in layer $i$, matching~$M$ needs to contain~$\{a^*,c_i\}$. 
	However, from this it follows that $M$ is unstable in the first and second layer, as $b^*$ and $a^*$ approve each other and no other agent in these layers. 
	Moreover, $M$ is unstable in layer $j\in [3,\ell-\alpha+2]\setminus \{i\}$, as $a^*$ and $c_j$ form a blocking pair. 
	This implies that $M$ is stable in at most~$\alpha-1$~layers, a contradiction. 
	Thus, $M$ can only be stable in layers~$[2]\cup [\ell-\alpha+3,\ell]$. 
	As these are $\alpha$ layers, it follows that $M$ is in particular stable in the first two layers. 
	Hence, $M$ restricted to the agents from $A$ is a stable matching in both layers in $(A,(\T_a^i)_{a\in A,i\in [2]})$. 
\end{proof}

\paragraph{Pair and Individual Stability.} \label{sub:weak-pair} For $\alpha=\ell \ge 2$, \Cref{th:all-layer-weak} shows NP-hardness of \textsc{Pair Weak Stability} and the same construction also proves NP-hardness of \textsc{Individual Weak Stability}.
However, trying to extend this result to any~$\alpha$ and $\ell$, the idea of simply adding ``conflicting'' layers here fails, because the ``stable'' layers can be different for each agent/pair. 
Nevertheless, reducing from \textsc{Pair/Individual Weak Stability} for $\alpha=\ell=2$ and having~$\lceil \ell / 2 \rceil$ copies of the first and~$\lfloor \ell / 2 \rfloor$ copies of the second layer, NP-hardness for any~$\ell\ge 2$ and~$\alpha>\lceil \ell / 2 \rceil$ follows.
In contrast to this, for $\alpha \le \lceil \ell / 2 \rceil$, an $\alpha$-individually/pair weakly stable matching always exists: 
\begin{restatable}{theorem}{individualweak}
	\label{th:weak-individual} \label{th:pair-weakP} \label{weak:pair-NP}
	For any~$\ell \ge 2$ and $\alpha \le \lceil \ell / 2 \rceil$, an $\alpha$-pair/individually weakly stable matching always exists and can be found in linear time.
	For any~$\ell\ge 2$ and $\alpha>\lceil \ell / 2 \rceil$,  \textsc{Pair/Individual Weak Stability} are NP-hard for symmetric bipartite approvals even if each agent approves at most three agents in each layer.  
\end{restatable}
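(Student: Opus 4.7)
The statement has two parts: an algorithmic existence proof for $\alpha \le \lceil \ell/2 \rceil$ and an NP-hardness proof for $\alpha > \lceil \ell/2 \rceil$. My plan is to construct, in the tractable regime, an explicit matching via an auxiliary undirected graph on~$A$, and to obtain hardness by a layer-duplication reduction from the $\ell=\alpha=2$ base case of \Cref{th:all-layer-weak}.

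For the positive part, define $H=(A,E_H)$ with $\{a,a'\}\in E_H$ iff each of $a$ and $a'$ approves the other in at least $\ell-\alpha+1$ layers, and let $M$ be any maximal matching in $H$; both can be computed in time linear in the input. Since every $\alpha$-individually weakly stable matching is also $\alpha$-pair weakly stable, it suffices to show the former. Consider an unmatched pair $\{a,a'\}\notin M$. If $\{a,a'\}\notin E_H$, then without loss of generality $a$ approves $a'$ in at most $\ell-\alpha$ layers, so $a$ fails to approve $a'$ in at least $\alpha$ layers and hence favors the matching in at least $\alpha$ layers. If instead $\{a,a'\}\in E_H$, maximality of $M$ forces one of $a,a'$, say $a$, to be matched to some $b$ with $\{a,b\}\in E_H$; thus $a$ is happy in at least $\ell-\alpha+1$ layers. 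The assumption $\alpha\le \lceil \ell/2\rceil$ gives $\ell-\alpha+1\ge \alpha$, so $a$ again favors in at least $\alpha$ layers.

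For the hardness part, I would first verify that both \textsc{Individual} and \textsc{Pair Weak Stability} are NP-hard already for $\ell=\alpha=2$. The $2$-pair case coincides with all-layers stability when $\ell=2$ and is thus handled directly by \Cref{th:all-layer-weak}. For the $2$-individual case I would reuse the very same construction: the matching built in the forward direction is $2$-individually stable because in every potentially blocking unmatched pair at least one side is happy in both layers by design; and in the reverse direction, any variable-gadget agent $a_x$ matched outside its gadget would form, together with an unhappy $b_x^\pm$ (which only approves $a_x,a_{\bar{x}}$), a pair in which both agents are unhappy and approve each other in both layers, so neither favors anywhere, contradicting $2$-individual stability. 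Hence the rigidity argument of \Cref{th:all-layer-weak} goes through and recovers a satisfying assignment. Next, for arbitrary $\ell\ge 2$ with $\alpha>\lceil \ell/2\rceil$, I would create the $\ell$-layer instance by duplicating layer~$1$ exactly $\lceil \ell/2\rceil$ times and layer~$2$ exactly $\lfloor \ell/2\rfloor$ times. Since each agent and each pair behaves identically across copies of a fixed original layer, the count of non-blocking/favoring layers lies in $\{0,\lfloor \ell/2\rfloor,\lceil \ell/2\rceil,\ell\}$; meeting the threshold $\alpha>\lceil \ell/2\rceil$ therefore forces the count to be $\ell$, that is, stability in both original layers. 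All restrictions -- symmetry, bipartiteness, and at most three approvals per agent per layer -- are preserved.

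The main obstacle is the verification that the construction of \Cref{th:all-layer-weak}, originally tailored to all-layers stability, also certifies hardness of $2$-individual stability, since all-layers stability is in general strictly stronger than $\ell$-individual stability. Once this base case is in place, both the layer-duplication reduction and the positive-side construction above are essentially routine.
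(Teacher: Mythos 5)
Your proposal is correct and follows essentially the same route as the paper: the same auxiliary graph (edge iff mutual approval in at least $\ell-\alpha+1$ layers) plus a maximal matching for the positive regime, and the same layer-duplication reduction from the two-layer all-layers hardness of \Cref{th:all-layer-weak}, including the observation that that construction already certifies $2$-individual hardness. The only nitpick is your parenthetical claim that $a_x$ and $b_x^-$ "approve each other in both layers" (they do so only in layer one, though the pair still witnesses a violation of $2$-individual stability); the paper sidesteps this entirely by noting that for $\ell=2$ a $2$-individually stable matching is automatically all-layers stable, so the backward direction is immediate.
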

\begin{proof}
    \noindent$\bm{\alpha \le \lceil \ell / 2 \rceil.}$
	We construct a graph~$G$ with vertex set~$A$ and an edge~$\{a, a'\}\in \binom{A}{2}$ if and only if there are at least~$\ell - \alpha + 1 $ layers in which~$a$~approves $a'$ and there are at least~$\ell - \alpha + 1$ layers in which $a'$ approves $a$.
	Let $M$ be a maximal matching in~$G$.
	We claim that $M$ is $\alpha$-individually weakly stable.
	Consider a pair~$\{a, a'\}$.
	If~$\{a, a' \} \notin E(G)$, then $a$ does not approve~$a'$ in at least~$\alpha$ layers or~$a'$ does not approve~$a$ in at least~$\alpha$ layers.
	It follows that this pair is not blocking.
	Otherwise it follows by the maximality of~$M$ that $a$ or $a'$ is matched by~$M$;
	we assume without loss of generality that~$a$ is matched by~$M$.
	Then, $a$ is happy in~$M$ in at least~$\ell - \alpha + 1 \ge \alpha$ layers, implying that~$a$ does not prefer~$a'$ to~$M(a)$ in at least $\alpha$ layers.
	Thus, $M$ is $\alpha$-individually weakly stable and thereby also $\alpha$-pair weakly stable. \medskip
	
	\noindent$\bm{\alpha > \lceil \ell / 2 \rceil.}$ We start by proving hardness for \textsc{Pair Weak Stability}: 
	We reduce from \textsc{All-Layers Weak Stability} for two layers which is NP-hard for symmetric and bipartite approvals where each agent approves at most three agents in each layer as proven in \Cref{th:all-layer-weak}.
	Given an instance~$\mathcal{I}'=(A,(\T_a^i)_{a\in A,i\in [2]})$ of \textsc{All-Layers Weak Stability}, we construct an equivalent instance $\mathcal{I}$ of \textsc{Pair Weak Stability} for some $\ell\ge 2$ and $\alpha>\lceil \ell / 2 \rceil$ by replacing the first layer by $\lceil \ell / 2 \rceil$ many copies and the second layer by $\lfloor \ell / 2 \rfloor$ copies. 
	
	{\bfseries ($\Rightarrow$)} 
	Assuming that there is an all-layers weakly stable matching in $\mathcal{I}'$, then this matching is $\ell$-pair and thus $\alpha$-pair weakly stable in $\mathcal{I}$. 
	
	{\bfseries ($\Leftarrow$)}
	Assume that there is a $\alpha$-pair weak stable matching~$M$ in $\mathcal{I}$.
	Then, in $M$ there is no pair that is weakly blocking in~${\ell-\alpha+1}$~layers. 
	As $\alpha>\lceil \ell / 2 \rceil$, it follows that there is no pair that is blocking in  $\lfloor \ell / 2 \rfloor$ layers. 
	However, as there are only two different layers with each of them occurring at least~$\lfloor \ell / 2 \rfloor$~times, a pair that blocks a single layer automatically blocks at least~$\lfloor \ell / 2 \rfloor$~layers. 
	Thus, there is no blocking pair for $M$ in any layer implying that $M$ is all-layers weakly stable and thus a solution to $\mathcal{I}'$.
	
	We now turn to \textsc{Individual Weak Stability}: 
	Note that our construction from \Cref{th:all-layer-weak} also proves hardness for finding a $2$-individually weakly stable matching in two layers: 
	As an~$\ell$-individually stable matching is also all-layers stable, it remains to check whether the matching~$M$ constructed in the forward direction of the proof is $2$-individually weakly stable: 
	For each unmatched pair of agents that approve each other in both layers, one agent is happy in both layers, while for each unmatched pair that approve each other in one layer, as the pair is not weakly blocking in this layer, one agent needs to be happy in this layer. From this it follows that $M$ is $2$-individually weakly stable. 
	To extend the hardness result for any~$\ell\ge 2$ and~$\alpha>\lceil \ell / 2 \rceil$ it is now possible to apply the same reduction and arguments as the ones for pair stability from above.
\end{proof}
\Cref{th:weak-individual} indicates that individual and pair stability lead to similar complexity results for weak stability, and that a low degree
of stability  leads to tractability for these two notions.
Interestingly, the latter is in stark contrast to the results of Chen et al. \cite{DBLP:conf/sigecom/ChenNS18}, who showed that the problem of finding pair or individually stable matchings is polynomial-time solvable for~$\alpha > \lfloor \ell / 2 \rfloor$ and NP-hard for~${\alpha \leq \lfloor \ell / 2 \rfloor}$ for strict preferences under certain constraints (note that we have similar results for individual/pair super stability, see \Cref{th:super-pair-neg,th:super-pair-pos}).

\section{Strong Stability}  \label{se:strong}
In the single-layer setting, a strongly stable matching may fail to exist (consider as an example three agents all approving each other); however, deciding the existence of a strongly stable matching is polynomial-time solvable \cite{DBLP:conf/esa/Kunysz16}.
\paragraph{All-Layers Stability.} \label{sub:strong-alllayer}
In contrast to weak stability, for symmetric approvals \textsc{All-Layers Strong Stability} is polynomial-time solvable. 
\begin{proposition} \label{th:strong-alllayer}
	\textsc{All-Layers Strong Stability} for symmetric approvals can be solved in~$O(\ell n^2 + n^{2.5})$~time.
\end{proposition}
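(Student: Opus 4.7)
The plan is to leverage the single-layer characterization of strong stability for symmetric approvals recalled in \Cref{se:prelims}: a matching $M$ is strongly stable in layer $i$ if and only if every agent with a neighbor in the undirected approval graph $G_i$ is matched to a neighbor in $G_i$. Applying this to every layer simultaneously reduces all-layers strong stability to checking a structural condition on each matched pair, which I will encode in a single auxiliary graph.

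Call an agent $a$ \emph{active} if $T_a^i \neq \emptyset$ for at least one layer $i \in [\ell]$, and \emph{inactive} otherwise. Construct a compatibility graph $H = (A, E_H)$ by placing an edge $\{a,b\}$ whenever $\{a,b\} \in E_i$ in every layer $i$ for which either $a$ or $b$ is active in that layer (i.e.\ $T_a^i \cup T_b^i \neq \emptyset$); by symmetry, the conditions ``$b \in T_a^i$ for all $i$ with $T_a^i \neq \emptyset$'' and ``$a \in T_b^i$ for all $i$ with $T_b^i \neq \emptyset$'' together capture this. I would then prove the following equivalence: a matching $M$ is all-layers strongly stable if and only if $M \subseteq E_H$ and every active agent is saturated by $M$. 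The forward direction is immediate from the definition of $E_H$; the backward direction uses the single-layer characterization layer by layer to see that any active agent is forced to be matched, and any matched pair $\{a,M(a)\}$ must lie in $E_H$. A useful side observation is that symmetry implies an inactive agent is approved by nobody in any layer, so no edge of $H$ can join an active and an inactive vertex; hence the question collapses to finding a perfect matching on the active vertices of $H$.

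The algorithm is therefore: (i) build $H$ and determine the set of active agents in $O(\ell n^2)$ time by scanning every ordered pair of agents across all $\ell$ layers; (ii) compute a maximum matching of $H$ in $O(n^{2.5})$ time via the Micali--Vazirani algorithm for general graphs; and (iii) return the matching if all active agents are saturated, otherwise report that no solution exists. Summing the two phases gives the claimed $O(\ell n^2 + n^{2.5})$ bound. The main obstacle is the bookkeeping in the equivalence proof — in particular, arguing carefully that inactive agents contribute no constraints and that active agents cannot be matched to inactive ones in $H$, so that the reduction to a single general-graph maximum matching really is sound; once this is in place, the complexity analysis is a direct invocation of a standard matching routine.
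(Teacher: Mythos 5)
Your proposal is correct and follows essentially the same route as the paper: the compatibility graph $H$ you build (an edge $\{a,b\}$ present iff in every layer either both approve each other or neither has any approvals) is exactly the paper's graph $H=(A,\bigcap_i E_i')$, and both arguments reduce the problem to one maximum-matching computation in a general graph. The only difference is bookkeeping — you seek a perfect matching on the active vertices directly, whereas the paper seeks a perfect matching on all of $A$ and treats odd $n$ by deleting an agent with no approvals — and your variant is, if anything, slightly cleaner.
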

\begin{proof}
	Recall that for some $i\in [\ell]$, ${G_i=(A,E_i=\{ \{a,a'\}\mid a\in T_{a'}^i, a'\in T_a^i\})}$.
	First assume that~$n$~is even.
	We claim that there is an all-layers strongly stable matching if and only if there is a perfect matching in the undirected graph~${H = (A, \bigcap_{i \in [\ell]} E_i')}$, where 
	$$E_i' := E_i \cup \{ \{ a, a' \} \mid a \text{ and } a' \text{ do not approve any agent in layer }i \}.$$
	Note that an edge~$e = \{ a, a' \}$ is present in $H$ if for every~$i \in [\ell]$, either $e \in E_i$ or both $a$ and $a'$ have no neighbor~in~$G_i$.
	 
	Let~$M$ be a perfect matching in~$H$.
	Assume towards a contradiction that there is a blocking pair~$\{a, a'\}$ in some layer~$i$.
	Then $a$ and $a'$ approve each other in layer~$i$.
	Consequently, both~$a$ and~$a'$ have a neighbor in $G_i$ and thus approve all their neighbors in $(A,E'_i)$.
	As $M$ is a perfect matching in $(A,E'_i)$, it follows that $a$ approves~$M(a)$ and $a'$ approves $M(a')$ in layer~$i$, a contradiction to $\{a, a'\}$ being blocking.
	
	Now assume that $H$ does not admit a perfect matching.
	Let~$M$ be any matching.
	Without loss of generality, we assume that $M$ matches all agents (note that such a matching $M$ is not a perfect matching in $H$ because $M$ might include edges/pairs that are not present in $H$).
	We can assume this, as a matching can never become unstable by adding further agent pairs as agents are indifferent between being unmatched and being matched to agents they do not approve. 
	Since~$M$ is not a perfect matching in $H$, matching~$M$ contains an edge~$\{a, a'\} \notin E_i'$ for some some~$i \in [\ell]$.
	Consequently, $a$ and $a'$ do not approve each other in layer~$i$, and one of~$a$ and $a'$ (without loss of generality~$a$) approves some other agent~$b$ in layer~$i$.
	As approvals are symmetric, it follows that~$\{a, b\}$ blocks~$M$ in layer~$i$, implying that~$M$ is not all-layers strongly stable.
	
	If $n$ is odd, each matching $M$ leaves at least one agent~${a\in A}$ unmatched. 
	If~$a$ approves some agent~$a'$ in layer~$i\in [\ell]$, then $\{a, a'\}$ blocks $M$ in layer $i$.
	Thus, in an instance with an odd number of agents and a stable matching, there must be an agent $a$ which, in each layer, does not approve any agent. 
	If the instance does not contain such an agent, then we return that there is no stable matching.
	Otherwise we pick an agent~$a$ approving no other agent and check
	whether the instance arising from the deletion of~$a$ (which has an even number of agents) has an all-layers strongly stable matching. 
\end{proof}

For asymmetric approvals, \textsc{All-Layers Strong Stability} becomes NP-hard. 
This is one of only few cases where asymmetric and symmetric approvals computationally differ.
It is open whether \textsc{All-Layers Strong Stability} for asymmetric (bipartite) approvals is NP-hard for two layers.

\begin{restatable}{theorem}{alllayerstrong}
	\label{thm:all-layer-strong}
	\textsc{All-Layers Strong Stability} for bipartite asymmetric approvals is NP-hard for $\ell \geq 3$. 
\end{restatable}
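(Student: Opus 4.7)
The plan is to reduce from an NP-hard ``three-dimensional'' source problem, for instance a bounded-occurrence variant of 3-SAT or 3-Dimensional Matching. It suffices to handle the threshold case $\ell=3$: extending to any $\ell>3$ is routine, since appending $\ell-3$ empty layers (where no agent approves anybody) does not affect strong stability in any layer and yields an equivalent instance.

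The technical starting point is to characterize strong stability for asymmetric approvals. A direct unfolding of the definitions shows that a pair $\{a,a'\}\notin M$ blocks in layer~$i$ iff either (i)~$a$ and~$a'$ mutually approve each other in layer~$i$ and at least one of them is unhappy, or (ii)~exactly one of $a$ and $a'$ approves the other in layer~$i$ and \emph{both} are unhappy. In particular, mutual approvals act as ``hard'' constraints (both endpoints must be happy whenever they are not matched to each other) while one-sided approvals act as ``soft'' constraints (at least one endpoint must be happy). This discrepancy is absent in the symmetric setting of \Cref{th:strong-alllayer}, and it is the mechanism that should enable the hardness encoding.

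Concretely, I would build two families of gadgets. For each variable~$x$ of the source 3-SAT instance, a variable gadget consisting of a bounded number of agents on each side of the bipartition, whose approvals across the three layers are engineered so that any all-layers strongly stable matching restricts the gadget to exactly two canonical matching patterns, encoding $x=\mathrm{true}$ and $x=\mathrm{false}$. For each clause $c=\ell_1\vee\ell_2\vee\ell_3$, a clause gadget whose three approval ``channels'' are aligned with the three layers, one per literal. Connecting approvals between each literal~$\ell_i$ and the clause gadget of~$c$ would be typically one-sided (exploiting the softer constraint), so that if \emph{none} of $\ell_1,\ell_2,\ell_3$ is satisfied, some layer is forced to contain a blocking pair between the clause gadget and the corresponding variable gadgets, and otherwise a stable completion exists. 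The forward direction is then routine: given a satisfying assignment, match every variable gadget canonically, and for every clause select a satisfied literal and complete the clause gadget accordingly; strong stability is verified layer-by-layer.

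The main obstacle will be the backward direction. There I need to rule out any ``spurious'' all-layers strongly stable matching that does not correspond to a truth assignment: I have to show (a)~each variable gadget is matched canonically in any such~$M$, and (b)~the strong stability of~$M$ across the three layers in each clause gadget forces at least one literal to be satisfied. The delicacy comes from the softer behavior of one-sided approvals: I must choose the direction of each approval (and which edges are made mutual) so carefully that no mixed matching pattern — for instance, matching a gadget agent to an agent belonging to another gadget, or leaving a ``wrong'' agent unmatched — can simultaneously escape all blocking pairs in all three layers. Once the gadgets are shown to be robust in this sense, the truth assignment is extracted by a direct case analysis, completing the reduction.
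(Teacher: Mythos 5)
Your high-level architecture matches the paper's proof: a reduction from a 3-SAT variant, three ``real'' layers plus $\ell-3$ empty layers, variable gadgets with two canonical matchings, clause gadgets with three layer-aligned channels, and one-sided approvals connecting literals to clauses. Your characterization of blocking pairs (mutual approval plus one unhappy endpoint, versus one-sided approval plus two unhappy endpoints) is correct and is indeed the mechanism the paper exploits. However, the proposal stops exactly where the proof begins. The entire content of the argument is the concrete design of the gadgets, and you explicitly defer it: you say the approvals ``would be engineered so that'' the variable gadget has two canonical states and that you ``must choose the direction of each approval\dots so carefully that no mixed matching pattern\dots can simultaneously escape all blocking pairs,'' which is precisely the claim that needs to be proved. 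For reference, the paper's variable gadget is a directed $4$-cycle $x_0\to x_1\to x_2\to x_3\to x_0$ repeated in all three layers (forcing either $\{x_0,x_1\},\{x_2,x_3\}$ or $\{x_1,x_2\},\{x_3,x_0\}$), and the clause gadget consists of agents $c_0,c_1,c_2,\hat c_0,\hat c_1,\hat c_2$ where in layer $k$ each $c_j$ approves $\hat c_{j+k}$ and each $\hat c_j$ approves the two $c$-agents not approving it, so that no $c_i,\hat c_j$ pair is mutual in any layer and the only stable configurations are the three cyclic rotations $\{c_j,\hat c_{j+i}\}_{j}$; the chosen rotation $i$ encodes which literal satisfies the clause. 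None of this, nor any alternative gadget with a verified backward direction, appears in your writeup, so the proposal cannot be checked for correctness as it stands.

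A second concrete gap is bipartiteness. You assert that gadgets will have ``agents on each side of the bipartition,'' but with arbitrary 3-SAT a clause may contain both positive and negative literals, and the connecting approvals then go from $x_0$-type agents for some literals and $x_1$-type agents for others into the same clause gadget, which generically destroys bipartiteness of $G=(A,\cup_i E_i)$. The paper avoids this by reducing from \textsc{Monotone 3-Sat}, placing positive-clause gadgets and negative-clause gadgets on opposite sides. Your proposal would need either this restriction of the source problem or an explicit argument for why your (unspecified) gadgets remain bipartite.
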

\begin{proof}
	We reduce from \textsc{Monotone 3-Sat}, the restriction of \textsc{3-Sat} where each clause contains either only negated or only non-negated literals~\cite{DBLP:conf/stoc/Schaefer78}.
	We first construct the first three layers of the instance of \textsc{All-Layers Strong Stability}.
	Afterwards, we add~$\ell - 3$ empty layers (in which every matching is stable).
	Thus, a matching is weakly stable in all layers if and only if it is stable in the first three layers.
	
	For each variable~$x$, we add four agents~$x_0, x_1, x_2$, and $x_3$.
	In each of the first three layers, $x_i$ approves $x_{i+1}$ for $i\in \{0,1,2,3\}$ (where $i+1$ is taken modulo $4$).
	
	For each clause~$c$, we add six agents~$c_0$, $c_1$, $c_2$, $\hat c_0$, $\hat c_1$, and~$\hat c_2$.
	In layer one, for each $i\in \{0, 1, 2\}$, $c_i$ approves $\hat c_i$, while $\hat c_i$ approves $c_j$ for all~$j\in \{0,1,2\}\setminus \{i\}$.
	In layer two, for each $i\in \{0,1,2\}$, $c_i$ approves $\hat c_{i+1}$, while $\hat c_i$ approves $c_j$ for all~$j\in \{0,1,2\} \setminus \{i+1\}$ (where $i+1$ is taken modulo 3).
	In layer three, for each~$i\in \{0,1,2\}$, $c_i$ approves $\hat c_{i-1}$, while $\hat c_i$ approves~$c_j$ for all $j\in \{0,1,2\}\setminus \{i-1\}$ (where $i-1$ is taken modulo 3).
	See \Cref{fig:clause-gadget-all-layer-strong} for an example.
	Notably, there are no agents~$c_i$ and~$\hat c_j$ approving each other in the same layer.
	\begin{figure*}[t]
		\begin{center}
			\resizebox{!}{0.27\textwidth}{%
				\begin{tikzpicture}
				\node at (-1, 1) {Layer one:};
				\node[vertex, label=90:$c_0$] (c1) at (0,0) {};
				\node[vertex, label=90:$c_1$] (c2) at ($(c1) + (2,0)$) {};
				\node[vertex, label=90:$c_2$] (c3) at ($(c2) + (2,0)$) {};
				
				\node[vertex, label=270:$\hat c_0$] (c1h) at ($(c1) + (0, -2)$) {};
				\node[vertex, label=270:$\hat c_1$] (c2h) at ($(c1h) + (2,0)$) {};
				\node[vertex, label=270:$\hat{c}_2$] (c3h) at ($(c2h) + (2,0)$) {};
				
				\draw (c1) edge[approves, thick, postaction={decorate}] (c1h);
				\draw (c2) edge[approves, thick, postaction={decorate}] (c2h);
				\draw (c3) edge[approves, thick, postaction={decorate}] (c3h);
				
				\draw (c1h) edge[approves, thick, postaction={decorate}] (c2);
				\draw (c2h) edge[approves, thick, postaction={decorate}] (c3);
				\draw (c3h) edge[approves, thick, postaction={decorate}] (c1);
				
				\draw (c1h) edge[approves, thick, postaction={decorate}] (c3);
				\draw (c2h) edge[approves, thick, postaction={decorate}] (c1);
				\draw (c3h) edge[approves, thick, postaction={decorate}] (c2);
				\end{tikzpicture}}
			\qquad \qquad
			\resizebox{!}{0.27\textwidth}{%
				\begin{tikzpicture}
				\node at (-1, 1) {Layer two:};
				\node[vertex, label=90:$c_0$] (c1) at (0,0) {};
				\node[vertex, label=90:$c_1$] (c2) at ($(c1) + (2,0)$) {};
				\node[vertex, label=90:$c_2$] (c3) at ($(c2) + (2,0)$) {};
				
				\node[vertex, label=270:$\hat c_0$] (c1h) at ($(c1) + (0, -2)$) {};
				\node[vertex, label=270:$\hat c_1$] (c2h) at ($(c1h) + (2,0)$) {};
				\node[vertex, label=270:$\hat{c}_2$] (c3h) at ($(c2h) + (2,0)$) {};
				
				\draw (c1) edge[approves, thick, postaction={decorate}] (c2h);
				\draw (c2) edge[approves, thick, postaction={decorate}] (c3h);
				\draw (c3) edge[approves, thick, postaction={decorate}] (c1h);
				
				\draw (c1h) edge[approves, thick, postaction={decorate}] (c2);
				\draw (c2h) edge[approves, thick, postaction={decorate}] (c3);
				\draw (c3h) edge[approves, thick, postaction={decorate}] (c1);
				
				\draw (c1h) edge[approves, thick, postaction={decorate}] (c1);
				\draw (c2h) edge[approves, thick, postaction={decorate}] (c2);
				\draw (c3h) edge[approves, thick, postaction={decorate}] (c3);
				\end{tikzpicture}}
			\qquad \qquad
			\resizebox{!}{0.27\textwidth}{%
				\begin{tikzpicture}
				\node at (-1, 1) {Layer three:};
				\node[vertex, label=90:$c_0$] (c1) at (0,0) {};
				\node[vertex, label=90:$c_1$] (c2) at ($(c1) + (2,0)$) {};
				\node[vertex, label=90:$c_2$] (c3) at ($(c2) + (2,0)$) {};
				
				\node[vertex, label=270:$\hat c_0$] (c1h) at ($(c1) + (0, -2)$) {};
				\node[vertex, label=270:$\hat c_1$] (c2h) at ($(c1h) + (2,0)$) {};
				\node[vertex, label=270:$\hat{c}_2$] (c3h) at ($(c2h) + (2,0)$) {};
				
				\draw (c1) edge[approves, thick, postaction={decorate}] (c3h);
				\draw (c2) edge[approves, thick, postaction={decorate}] (c1h);
				\draw (c3) edge[approves, thick, postaction={decorate}] (c2h);
				
				\draw (c1h) edge[approves, thick, postaction={decorate}] (c3);
				\draw (c2h) edge[approves, thick, postaction={decorate}] (c1);
				\draw (c3h) edge[approves, thick, postaction={decorate}] (c2);
				
				\draw (c1h) edge[approves, thick, postaction={decorate}] (c1);
				\draw (c2h) edge[approves, thick, postaction={decorate}] (c2);
				\draw (c3h) edge[approves, thick, postaction={decorate}] (c3);
				\end{tikzpicture}}
			
		\end{center}
		\caption{Agents added for a clause~$c$ and their approvals in the three layers.
			An arrow from an agent~$a$ to an agent~$b$ indicates that $a$ approves~$b$.}
		\label{fig:clause-gadget-all-layer-strong}
	\end{figure*}
	
	Let $z$ be the $i$th literal of clause~$c$.
	If $z = x$, then $x_0$ approves $\hat c_i$ in layer $i$.
	If~$z = \bar{x}$, then $x_1$ approves $\hat c_i$ in layer~$i$.
	This finishes the construction.
	
	We next show that the approvals are bipartite.
	To this end, let~$X$ be the set of variables, let~$\mathcal{C}^{\text{pos}}$ be the set of clauses which only contain non-negated literals and let $\mathcal{C}^\text{neg}$ be the set of clauses which only contain negated literals.
	Then agents from~${\{x_0, x_2 \mid x \in X\} \cup \{c_0, c_1, c_2 \mid c \in \mathcal{C}^\text{pos}\}\cup \{\hat c_0, \hat c_1, \hat c_2 \mid c \in \mathcal{C}^\text{neg}\}}$ only approve agents from~$\{x_1, x_3 \mid x \in X\}\cup \{c_0, c_1, c_2 \mid c \in \mathcal{C}^\text{neg}\} \cup \{\hat c_0, \hat c_1, \hat c_2 \mid c \in \mathcal{C}^\text{pos}\}$ and vice versa.
	
	$(\Rightarrow)$
	Assume that there is a satisfying assignment~$f$.
	We construct a stable matching~$M$ as follows.
	For each variable~$x$ which is set to true by~$f$, we add pairs~$\{x_0, x_1\}$ and $\{x_2, x_3\}$.
	For each variable~$x$ which is set to false by~$f$, we add pairs~$\{x_0, x_3\}$ and $\{x_1, x_2\}$.
	
	For each clause~$c$, fix some~$i_c\in \{0,1,2\}$ such that $c$ is satisfied by the $i_c$th literal of~$c$ in $f$.
	Add pairs~$\{c_j, \hat c_{j + i_c}\}$ for every~$j \in [3] $ (where $j + i_c$ is taken modulo 3).
	
	It remains to show that~$M$ is strongly stable in the first three layers.
	Clearly, no blocking pair is of the form~$\{x_i, y_j\}$ for variables~$x, y$ and ${i,j\in  \{0,1,2, 3\}}$.
	For each clause~$c$ and each layer~$i \in  \{0,1,2\}$, it holds that either agents $c_j$ for all~$j\in \{0,1,2\}$~are happy or agents~$\hat c_j$~for all~$j\in  \{0,1,2\}$~are happy.
	As no two agents $c_{j}$ and~$\hat c_{j'}$ approve each other in the same layer, there is no blocking pair of the form~$\{c_j, \hat c_{j'}\}$.
	The only remaining approvals are $x_0$ or~$x_1$ approving~$\hat c_i$ in layer $i$ where $x$ or $\bar x$ is the $i$th literal of~$c$.
	If~$i \neq i_c$, then $\hat c_i$ is happy in layer~$i$ and so neither~$\{x_0, \hat c_i\} $ nor~$\{x_1, \hat c_i\}$ is blocking.
	Otherwise the $i$th literal~$z$ of~$c$ satisfies~$c$.
	If $z = x$, then~$x_0$~is happy in each layer by the construction of~$M$ and thus $\{x_0, \hat c_i\}$ does not block~$M$.
	If $z = \hat x$, then $x_1$ is happy in each layer by the construction of~$M$ and thus $\{x_1, \hat c_i\}$ does not block~$M$.
	Consequently, $M$ is strongly stable in each layer.
	
	$(\Leftarrow)$
	Assume that there is a stable matching~$M$.
	First note that for each clause~$c$, we have~$\{ M(c_0), M(c_1), M(c_2) \} = \{ \hat c_0, \hat c_1, \hat c_2 \}$.
	If not, then there is one agent $c_j$ which is not matched to any of $\hat c_0, \hat c_1, \hat c_2$ and another agent $\hat c_{j'}$ which is not matched to any of $c_0, c_1, c_2$.
	Then,~$\{c_{j}, \hat c_{j'} \}$~blocks~$M$ in the first three layers, contradicting the stability of~$M$.
	In fact, we claim that there exists some~$i \in  \{0,1,2\}$ such that for every $j \in  \{0,1,2\}$, matching~$M$~contains~$\{ c_j, \hat c_{j + i}\}$ (where~$j + i$ is taken modulo 3).
	Otherwise, there exist some $j\neq j' \in  \{0,1,2\}$ such that~$\{c_j, \hat c_{j'}\} \in M$ and $\{c_{j'}, \hat c_j\} \in M$.
	Note however that $\{ c_j, \hat c_j \}$ blocks $M$ in the layer where $c_{j'}$ approves~$\hat c_j$, contradicting the stability of~$M$.
	Thus, our claim holds.
	
	We now show that $M$ contains either~$\{x_0, x_1\}$ and $\{x_2, x_3\}$ or $\{x_0, x_3\} $ and~$\{x_1, x_2\}$ for each variable~$x$.
	Since $x_0$ approves $x_1$ in the first three layers, $x_0$ or $x_1$ must be happy in each of these layers.
	It follows that $M$ contains either~$\{ x_0, x_1 \}$ or~$\{ x_1, x_2 \}$.
	It is easy to verify that~$\{ x_2, x_3 \} \in M$ if $\{ x_0, x_1 \} \in M$ and that $\{ x_3, x_0 \} \in M$ if $\{ x_1, x_2 \} \in M$.
	We define a satisfying assignment~$f$ by setting variable~$x$ to true if $M$ contains~$\{x_0, x_1\}$ and $\{x_2, x_3\}$ and false otherwise.
	
	It remains to show that each clause is satisfied.
	Fix a clause~$c$ and assume that for each~${j\in \{0,1,2\}}$, agent~$c_j$ is matched to~$\hat c_{j + i}$ (where $j +i$ is taken modulo~3).
	We claim that the $i$th literal~$z $ of $c$ satisfies~$c$.
	Note that $\hat c_i$ is unhappy in layer~$i$.
	If $z = x$, then for~$\{x_0, \hat c_i\}$~not to be blocking in layer~$i$, agent~$x_0$ needs to be happy in layer $i$ in~$M$.
	Consequently, $M$ contains pairs~$\{x_0, x_1\}$ and $\{x_2, x_3\}$ and $x$ is set to true by $f$.
	If $z = \bar x$, then for~$\{x_1, \hat c_i\}$ not to be blocking, $x_1$ needs to be happy in~$M$.
	Consequently, $M$ contains pairs~$\{x_0, x_3\}$ and $\{x_1, x_2\}$ and~$x$~is set to false by $f$.
	It follows that $c$ is satisfied. 
\end{proof}

\paragraph{Global Stability.}  \label{sub:strong-global}
\Cref{thm:all-layer-strong} also implies that \textsc{Global Strong Stability} for asymmetric approvals is NP-hard for all $3 \le \alpha \le \ell$.
This can be shown by adding an appropriate number of layers in which every matching is stable (e.\,g.\ layers without any approvals) and layers without any stable matching.

For symmetric approvals, on the other hand, we can solve \textsc{Global Strong Stability} in~${\ell \choose \alpha}\cdot n^{O (1)}$~time: For each subset~$S$ of~$[\ell]$ of size~$\alpha$, we check whether there is a matching stable in all layers of $S$  using \Cref{th:strong-alllayer}.
This results in the following.
\begin{corollary}\label{co:strong-globalPara}
	\textsc{Global Strong Stability} for symmetric approvals is in FPT wrt.~$\ell$, in XP wrt.~$\alpha$, and in XP wrt.~$\ell-\alpha$.\footnote{A problem is in FPT with respect to a parameter $k$ if it is solvable in $f(k)|\mathcal{I}|^{\mathcal{O}(1)}$ time and in XP if it solvable in $|\mathcal{I}|^{f'(k)}$ time for some computable functions $f$ and $f'$. Under standard complexity assumptions, it is assumed that no W[1]-hard problem is in FPT.}
\end{corollary}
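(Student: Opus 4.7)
The plan is to reduce \textsc{Global Strong Stability} to a bounded number of independent calls of \textsc{All-Layers Strong Stability}. Concretely, I would iterate over every subset $S \subseteq [\ell]$ with $|S| = \alpha$, and for each such $S$ invoke Proposition~\ref{th:strong-alllayer} on the sub-instance obtained by restricting the multilayer preferences to the layers indexed by~$S$. The algorithm outputs ``yes'' iff at least one of these calls returns ``yes'', and returns a witness matching from whichever sub-instance succeeded.

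Correctness is immediate from the definition: a matching~$M$ is $\alpha$-globally strongly stable precisely when there exists some $S \subseteq [\ell]$ with $|S| = \alpha$ such that $M$ is strongly stable in every layer of~$S$, which is exactly all-layers strong stability for the sub-instance whose layer set is~$S$. Symmetry of approvals is preserved under restriction to a subset of layers, so the assumption of Proposition~\ref{th:strong-alllayer} really does hold on each sub-instance.

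For the running time, each call of Proposition~\ref{th:strong-alllayer} costs $O(\alpha n^2 + n^{2.5})$ (the sub-instance has $\alpha$ layers), and there are exactly $\binom{\ell}{\alpha}$ such subsets. The three claimed bounds then follow from standard estimates of this binomial coefficient: $\binom{\ell}{\alpha} \le 2^\ell$ gives fixed-parameter tractability in~$\ell$; $\binom{\ell}{\alpha} \le \ell^{\alpha}$ gives XP in~$\alpha$; and $\binom{\ell}{\alpha} = \binom{\ell}{\ell-\alpha} \le \ell^{\ell-\alpha}$ gives XP in~$\ell-\alpha$.

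I do not expect a substantive obstacle here, since Proposition~\ref{th:strong-alllayer} already does all of the algorithmic work. The only point that deserves to be made explicit is that strong stability is defined purely layer-by-layer, so restricting an instance to a designated subset of layers yields a well-defined input to Proposition~\ref{th:strong-alllayer} and preserves symmetric approvals; the rest is a brute-force enumeration whose complexity is controlled by the three binomial bounds above.
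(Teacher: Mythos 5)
Your proposal is correct and is essentially identical to the paper's argument: the paper likewise enumerates all $\binom{\ell}{\alpha}$ subsets $S\subseteq[\ell]$ of size $\alpha$ and applies Proposition~\ref{th:strong-alllayer} to the layers in $S$, with the same binomial-coefficient bounds yielding the three parameterized claims.
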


This leaves open whether \textsc{Global Strong Stability} for symmetric approvals is in P for any $\alpha$ and whether it is in FPT or W[1]-hard with respect to~$\alpha$.
Reducing from \textsc{Independent Set}, we answer both questions negatively:

\begin{restatable}{proposition}{strongglobal}
	\label{th:strong-global}
	\textsc{Global Strong Stability} for symmetric bipartite approvals is NP-hard and~W[1]-hard wrt. $\alpha$, even if each agent approves at most two agents in each layer.
\end{restatable}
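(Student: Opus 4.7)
The plan is to reduce from \textsc{Independent Set}, which is both NP-hard and W[1]-hard parameterized by the solution size $k$. Given an instance $(G=(V,E),k)$, I construct an instance of \textsc{Global Strong Stability} with one layer $L_v$ per vertex $v\in V$ (so $\ell=|V|$) and desired degree of stability $\alpha:=k$. For each edge $e=\{u,v\}\in E$, I introduce three agents $x_e$, $y^u_e$, $y^v_e$; the only (symmetric) approvals are $\{x_e,y^u_e\}$ in layer $L_u$ and $\{x_e,y^v_e\}$ in layer $L_v$. Every agent then approves at most one other agent in every layer, the approval graph is bipartite with $\{x_e\mid e\in E\}$ on one side and the $y$-agents on the other, and the whole instance is built in polynomial time.

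The crux is the characterization of strong stability in a symmetric single layer recalled in \Cref{se:prelims}: $M$ is strongly stable in layer $i$ iff every agent that has a neighbor in $G_i$ is matched to one of its neighbors. In layer $L_v$, the agents with a neighbor are exactly $x_e$ and $y^v_e$ for edges $e$ incident to $v$, and each of them has a unique approved partner in $L_v$ (namely the other). Hence $M$ is strongly stable in $L_v$ iff $\{x_e,y^v_e\}\in M$ for every edge $e$ incident to $v$ in $G$. This immediately yields a conflict: for any edge $e=\{u,v\}\in E$, $M$ cannot be strongly stable in both $L_u$ and $L_v$, since this would force $x_e$ to be matched simultaneously to $y^u_e$ and $y^v_e$.

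For the forward direction, given an independent set $I\subseteq V$ with $|I|=k$, I define $M$ by pairing $x_e$ with $y^u_e$ for each edge $e=\{u,v\}$ with $u\in I$ and with $y^v_e$ for each edge with $v\in I$; these cases are mutually exclusive because $I$ is independent, and for edges with no endpoint in $I$ one pairs $x_e$ with $y^u_e$ arbitrarily. All remaining $y$-agents are left unmatched. For every $v\in I$ and every edge $e$ incident to $v$, the pair $\{x_e,y^v_e\}$ lies in $M$, so $M$ is strongly stable in all $k$ layers $L_v$ with $v\in I$. Conversely, given $M$ strongly stable in a set $S$ of at least $k$ layers, put $I:=\{v\mid L_v\in S\}$; the conflict above forces $I$ to be independent, and $|I|\geq k$. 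Both NP-hardness and, since $\alpha=k$, W[1]-hardness with respect to $\alpha$ follow.

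Because the argument rests on the clean single-layer characterization of strong stability, I do not foresee a real obstacle. The points requiring a little care are that $\alpha$ is set to exactly the \textsc{Independent Set} parameter $k$ (necessary for the parameterized reduction to transfer W[1]-hardness), and that the per-layer approval bound and bipartiteness are verified directly from the edge-by-edge construction; both are immediate from the design.
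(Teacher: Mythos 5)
Your proof is correct and follows essentially the same approach as the paper: a reduction from \textsc{Independent Set} with one layer per vertex, $\alpha=k$, and a per-edge gadget whose forced matching under strong stability conflicts between the layers of the edge's two endpoints. The only difference is cosmetic — you use three agents per edge with a shared central agent $x_e$, while the paper uses four agents forming two disjoint pairs per relevant layer — and both gadgets work for the same reason.
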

\begin{proof}
	We reduce from \textsc{Independent Set} which is W[1]-hard parameterized by the size of the independent set to be found~\cite{DBLP:journals/tcs/DowneyF95}.
	Assume we are given a graph $G=(V=\{v_1,\dots v_{\nu}\},E)$ and an integer $k$. 
	The set of agents $A$ consists of four agents~$e^1$, $e^2$, $e^3$, and $e^4$ for each edge~$e\in E$. 
	Moreover, for each $i\in [\nu]$ we introduce one layer $i$ (in which only agents corresponding to edges incident to vertex~$v_i$ approve each other).
	In layer~$i$, for each edge~$e=\{v_i,v_j\}\in E$ with~$i<j$, agents~$e^1$ and~$e^2$ as well as agents~$e^3$ and $e^4$ approve each other.
	Furthermore, for each edge~$e=\{v_i,v_j\}\in E$ with~${j<i}$, agents~$e^1$ and~$e^3$ as well as~$e^2$ and~$e^4$ approve each other in layer~$i$. 
	We set~$\alpha:=k$. 
	
	{\bfseries ($\Rightarrow$)} 
	Assume we are given an independent set $V'=\{v_{i_1}, \dots v_{i_k} \}\subseteq V$ in $G$.
	Let $M$ be the set of agent pairs~$\{a,a'\}$ such that $a$ and $a'$ approve each other in some layer from $\{i_1,\dots i_k\}$.
	Note that each agent $e^r\in A$ with $e=\{v_i,v_j\}\in E$ approves one agent in layer~$i$ and one agent in layer~$j$ and no agent in all other layers.
	Thus, as $V'$ is an independent set, for each agent~$e^r$ there is at most one layer from $\{i_1,\dots i_k\}$ where $e^r$ approves some (and in fact always exactly one) agent. 
	Thus, in $M$, each agent is only included in at most one pair and $M$ is a valid matching. 
	Moreover, $M$ is clearly strongly stable in all layers from $\{i_1,\dots i_k\}$ as for every agent~$a$ accepting another agent~$b$ in one of these layers, we have that~$\{a, b\} \in M$.
	
	{\bfseries ($\Leftarrow$)}
	Assume we are given a matching $M$ that is strongly stable in layers $\{i_1,\dots i_k\}$.
	We claim that $V'=\{v_{i_1}, \dots, v_{i_k}\}$ is an independent set in $G$. 
	Assume for the sake of contradiction that $e=\{v_{i_p},v_{i_q}\}\in E$ for $i_p<i_q$ with $p,q\in [k]$. 
	Note that $e^1$ cannot be happy in both layer~$i_p$ and layer~$i_q$, as it only approves $e^2$ in layer $i_p$ and only $e^3$ in layer $i_q$. 
	If $e^1$ is unhappy in layer~$i_p$ (respectively $i_q$), then it forms a blocking pair together with~$e^2$ (respectively $e^3$) in layer~$i_p$ (respectively~$i_q$), as $e^1$ and $e^2$ (respectively $e^3$) approve each other and no other agents in this layer, contradicting the stability of~$M$ in layer~$i_p$ (respectively~$i_q$). 
\end{proof}

Concerning results for arbitrary constellations of $\alpha$ and $\ell$ note that for every constant value of $\alpha$ or $\ell-\alpha$ the problem becomes polynomial-time solvable.

\paragraph{Pair Stability.}  \label{sub:strong-pair}
\newcommand{\xtrue}{x^{\text{true}}}
\newcommand{\xfalse}{x^{\text{false}}}
\newcommand{\xf}{x^{f}}
\newcommand{\xd}{x^{\text{d}}}

In contrast to \textsc{Pair Weak Stability}, \textsc{Pair Strong Stability} is NP-hard for any~${0 < \alpha < \ell}$ (not just~$\alpha > \lceil \ell / 2 \rceil$). 
\begin{restatable}{theorem}{pairstrong}
	\label{th:pair-strong}
	\textsc{Pair Strong Stability} for symmetric bipartite approvals is NP-hard for any~${\ell \geq 2}$ and any~$0 < \alpha < \ell$. 
\end{restatable}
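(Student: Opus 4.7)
The plan is to reduce from an NP-hard satisfiability problem such as \textsc{3-SAT} (or \textsc{Monotone 3-SAT} as in \Cref{thm:all-layer-strong}) to \textsc{Pair Strong Stability}. I would first handle a base case, for instance $\ell = 2$ and $\alpha = 1$, by a direct reduction respecting the symmetric bipartite restriction, and then extend to arbitrary $\ell$ and $\alpha$ via a padding argument that replicates the two base layers with appropriate multiplicities.

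For the base case, each variable is encoded by a small symmetric bipartite gadget whose only $1$-pair strongly stable configurations correspond to the two truth assignments (each configuration stable in one of the two base layers). Each clause is encoded by a gadget containing a ``critical'' pair of agents that approve each other in both layers and whose endpoints can be matched to partners they approve (one in each layer) only if at least one of the clause's literals is set to a satisfying value via its variable gadget. If the clause is unsatisfied, the critical pair becomes blocking in both base layers, violating $1$-pair strong stability; otherwise it is non-blocking in at least one layer. Recall that, under symmetric approvals, the strong-stability blocking condition simplifies to ``the two agents approve each other in that layer, and at least one of them is unhappy,'' which makes the gadget design cleaner.

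To extend from $(\ell_0, \alpha_0) = (2, 1)$ to arbitrary $(\ell, \alpha)$ with $0 < \alpha < \ell$, I would create $\ell - \alpha$ copies of the base layer in which unsatisfied clauses force their critical pair to block, and $\alpha$ copies of the other base layer (swapping the roles of the two layers if $\alpha > \lceil \ell / 2 \rceil$ so as to match the dominant side). In the YES case, one can design the base construction so that every non-critical pair is non-blocking in both base layers, and every critical pair is non-blocking in the appropriate side, ensuring $\alpha$-pair strong stability in the padded instance. In the NO case, any matching contains some critical pair blocking in both base layers and hence in all $\ell$ layers of the padded instance, which trivially violates $\alpha$-pair strong stability for any $\alpha > 0$.

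The main obstacle, I expect, is engineering the variable and clause gadgets so that all three structural requirements (strong stability, symmetry, and bipartiteness) hold simultaneously, while still forcing the desired binary choice per variable and disjunctive constraint per clause without introducing auxiliary blocking pairs. A secondary difficulty is verifying that the layer-duplication argument transfers hardness uniformly across the full range $0 < \alpha < \ell$, including the delicate large-$\alpha$ regime where the replication must be balanced carefully so that all non-critical pairs remain non-blocking in at least $\alpha$ layers and the critical pairs exhibit the intended all-or-nothing behavior.
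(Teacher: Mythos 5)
Your high-level plan (reduce from \textsc{3-Sat}, concentrate the action in two special layers, pad up to $\ell$ layers) matches the shape of the paper's argument, and your observation that under symmetric approvals a pair blocks under strong stability in a layer iff the two agents approve each other there and at least one of them is unhappy is correct and is indeed the right working form of the definition. However, there are two genuine gaps. The first is the padding scheme. Replicating the two base layers with multiplicities $\ell-\alpha$ and $\alpha$ means that a pair blocking only in the $\alpha$-fold layer already blocks in $\alpha>\ell-\alpha$ padded layers once $\alpha>\ell/2$, so in that regime \emph{every} pair of the yes-case matching must be non-blocking in the $\alpha$-fold layer and there is only a single ``escape layer'' left. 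This clashes with your variable gadget, which is explicitly designed so that the two truth-value configurations are stable in \emph{different} base layers: under the unbalanced replication one of the two truth values becomes infeasible, destroying the free binary choice (swapping which layer receives $\alpha$ copies only changes which value is penalized), and it also contradicts your later claim that every non-critical pair is non-blocking in both base layers. The paper sidesteps the balancing problem entirely: it pads with $\alpha-1$ \emph{empty} layers, in which no pair can block under strong stability (blocking requires a strict preference and hence an approval), so a pair violates $\alpha$-pair strong stability iff it blocks in \emph{all} $\ell-\alpha+1$ non-empty layers; since $\ell-\alpha+1\ge 2$, the variable gadgets can escape through layer~1 (via the agents $x^{\text{d}}$) and the clause gadgets through layer~2 (via the agents $\hat c^*$) simultaneously, uniformly for every $0<\alpha<\ell$.

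The second gap is that the gadgets, which carry essentially all of the technical content, are left unspecified. In particular you need a mechanism that \emph{forces} each variable gadget into one of exactly two configurations and each clause gadget into one of exactly three, without itself creating blocking pairs; the paper achieves this with two auxiliary forced pairs $\{a,a^*\}$ and $\{b,b^*\}$ whose members additionally approve the ``selector'' agents $x^{f}$, $x^{\text{d}}$ and $\hat c^1$, $\hat c^2$, $\hat c^*$ in all non-empty layers, so that any selector left unhappy in all non-empty layers would block with the already-happy forcing agent. Without such a mechanism nothing rules out degenerate matchings that simply leave gadget agents unmatched, and the backward direction of the reduction does not go through.
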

\begin{proof}
	In order to show NP-hardness, we reduce from \textsc{3-Sat}.
	We always complete approvals to be symmetric.
	We will construct an instance in which agents only approve other agents in the first~$\ell - \alpha + 1$ layers.
	We call the remaining~$\alpha - 1$ last layers the \emph{empty layers}.
	Note that each matching is strongly stable in any empty layer.
	
	For each variable~$x$, we add four agents~$\xtrue$, $\xfalse$, $\xf$, and~$\xd$.
	In the first layer, agents~$\xtrue$ and $\xfalse$ approve~$\xd$.
	In the first~$\ell - \alpha + 1$ layers, agents~$\xtrue$ and $\xfalse$ approve~$\xf$ (see \Cref{fig:clause-gadget-pair-strong}).
	For each clause~$c$, we add six agents~$c^1$, $c^2$, $c^3$, $\hat c^1$, $\hat c^2$, and~$\hat c^*$.
	Agents~$\hat c^1$ and $\hat c^2$ approve~$c^1$, $c^2$, and $c^3$ in the first~$\ell - \alpha + 1$ layers, while $\hat c^*$ approves $c^1$, $c^2$, and~$c^3$ only in the second layer.
	Intuitively speaking, the partner of~$\xf$ indicates the truth value assignment to $x$ and the partner of~$\hat c^*$ indicates which literal is satisfied in $c$.
	If the $j$th literal of~$c$ is $\bar x$ for some variable~$x$, then~$c^j$~additionally approves $\xfalse$ in the first~$\ell - \alpha + 1$~layers.
	If the $j$th literal of~$c$ is $x$ for some variable~$x$, then $c^j$ additionally approves $\xtrue$ in the first~$\ell - \alpha + 1$~layers (see \Cref{fig:clause-gadget-pair-strong}).
	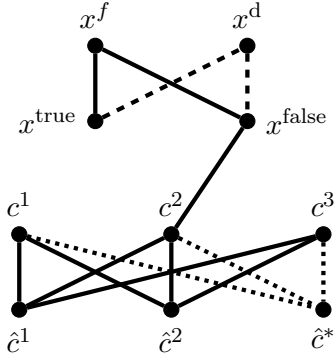
\begin{figure}[t]
		\begin{center}
			\begin{tikzpicture}
			\node[vertex, label=180:$\xtrue$] (xtrue) at (0,0) {};
			\node[vertex, label=0:$\xfalse$] (xfalse) at ($(xtrue) + (2,0)$) {};
			
			\node[vertex, label=90:$\xf$] (xf) at ($(xtrue) + (0, +1)$) {};
			\node[vertex, label=90:$\xd$] (xd) at ($(xf) + (2,0)$) {};
			
			\draw (xtrue) edge[ultra thick, dashed] (xd);
			\draw (xfalse) edge[ultra thick] (xf);
			
			\draw (xf) edge[ultra thick] (xtrue);
			\draw (xd) edge[ultra thick, dashed] (xfalse);
			
			\node[vertex, label=90:$c^1$] (c1) at ($(xtrue) + (-1, -1.5)$) {};
			\node[vertex, label=90:$c^2$] (c2) at ($(c1) + (2,0)$) {};
			\node[vertex, label=90:$c^3$] (c3) at ($(c2) + (2,0)$) {};
			
			\node[vertex, label=270:$\hat c^1$] (c1h) at ($(c1) + (0, -1)$) {};
			\node[vertex, label=270:$\hat c^2$] (c2h) at ($(c1h) + (2,0)$) {};
			\node[vertex, label=270:$\hat c^*$] (c3h) at ($(c2h) + (2,0)$) {};
			
			\draw (c1) edge[ultra thick] (c2h);
			\draw (c2) edge[ultra thick] (c1h);
			
			\draw (c1h) edge[ultra thick] (c1);
			\draw (c2h) edge[ultra thick] (c2);
			
			\draw (c3) edge[ultra thick] (c2h);
			\draw (c3) edge[ultra thick] (c1h);
			
			\draw (c1) edge[ultra thick, dotted] (c3h);
			\draw (c2) edge[ultra thick, dotted] (c3h);
			\draw (c3) edge[ultra thick, dotted] (c3h);
			
			\draw (c2) edge[ultra thick] (xfalse);
			\end{tikzpicture}
			
		\end{center}
		\caption{Approval graph of agents added for a variable~$x$ and a clause~$c$ and their approvals, where~$\bar x$ is the second literal of~$c$.
			Approvals appearing only in layer~1 are dashed, while approvals only appearing in layer~2 are dotted. All other approvals appear in the first~$\ell - \alpha + 1$ layers. 
		}
		\label{fig:clause-gadget-pair-strong}
	\end{figure}
	Finally, we add four agents~$a$, $a^*$, $b$, and $b^*$.
	Agent~$a$ approves $a^*$ and~$b$ approves $b^*$ in the first~$\ell - \alpha + 1$~layers.
	Agents~$a$ and $b$ approve some of the other agents as follows:
	For each variable~$x$, agent~$a$ approves agents $\xf$ and $\xd$ in the first~$\ell - \alpha + 1$ layers.
	For each clause~$c$, agent~$b$ approves~$\hat c^1$, $\hat c^2$, and~$\hat c^3$ in the first~$\ell - \alpha + 1$ layers.
	It is easy to verify that the approval graph is bipartite (with~$a$, $b^*$, $\xtrue$, $\xfalse$, $\hat c^1$, $\hat c^2$, and~$\hat c^3$ on one side of the bipartition and~\mbox{$a^*$, $b$, $\xf$, $\xd$, $c^1$, $c^2$}, and $c^3$ on the other side of the bipartition).
	
	$(\Rightarrow)$
	Assume that there is a satisfying assignment~$f$.
	We construct a $\alpha$-pair strongly stable matching~$M$ as follows.
	First, we add~$\{a, a^*\}$ and $\{ b, b^* \}$ to~$M$.
	For each variable~$x$ which is set to true by~$f$, we add $\{\xtrue, \xf\}$ and $\{\xfalse, \xd\}$ to~$M$.
	For each variable~$x$ which is set to false by~$f$, we add $\{\xfalse, \xf\}$ and~$\{\xtrue, \xd\}$ to~$M$.
	For each clause~$c$, fix $j_c \in [3]$ such that the $j_c$th literal of~$c$ is satisfied by~$f$.
	We add~$\{c^{j_c}, \hat c^*\}$ to~$M$.
	Moreover, we add two more pairs: $\{c^1, \hat c^1\}$ and $\{c^2, \hat c^2\}$ if $j_c = 3$, $\{c^1, \hat c^1\}$ and~$\{c^3, \hat c^2\}$ if $j_c = 2$, and~$\{c^2, \hat c^1\}$ and $\{c^3, \hat c^2\}$ if~${j_c = 1}$.
	This finishes the construction of~$M$.
	
	We now show that~$M$ is $\alpha$-pair strongly stable.
	Since every agent approves its partner in~$M$ in at least one of the first two layers, both $a$ and $b$ are happy in the first~$\ell - \alpha + 1$~layers, and no agent approves another agent in the last~$\alpha - 1$ layers, no pair containing~$a$ or $b$ is blocking in more than~$\ell - \alpha$ layers.
	For any variable~$x$, no pair consisting only of agents from~$\xtrue$, $\xfalse$,~$\xf$, and~$\xd$ is blocking in layer one, since all these agents are happy in layer one.
	For any clause~$c$, no pair consisting only of agents from~$c^1$, $c^2$, $c^3$, $\hat c^1$, $\hat c^2$, $\hat c^3$ is blocking in layer two, since all these agents are happy in layer two.
	Hence, no such pair of agents builds a blocking pair in~$\ell - \alpha + 1$ layers. 
	Consider a clause~$c$.
	Agents $\hat c^1$, $\hat c^2$ and their partners are happy in the first~$\ell - \alpha + 1$ layers and thus they are not part of a blocking pair.
	Assume without loss of generality that the~$j_c$th~literal of~$c$ is~$\bar x$ for some variable~$x$.
	Then, $c^{j_c}$ can only form a blocking pair with~$\xfalse$.
	By the choice of $j_c$, the variable $x$ is set to false by $f$.
	It follows that $M (\xfalse) = \xf$, which makes $\xfalse$ happy in the first~$\ell - \alpha + 1$ layers.
	Thus, $\{\xfalse, c^{j_c}\}$ is not blocking in layer~2 and thus it blocks in at most~$\ell - \alpha$ layers.
	Analogously, we can show that there is no blocking pair for the case that $j_c$th literal is positive.
	
	$(\Leftarrow)$
	Assume that there is an $\alpha$-pair strongly stable matching~$M$.
	First note that~$M$ needs to contain~$\{a, a^*\}$ and $\{ b, b^* \}$ since these pairs block $M$ otherwise in the first~$\ell - \alpha + 1 > \ell - \alpha$ layers.
	It follows that for each variable~$x$, matching~$M$ contains either~$\{\xtrue, \xf\}$ and $\{\xfalse, \xd\}$ or $\{\xfalse , \xf\}$ and~$\{\xtrue, \xd\}$ as otherwise~$\{\xf, a\}$ or $\{\xd, a\}$ would block~$M$ in the first~$\ell - \alpha + 1$ layers.
	We claim that setting variable~$x$ to true if and only if~$M$ contains~$\{\xtrue, \xf\}$ is a satisfying assignment.
	Consider a clause~$c$.
	Note that $M$ must match~$\hat c^1$, $\hat c^2$, and $\hat c^*$ to $c^1$, $c^2$, and $c^3$ as otherwise~$\{b, \hat c^1\}$, $\{b, \hat c^2\}$, or $\{b, \hat c^*\}$ would block~$M$ in the first~$\ell - \alpha + 1$ layers.
	Let $j_c \in [3]$ such that $M(\hat c^*) = c^{j_c}$.
	We claim the $j_c$th literal of~$c$ satisfies~$c$.
	Suppose that the $j_c$th literal of~$c$ is~$x$.
	Suppose further that $\{\xtrue, \xf\} \notin M$. Then, $c^{j_c}$ is unhappy in layer one and $\xtrue$ is unhappy in layers two to~$\ell - \alpha + 1$ and thus $\{c^{j_c}, \xtrue\}$ blocks~$M$ for $\alpha$-pair strong stability.
	It follows that $\{ \xtrue, \xf \} \in M$, implying that $c$ is satisfied.
	Otherwise the $j_c$th literal of~$c$ is $\bar x$ for some variable~$x$.
	Suppose that $\{\xfalse, \xf\} \notin M$, then $c^{j_c}$ is unhappy in layer one and $\xfalse$ is unhappy in layers two to~$\ell-\alpha + 1$ and thus $\{c^{j_c}, \xfalse\}$ blocks~$M$ for $\alpha$-pair strong stability.
	It follows that $\{ \xfalse, \xf \} \in M$, implying that $c$ is satisfied. 
\end{proof}
Notice that finding an all-layers stable matching is NP-hard for weak stability and symmetric approvals but finding an $1$-pair weakly stable matching is polynomial-time solvable.
The picture is reversed for strong stability and symmetric approvals.

\section{Super Stability}\label{se:super}
As for strong stability, in the single-layer setting, a super stable matching might not exist, but its existence can be decided in polynomial-time \cite{DBLP:journals/jal/IrvingM02}.
\paragraph{All-Layers and Global Stability.}
We show that \textsc{Global Super Stability} is polynomial-time solvable (even if approvals are asymmetric).
The key ingredient to this proof is the observation that each layer has at most three super stable matchings.
This suggests that achieving super stability is the algorithmically easiest among our three studied stability notions.

\begin{restatable}{proposition}{superglobal}
	\label{th:super-global}
	\textsc{Global Super Stability} can be solved in $O(\ell n^2)$ time.
\end{restatable}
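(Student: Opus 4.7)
My approach hinges on a structural observation that severely restricts the set of super stable matchings within any single layer. Specifically, I would first prove the following lemma: for any layer $i \in [\ell]$ with approval preferences, there are at most three super stable matchings, and the set of all of them can be computed in $O(n^2)$ time. The core idea is that mutual approval pairs are forced: if agents $a$ and $a'$ both approve each other in layer~$i$ but are not matched to each other, then both are at worst indifferent between their current partner and the other agent, so $\{a, a'\}$ blocks under super stability. Consequently, the undirected graph of mutually approving pairs in layer~$i$ must itself induce a matching $H_i$, and every super stable matching in layer~$i$ must contain $H_i$. The remaining freedom is further restricted by the one-sided approvals (for each arc $a \to a'$ in the directed approval graph, the agent $a'$ must be happy in $M$) and by the fact that any two mutually indifferent unhappy agents block each other under super stability. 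A short case analysis on how these constraints force matches among the agents outside $V(H_i)$ then yields the bound of at most three.

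Given this lemma, the algorithm is immediate. For each layer $i \in [\ell]$ I enumerate its at most three super stable matchings in $O(n^2)$ time. I hash each such matching by a canonical encoding, for instance the lexicographically sorted list of its matched pairs, which takes $O(n)$ time per matching, and maintain a global hash map that counts, for each encoded matching, how many layers it occurs as a super stable matching. Because any $\alpha$-globally super stable matching $M$ must be super stable in at least $\alpha \geq 1$ specific layers, it is necessarily among the candidates enumerated above; therefore it suffices to return yes iff some encoded matching in the hash map has a count of at least $\alpha$.

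The total running time is $O(\ell n^2)$ for the per-layer enumeration plus $O(\ell n)$ for the hashing and counting over at most $3\ell$ candidates, yielding $O(\ell n^2)$ overall. The main obstacle, and the delicate part of the argument, is the structural lemma bounding the number of super stable matchings per layer by a constant; once that is established, the rest of the proof is straightforward bookkeeping. Note also that the argument never uses symmetry of approvals, so the same algorithm handles both the symmetric and the asymmetric case claimed in the proposition.
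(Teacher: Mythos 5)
Your proposal is correct and follows essentially the same route as the paper: both arguments rest on the observations that mutually approving pairs are forced into any super stable matching and that any two unhappy agents not matched to each other form a blocking pair (regardless of approvals), which together bound the number of super stable matchings per layer by a constant, after which one enumerates the $O(\ell)$ candidates and counts multiplicities (the paper uses radix sort on a canonical encoding where you use a hash map, an immaterial difference). The only part you leave as a sketch --- the ``short case analysis'' on the agents outside the forced pairs --- is exactly the step the paper carries out by noting that if four or more such agents remain then two unhappy non-partners must exist, so your ingredients do suffice.
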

\begin{proof}
	Let $G$ be a directed approval graph on $n$ agents (for the symmetric case, we add bidirectional arcs).
	We first describe a procedure to find all super stable matchings or determine that there is none in the layer corresponding to $G$.
	Observe that every bidirectional arc must be included in a super stable matching since the endpoints block otherwise.
	As a consequence, we immediately conclude that $G$ has no super stable matching if there is an agent incident to two bidirectional arcs.
	Hence, to find a super stable matching~$M$, as long as there is a bidirectional arc between say $u$ and $v$, we add $\{ u, v \}$ to $M$.
	Now let $G'$ be the result of deleting all bidirectional arcs  and incident agents.
	If there remain at least four agents in $G'$, we may conclude that there is no super stable matching in $G$:
	Consider an arbitrary maximal matching~$M'$ of agents~$V(G')$.
	Note that $M'$ has size at least two.
	Since~$G'$ has no bidirectional arcs, each pair in $M'$ leaves at least one agent unhappy.
	Hence, there are two unhappy agents who are not partners in $M'$.
	Note that these two agents always block even if there is no arc in between.
	Thus, we see that $G$ has no super stable matching if $G'$ has more than three agents.
	We thus assume that $G'$ has at most three agents.
	
	Recall that we obtained $G'$ by deleting bidirectional arcs and incident agents from~$G$.
	Consequently, a super stable matching in $G$ has a one-to-one correspondent super stable matching in $G'$ (since an agent in~$G'$ can never form a blocking pair with an agent not contained in~$G'$).
	Since $G'$ has at most three agents, there are at most three super stable matchings in any layer and all super stable matchings in~$G'$ can be found in constant time.
	Moreover, the construction of~$G'$ can be computed in $O(n^2)$ time.
	
	We next show how to solve \textsc{Global Super Stability}.
	First, we enumerate all super stable matchings for each layer.
	This takes $O(\ell n^2)$ time as we spend $O(n^2)$ for each layer.
	Assuming that $A = [n]$, we encode each matching as an $n$-digit number in base $n$, where the $i$-th digit is the partner of agent $i$. 
	We sort these numbers using radix sort in $O(\ell n + n^2)$ time.
	We conclude that the given instance is a yes-instance if and only if there is an element occurring at least~$\alpha$~times in this sorted sequence. 
\end{proof}

\paragraph{Pair and Individual Stability.}
In the following, we show a result similar to \cref{th:pair-weakP} which was concerned with weak stability.
However, the role of~$\alpha < \ell / 2$ and~$\alpha > \ell / 2$ are reversed for super compared to weak stability.
Intuitively, this comes from the fact that any matching is weakly stable if no agents approve other agents (or all agents approve each other) while no matching is super stable in these settings.
We first prove that both checking for a  pair or individually super stable matching is NP-hard for symmetric approvals for only two layers: 

\begin{restatable}{theorem}{superpairneg}
	\label{th:super-pair-neg}
	\textsc{Pair/Individual Super Stability} are NP-hard for symmetric approvals for any~$\ell \geq 2$ and any~${\alpha \leq \ell / 2}$, even if each agent approves at most four agents in each layer.
\end{restatable}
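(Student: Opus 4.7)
The plan is to reduce from a suitable NP-hard variant of 3-SAT (e.g.\ with each variable appearing in a bounded number of clauses, to enforce the degree-$4$ bound) to the base case $\ell=2,\alpha=1$ of \textsc{Pair Super Stability} with symmetric approvals. Since $1$-pair and $1$-individual super stability coincide (as noted immediately after the definition of individual stability), a single two-layer construction settles both variants. The extension to arbitrary $\ell\ge 2$ and $\alpha\le \ell/2$ is then obtained by duplicating the two layers, copying the first $\lceil \ell/2\rceil$ times and the second $\lfloor \ell/2\rfloor$ times: a pair blocks in at most $\ell-\alpha$ new layers iff it blocks in at most one original layer, because blocking in both original layers would give $\ell$ blocks, exceeding $\ell-\alpha\ge \ell/2$. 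For individual stability, the witness agent of each pair in the base case witnesses all $\lceil\ell/2\rceil$ (or $\lfloor\ell/2\rfloor$) copies of its layer, which is at least $\alpha$ in either case.

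The construction will exploit the following characterization, which is immediate from the definitions: under symmetric approvals, a non-matched pair $\{a,a'\}$ is non-blocking in layer $i$ under super stability iff $\{a,a'\}\notin E_i$ \emph{and} at least one of $a,a'$ is happy in layer $i$. In particular, two agents who are simultaneously unhappy in both layers and approve no one still form a blocking pair in both layers, which makes super stability far stricter than weak or strong stability and essentially forces every agent to be happy in some layer. The variable gadget will consist of a small set of agents admitting exactly two ``assignment'' matchings: one makes a designated literal agent happy in layer~1 (encoding \emph{true}) and a dummy happy in layer~2, the other does the reverse (encoding \emph{false}). The clause gadget will introduce agents that can be made happy in some layer only if at least one literal's variable gadget is configured to satisfy the clause, so that otherwise some pair inside the clause gadget blocks in both layers.

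The principal obstacle will be controlling all potential blocking pairs \emph{across} gadgets due to the strict ``both-unhappy'' rule: any two agents unhappy in both layers automatically block, so the construction must guarantee that every agent is happy in at least one layer and that ``unhappy-in-layer-$i$'' agents of different gadgets cannot form inadvertent cross-gadget blocks. Making local approval sets disjoint and providing each agent with a reliable partner in either layer~1 or layer~2 will allow the gadgets to compose cleanly while keeping the per-layer approval degree at most four. Finally, one checks that in the forward direction each non-matched pair has a single designated agent that is both happy and non-approving in one of the two layers, which upgrades the argument from pair stability to the stronger \textsc{Individual Super Stability}.
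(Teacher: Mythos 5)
Your proposal correctly identifies the two structural pillars of the argument: the characterization that, under symmetric approvals, an unmatched pair is non-blocking for super stability in a layer if and only if the two agents do not approve each other \emph{and} at least one of them is happy (equivalently, every pair of agents that are both unhappy in every layer blocks everywhere), and the fact that $1$-pair and $1$-individual super stability coincide, so one two-layer construction serves both problems. Your layer-replication scheme for lifting $\ell=2,\alpha=1$ to general $\ell\ge 2$, $\alpha\le \ell/2$ (copying the layers $\lceil \ell/2\rceil$ and $\lfloor \ell/2\rfloor$ times) is also correct, since $\ell-\alpha\ge\lceil\ell/2\rceil$ and $\alpha\le\lfloor\ell/2\rfloor$; it is a mild variant of the paper's lift, which instead copies each of the two layers $\alpha$ times and pads with $\ell-2\alpha$ empty layers (in which every unmatched pair blocks). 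Both lifts work.

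However, the core of the theorem --- NP-hardness of the base case $\ell=2$, $\alpha=1$ --- is not actually proved in your proposal. You describe what the variable and clause gadgets ``will'' do, but you never specify them, and you explicitly name the principal difficulty (every agent must be made happy in some layer, and simultaneously-unhappy agents in different gadgets must not form spurious blocking pairs) without resolving it. This is precisely where the work lies: because any two agents that are unhappy in both layers block each other regardless of approvals, the construction must globally coordinate which agents are unhappy in which layer, and a generic SAT-gadget template does not obviously survive this. The paper sidesteps a SAT reduction entirely and instead reduces from Schaefer's NP-hard problem of partitioning the vertex set of a cubic graph $G=(V,E)$ into $V'\cupdot V''$ so that $G[V']$ and $G[V'']$ have all degrees equal to one: each vertex $v$ yields three agents $v^1,v^2,v^*$, with $v^1$--$w^1$ and $v^2$--$w^2$ edges in layer one for each $\{v,w\}\in E$ and $v^1$--$v^*$, $v^2$--$v^*$ edges in layer two, plus two isolated agents $a,a'$ that are matched to each other and force every remaining agent to be happy in at least one layer. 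Until you exhibit concrete gadgets and verify all cross-gadget pairs, your argument has a genuine gap at its center.
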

\begin{proof}
	We first show that \textsc{Pair Super Stability} is NP-hard for symmetric approvals for~$\ell = 2$ and~$\alpha = 1$ even if each agent approves at most four agents in each layer.
	Note that $1$-pair and~$1$-individually super stability are equivalent.
	We reduce from the following problem: Given a graph~$G=(V,E)$ with~$|V| $ even, decide whether there is a partition of~$V=V'\cupdot V''$ such that in $G[V']$ and in $G[V'']$ all vertices have degree one. Schaefer \cite{DBLP:conf/stoc/Schaefer78} in Theorem 7.1 proved that this problem is NP-hard even on cubic graphs.
	Note that $1$-pair super stability requires for each unmatched pair of agents that there is one layer where the two agents do not approve each other and at least one of the two is happy.
	
	From $G=(V,E)$, we construct an instance of \textsc{Pair Strong Stability} with two layers and $\alpha=1$ as follows.  
	For each vertex $v\in V$, we introduce three agents $v^1,v^2$ and~$v^*$. 
	In the first layer, agents~$v^1$ and $w^1$ approve each other and $v^2$ and $w^2$ approve each other for every edge~$\{v,w\}\in E$. 
	In the second layer, agents~$v^1$ and $v^*$ and $v^2$ and $v^*$ approve each other for each $v\in V$. 
	Moreover, we add two agents $a$ and $a'$ who, in both layers, do not approve any agents.
	
	{\bfseries ($\Rightarrow$)} 
	Assume we are given a partitioning $V=V'\cupdot V''$ such that in $G[V']$ and $G[V'']$ all vertices have degree one.
	For each vertex $v$, let $\beta(v)$ be $v$'s only neighbor in $v$'s partition, i.e., if $v\in V'$, then $\beta(v)$ is the unique element in $\{w\in V'\mid \{v,w\}\in E\}$ and if $v\in V''$, then $\beta(v)$ is the unique element in $\{w\in V''\mid \{v,w\}\in E\}$. 
	We now match $v^1$ to $\beta(v)^1$ and $v^2$ to $v^*$ if~$v\in V'$~and symmetrically $v^2$ to $\beta(v)^2$ and $v^1$ to~$v^*$ if $v\in V''$. 
	We also match $a$ and $a'$.
	Note that no vertex~$v^*$ can be part of a blocking pair: 
	It approves only two agents in the second layer and no agent in the first layer. 
	Further, agent~$v^*$ is matched to one of the agents it approves in the second layer, while the other agent it approves is happy in the first layer. 
	Moreover, $a$ and~$a'$ are not part of any blocking pair since all other agents are happy in one of the two layers.
	Finally, for each edge $\{v,w\}\in E$, it either holds that $v^1$ and $w^1$ are matched to each other or that one of the two is happy in the second layer, implying that $v^1$ and $w^1$ cannot be a blocking pair in the second layer (an analogous statement also holds for $v^2$ and $w^2$). 
	It follows that the constructed matching is $1$-pair super stable.
	
	{\bfseries ($\Leftarrow$)} 
	Let $M$ be a $1$-pair super stable matching.
	Note that we have an even number of agents (as $|V|$ is even).
	Since $a$ and~$a'$ do not approve any agent, these two are unhappy.
	If these two are not matched to each other in $M$, then they form a blocking pair.
	It follows that $\{a, a'\} \in M$.
	Moreover, all other agents must be happy in at least one layer, since otherwise such an agent and $a$ (or $a'$) form a blocking pair.
	Hence, for each $v\in V$, agent~$v^*$ needs to be matched either to $v^1$ or $v^2$. 
	Let $V'$ contain all $v\in V$ with $\{v^1, v^*\} \in M$ and let $V''=V\setminus V'$. 
	Note that as each agent is happy in one layer it needs to hold for each $v\in V'$ that $v^1$ is matched to an agent~$w^1$ with $w\in V'$ such that $\{v,w\}\in E$, implying that each vertex in $G[V']$ has degree at least one. 
	Assume now for the sake of contradiction that there is some $u\neq w\in V'$ with $\{v,u\}\in E$, then~$v^1$ and~$u^1$ are both happy in the first layer and unhappy in the second layer, implying that they form a blocking pair. 
	It follows that all vertices in $G[V']$ have degree one and the same follows for~$G[V'']$ analogously.
	
	We next generalize this result to NP-hardness for \textsc{Pair/Individual Super Stability} for any~$\ell \geq 2$ and any~$\alpha \leq \ell / 2$.
	To this end, we copy both layers from the previous reduction~$\alpha$ times and add~$\ell - 2 \alpha$ \emph{empty} layers, that is, layers in which no agent approves another agent.
	Note that any pair of unmatched agents is a blocking pair in any empty layer.
	Hence, in order for a matching to be~$\alpha$-pair or~$\alpha$-individual super stable, it needs to be~$1$-pair or~$1$-individual super stable in the original construction.
	On the other hand, if a matching is~$1$-pair ($1$-individual) super stable, then it is~$\alpha$-pair and~$\alpha$-individually stable in the modified construction.
	Thus, the modified reduction is a yes-instance if and only if the original instance is a yes-instance.
	Observe that the approvals are still symmetric in the adjusted reduction and each agent still approves at most four agents in each layer.
	\end{proof}

Again, for certain combinations of $\alpha$ and $\ell$, tractability can be regained. 
\begin{restatable}{theorem}{superpairpos}
	\label{th:super-pair-pos}
	For~$\alpha > \ell / 2$ and symmetric approvals, \textsc{Individual Super Stability} is polynomial-time solvable and \textsc{Pair Super Stability} is~FPT parameterized by~$\ell$.
	\textsc{Pair Super Stability} with symmetric approvals is polynomial-time solvable if~$\alpha > 2\ell / 3$.
\end{restatable}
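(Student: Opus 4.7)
My plan starts from the following characterization, valid under symmetric approvals: a pair $\{a,c\}\notin M$ blocks $M$ in layer $i$ under super stability iff $a$ and $c$ mutually approve each other in layer $i$, or both are unhappy in layer $i$. Writing $\tau(a,b):=|\{i:\{a,b\}\in E_i\}|$ and $L_{a,c}:=\{i:\{a,c\}\notin E_i\}$, define the \emph{forcing graph} $E^+:=\{\{a,b\}:\tau(a,b)>\ell-\alpha\}$. For any pair $\{a,c\}\in E^+$ with $\{a,c\}\notin M$, we have $|L_{a,c}|<\alpha$, so the pair blocks in more than $\ell-\alpha$ layers, contradicting either the pair or individual super-stability condition. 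Hence $E^+\subseteq M$, forcing $E^+$ to be a matching. The common first step of all three algorithms is to verify this and otherwise return ``no'', then fix $M\supseteq E^+$; let $V^+:=V(E^+)$ and $A^0:=A\setminus V^+$. Each $a\in V^+$ already has its happiness set $H_a:=\{i:M(a)\in T_a^i\}$ determined.

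For Part 1 (Individual Super Stability, $\alpha>\ell/2$, polynomial time), individual safety decouples per agent: $\{a,c\}\notin M$ is safe iff $|H_a\cap L_{a,c}|\geq\alpha$ or $|H_c\cap L_{a,c}|\geq\alpha$. For pairs with $a\in V^+$ the first condition is already decidable from the fixed partner; an uncovered pair forces $M(c)$ to lie in the precomputed set $Y_{a,c}:=\{b:|\{i\in L_{a,c}:b\in T_c^i\}|\geq\alpha\}$. Intersecting such constraints over all relevant $a$ yields a restricted partner set for each $c\in A^0$. The intra-$A^0$ pairs give disjunctive constraints $M(a)\in X(a,c)\lor M(c)\in X(c,a)\lor M(a)=c$; the key technical lemma is that for $\alpha>\ell/2$ and symmetric approvals these reduce to an admissible-edge set in an auxiliary graph on $A^0$, so the overall problem reduces to maximum matching.

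For Part 3 (Pair Super Stability, $\alpha>2\ell/3$, polynomial time), the safety condition $|L_{a,c}\cap(H_a\cup H_c)|\geq\alpha$ combined with $\ell-\alpha<\ell/3$ yields, via inclusion--exclusion, a sufficient \emph{local} criterion involving only $\tau(a,M(a))$, $\tau(c,M(c))$, and $\tau(a,c)$; an auxiliary graph on $A^0$ whose edges encode admissible pairings then reduces the problem to maximum matching. For Part 2 (Pair Super Stability, $\alpha>\ell/2$, FPT in $\ell$), one strategy is to enumerate over the at most $2^\ell$ subsets $S\subseteq[\ell]$ with $|S|\geq\alpha$ and apply \Cref{th:super-global} to find matchings globally super-stable on $S$ (each such matching is $\alpha$-pair super stable). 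For the residual case, a $2^{O(\ell)}\cdot\mathrm{poly}(n)$ enumeration over per-agent happiness templates (grouping agents by their multi-layer approval types), paired with a matching subroutine, handles the remaining instances.

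The main obstacle is Part 1: proving that the disjunctive intra-$A^0$ constraints for individual super stability with $\alpha>\ell/2$ collapse to a matching problem rather than a genuine 2-CSP. The crucial structural observation should exploit that, for $\alpha>\ell/2$ and symmetric approvals, simultaneous unachievability of both $M(a)\in X(a,c)$ and $M(c)\in X(c,a)$ would already place $\{a,c\}$ in $E^+$ and hence into the forced part of $M$, so the remaining disjunctions degenerate into a matroidal structure.
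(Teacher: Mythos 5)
Your first step matches the paper's: the graph you call $E^+$ is exactly the paper's \Gla{}, and the observation that every edge of $E^+$ is forced into $M$ (so $E^+$ must itself be a matching) is correct and is where the paper also starts. But from there the proposal diverges into a constraint-satisfaction formulation whose correctness rests on an unproven ``key technical lemma,'' which you yourself flag as the main obstacle --- and that is precisely where the real content of the theorem lies. The idea you are missing is much simpler than a collapse of disjunctive constraints to a matroidal structure: an agent $u$ of degree zero in $E^+$ approves \emph{any} partner in at most $\ell-\alpha$ layers, so under every matching $u$ is happy in at most $\ell-\alpha<\alpha$ layers and can never single-handedly defend a pair (for individual stability), and two such agents $u,v$ not matched to each other are simultaneously unhappy in at least $\ell-2(\ell-\alpha)=2\alpha-\ell$ common layers (for pair stability with $\alpha>2\ell/3$ this exceeds $\ell-\alpha$). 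Hence at most two degree-zero agents may exist, the candidate matching is essentially unique (forced edges plus possibly one extra pair), and the algorithm is just ``build it and check it.'' Your closing ``crucial structural observation'' --- that simultaneous unachievability of both defences would already place $\{a,c\}$ in $E^+$ --- is false: the obstruction for intra-$A^0$ pairs has nothing to do with mutual approvals; it is that degree-zero agents cannot be happy often enough regardless of whom they marry.

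Part 2 is also not established. Enumerating subsets $S$ with $|S|\ge\alpha$ and invoking \Cref{th:super-global} is incomplete, because an $\alpha$-pair super stable matching need not be super stable in any single layer, let alone globally stable on some $\alpha$-subset (global stability implies pair stability, not conversely). Your fallback ``enumeration over per-agent happiness templates'' gestures at the right idea but is not an argument. The paper's route: among degree-zero agents, any two with the same happiness set who are not matched to each other block in $\alpha\ge\ell-\alpha+1$ layers, so by pigeonhole at most $2^{\ell+1}$ degree-zero agents can exist in a yes-instance; the residual instance then has size bounded in $\ell$ and can be brute-forced. As written, the proposal establishes the forced-edge preprocessing but none of the three algorithmic claims.
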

\begin{proof}
	We first show that an $\alpha$-individual super stable matching can be found in polynomial-time if~$\alpha > \ell / 2$ and approvals are symmetric.
	This algorithm will also be the basis for our results regarding~$\alpha$-pair super stable matchings.
	Consider the graph~\Gla{} with vertex set~$A$ and an edge~$\{u, v\}$ if and only if there are at least~$\ell - \alpha + 1$ layers in which~$u$ and~$v$ approve each other.
	For each edge~$\{u,v\}$ in~\Gla{}, the two agents~$u$ and~$v$ must be matched to each other in any~\mbox{$\alpha$-individual} super stable matching as they otherwise form a blocking pair in at least~$\ell - \alpha + 1$~layers ($u$ and~$v$ find each other at least as good as their matched partners in~$\alpha+1$~layers).
	Hence, if a vertex has degree at least two in~\Gla{}, then there is no solution.
	Moreover, we can match each agent with degree exactly one in~\Gla{} to its unique neighbor, resulting in a matching~$M'$. 
	If there is at most one unmatched agent in~$M'$, then we set~$M := M'$.
	If there are exactly two unmatched agents~$u$ and~$v$, then we set~$M := M' \cup \{\{u, v\}\}$.
	Otherwise, we set~$M:= \emptyset$.
	We claim that if there exists a super stable matching, then $M$ is one.
    In the first two cases, we already observed that $M$ is the only possible super stable matching.
	In the third case, i.e., there are at least three vertices of degree zero in~\Gla, we show there is no super stable matching.
	Assume towards a contradiction that there are at least three agents of degree zero in \Gla{} but there is an~$\alpha$-individual super stable matching~$M^*$.
	Then, there are at least two agents~$u$ and~$v$ that have degree zero in \Gla{} who are not matched to each other in~$M^*$.
	Since~$u$ has degree zero in \Gla{}, it approves $M^* (u)$ in at most~$\ell - \alpha < \ell / 2 < \alpha$ layers.
	Thus,~$u$ finds~$v$ at least as good as~$M^* (u )$ in at least~$\alpha + 1 > \ell - \alpha$ layers.
	Analogously, $v$ finds $u$ at least as good as~$M^* (v)$ in at least~$\alpha + 1 > \ell - \alpha$ layers.
	This contradicts the assumption that~$M^*$~is an~$\alpha$-individual super stable matching.
	Thus, \Gla{} contains at most two agents of degree zero and we can check whether there is an~$\alpha$-individual super stable matching in polynomial time.
	
	We now move to~$\alpha$-pair super stable matchings with $\alpha > 2\ell / 3$.
	We again consider the graph~\Gla{} (the graph containing an edge~$\{u,v\}$ if and only if~$u$ and~$v$ approve each other in at least~$\ell - \alpha + 1$ layers).
	Again, for each edge~$\{u,v\}$ in~\Gla{}, the two agents~$u$ and~$v$ must be matched to each other in any~$\alpha$-pair super stable matching as they otherwise form a blocking pair in~$\ell - \alpha + 1$ layers.
	We may further assume that no vertex has degree at least two in~\Gla.
	We next show that an~$\alpha$-pair super stable matching can be found in polynomial time if~$\alpha > 2 \ell / 3$.
	The argument is similar to the one above, that is, we show by contradiction that if there are at least three agents of degree zero in~\Gla, then there is no solution.
	Let~$M$ be the assumed~$\alpha$-pair super stable matching and let~$u$ and~$v$ be two agents of degree zero in~\Gla{} who are not matched to one another by~$M$.
	Let further~$u'$ be the partner of~$u$ in~$M$ and let~$v'$ be the partner of~$v$ in~$M$.
	Since~$u$ and~$v$ have degree zero in~$G$, they are each happy in at most~$\ell - \alpha < \ell / 3$ layers.
	Hence, there are at least~$\ell / 3 > \ell - \alpha$ layers where they are both unhappy, that is, they both find each other at least as good as their partner in~$M$.
	Thus, they form a blocking pair in at least~$\ell - \alpha + 1$ layers.
	This is a contradiction to the assumption that here is no blocking pair for~$M$ in at least~$\alpha$ layers.
	
	Finally, we show that~\textsc{$\alpha$-Pair Super Stability} is FPT parameterized by~$\ell$ if~$\alpha > \ell / 2$.
	To this end, we again consider the graph \Gla{} and show that if there are more than~$2^{\ell + 1}$ vertices of degree zero in \Gla, then there cannot be a solution.
	Assume towards a contradiction that this is not the case, that is, \Gla\ contains more than $2^{\ell +1} $ vertices and there is an~$\alpha$-pair super stable matching~$M$.
	Then, by the pigeonhole principle, there are three agents who are happy in the matching~$M$ in the same set of layers.
	Moreover, since they all have degree zero in~\Gla, they are all happy in at most~$\ell - \alpha$ layers.
	However, since at least two of the three agents are not matched to one another in~$M$ and since they are unhappy in the same set of at least~$\alpha \geq \ell - \alpha + 1$ layers, they form a blocking pair in~$\ell - \alpha +1$ layers.
	This contradicts the assumption that~$M$ is~$\alpha$-pair super stable. 
\end{proof}

Note that the FPT result for \textsc{Pair Super Stability} with~$\alpha > 2\ell / 3$ and symmetric approvals excludes NP-hardness for constant~$\ell$ (unless~$P = NP$).
We leave it open whether the positive results can be extended to asymmetric approvals.

\section{Similarity Leads to Tractability} \label{se:similarity}
To circumvent the NP-hardness results from the previous sections, we start an investigation into the parameterized complexity of our problems.
We focus on types of ``similarity'' in the agents' preferences and show for three different types that our problems become tractable when preferences are similar. 
Similar preferences might, for instance, occur in cases where layers correspond to objective criteria---the preferences of all agents might then be similar or even identical within a layer. We study this as uniform preferences.
This type of similarity has also already been extensively studied for different stable matching problems in the context of master lists (see e.\,g.\ \cite{DBLP:journals/dam/IrvingMS08,DBLP:conf/wine/BredereckHKN20}).

\paragraph*{Few Agent Types.}
We say that two agents $a$ and $a'$ are of the same agent type if in each layer $i\in [\ell]$, $a$ and $a'$ approve the ``same'' set of agents, i.e.,~${T^i_a \setminus \{a'\}=T^i_{a'} \setminus \{a\}}$ and ${a'\in T^i_{a}}$ if and only if $a\in T^i_{a'}$, and are approved by the same agents, i.e., $a\in T^i_b$ if and only if~$a'\in T^i_b$ for each $b\in A\setminus \{a,a'\}$ (note that the second condition is redundant in the symmetric setting). 
The number of agent types has proven to be a useful parameter for various stable matching problems~\cite{DBLP:journals/corr/abs-2112-05777,DBLP:journals/tcs/MeeksR20}, which is again the case here:
\begin{restatable}{theorem}{types}
	\label{pr:types}
	Let~$\tau$ be the number of agent types.
	\textsc{Global/Pair/Individual} \textsc{Weak/Strong/Super Stability} is solvable in $\mathcal{O}(2^{(\tau+1)^2}\cdot n^4\cdot \ell)$~time.
\end{restatable}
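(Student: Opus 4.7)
The plan is to exploit the fact that agents of the same type are fully interchangeable: their approvals of every other agent in every layer coincide, and they are approved by the same set of agents. Hence swapping two same-type agents cannot turn a stable matching into an unstable one, and a matching~$M$ is determined ``up to stability'' by its type-count matrix $X \in \mathbb{Z}_{\ge 0}^{(\tau+1) \times (\tau+1)}$, where $X_{t,t'}$ counts the pairs of type combination $(t,t')$ in $M$ (with an extra type $t_0$ for ``unmatched''). Before anything else I would preprocess, in $O(n^2\ell)$ time, the type partition and the type-level approval bits $A(t,t',i) \in \{0,1\}$ indicating whether representatives of type $t$ approve representatives of type $t'$ in layer $i$.

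Next, I would enumerate the $2^{(\tau+1)^2}$ binary matrices $Y \in \{0,1\}^{(\tau+1)\times(\tau+1)}$, interpreting $Y_{t,t'}=1$ as ``there is at least one pair of type combination~$(t,t')$ in~$M$''. For each $Y$ the task splits into two independent subchecks performed in $O(n^4\ell)$ time: a \emph{stability check} that decides whether every matching whose profile is $Y$ is stable under the considered notion, and a \emph{realisability check} that decides whether some matching with profile exactly $Y$ exists.

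The stability check is read directly off $A$ and $Y$. Define $U(t,i)=1$ iff type~$t$ has an unhappy representative in layer~$i$, i.e.\ $Y_{t,t_0}=1$ or $Y_{t,t''}=1$ for some $t''\neq t_0$ with $A(t,t'',i)=0$. Then, for weak stability in layer $i$, a cross-type blocking pair of types $(t,t')$ exists iff $A(t,t',i)=A(t',t,i)=1$ and $U(t,i)=U(t',i)=1$; same-type blocking for~$(t,t)$ boils down to whether two distinct unhappy representatives of type $t$ can coexist, which is again expressible from~$Y$ (two distinct entries of $Y_{t,\cdot}$ contributing to unhappiness, or the unmatched slot combined with an unhappy matched slot). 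Analogous finite case analyses yield stability conditions for strong and super stability, and the $\alpha$-global, $\alpha$-pair, and $\alpha$-individual variants are obtained by counting, for each layer or each unmatched type pair, how many layers satisfy the appropriate local predicate. All of these involve only~$O(\tau^2\ell)$ arithmetic on the precomputed table.

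The realisability check reduces to a $b$-matching / degree-constrained subgraph problem: delete every type-$t$-to-type-$t'$ edge with $Y_{t,t'}=0$, fix one representative pair for each $Y_{t,t'}=1$ and one unmatched representative for each $Y_{t,t_0}=1$ (at most $(\tau+1)^2$ forced choices), and then decide whether the remaining agents admit a matching that leaves exactly the prescribed types unmatched. This is solvable via standard general matching in $O(n^3)$ time, which together with iterating over the forced choices fits within the promised $O(n^4)$ per guess. If some $Y$ passes both checks we answer \emph{yes}; otherwise \emph{no}. The main obstacle I anticipate is the delicate distinction between ``exactly one'' and ``at least two'' agents in specific count cells, which affects same-type blocking and witnessing for individual stability; I would handle it by verifying that all relevant multiplicities~$\ge 2$ are already implied by $Y$ on the instance side (since the only single-cell case is an isolated~$Y_{t,\cdot}$ entry, easily detected) and, where needed, by letting the realisability check enforce a cell value of~$\ge 2$ through its $O(n^3)$ matching formulation.
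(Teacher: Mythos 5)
Your proposal follows essentially the same route as the paper: enumerate the $2^{(\tau+1)^2}$ boolean profiles of type combinations, decide stability of a profile at the level of types, and decide realisability of the profile by a polynomial-time matching computation. The differences are largely presentational. The paper normalises to perfect matchings (padding with a dummy agent if $n$ is odd, and using that matching an agent to a disapproved partner is equivalent to leaving it unmatched), which removes your explicit unmatched column $t_0$ and turns your exact-profile realisability test into a plain perfect-matching test in a compatibility graph $G^*$; and instead of hand-written type-level predicates it instantiates each candidate profile as a miniature concrete instance with exactly one matched pair per allowed type combination and runs the ordinary stability check there. (Also, your worry about ``iterating over the forced choices'' of representatives is unnecessary: by interchangeability of same-type agents, one arbitrary choice of representatives works if and only if some choice does.)

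The one substantive weakness is precisely the obstacle you flag at the end, and your proposed fix does not close it. Whether a cell of the profile is realised by one pair or by at least two is \emph{not} determined by $Y$: it is forced by the agent counts together with the realisability constraints (for instance, three agents of type $t$ whose only allowed partner types are $t'$ and $t''$ with $n_{t'}=2$ and $n_{t''}=1$ force multiplicity two in the cell $(t,t')$). Since two same-type agents matched into the same cell can form a blocking pair (under weak or strong stability whenever the type approves itself but not the partner type, and under super stability essentially unconditionally), stability is not a function of the boolean profile alone; hence your two subchecks are not independent, and ``every matching whose profile is $Y$ is stable'' is not a well-defined predicate on $Y$. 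A clean repair inside your framework is either to enumerate profiles over $\{0,1,\geq 2\}$ per cell (still FPT in $\tau$, at the cost of the constant in the exponent) or to argue that a realisation minimising every cell multiplicity is a canonical witness and to couple the stability test to the outcome of the realisability test. For what it is worth, the paper's write-up leaves the same corner case implicit: its auxiliary instance realises every cell with multiplicity exactly one, while the perfect matching it extracts from $G^*$ need not.
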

\begin{proof}
	The general approach of this proof is similar to Proposition 5 of Boehmer et al. \cite{DBLP:journals/corr/abs-2112-05777}.
	Let~$\mathcal{I}$~be the given instance.
	If the number of agents is odd, then we modify $\mathcal{I}$ by inserting a new agent $d$ that does not approve any agent and is not approved by any agent in each layer. 
	Note that doing so does not impact the existence of a stable matching: 
	For all our stability notions, if a currently unmatched agent~$a$ forms a blocking pair with an agent $b$ in some layer then also after matching $a$ to $d$, pair~$\{a,b\}$ is still blocking in this layer. 
	Moreover, a matching that is stable in some layer can be extended to a matching in the modified instance that is stable in this layer by matching an unmatched agent $a$ to $d$. 
	Assume, for the sake of contradiction, that there is a blocking pair $\{d,b\}$ in the resulting matching for some $b\in A$, then also the pair~$\{a,b\}$~is blocking in the original matching.
	Thus, we can assume without loss of generality that the number of agents is even and thus can restrict our attention to finding a perfect matching. 
	
	Let $\tau$ be the number of agent types in the original instance and
	let $T$ be the set of agent types in the modified instance (note that $\tau+1=|T|$) and for $t\in T$, let $A_t\subseteq A$ denote the set of agents of type $t$.
	
	We iterate over all undirected graphs~$G$ with self loops on~$T$  where each vertex is incident to at least one edge (there are $\mathcal{O}(2^{(\tau+1)^2})$ such graphs).
	We say that a matching $M$ is \emph{compatible} with $G$ if $\{a,a'\}\in M$ with $a\in A_t$ and $a'\in A_{t'}$ for some $t,t'\in T$ only if $\{t,t'\}\in E(G)$. 
	We reject the graph~$G$ if a matching which is compatible with $G$ can be unstable. 
	To check this, we create a new instance $\mathcal{J}$ of the considered problem and matching $N$ as follows.
	For each edge $\{t,t'\}\in E(G)$, we create an agent $a_{t,t'}$ and an agent $a_{t',t}$ and match them to each other in $N$.
	Concerning the agent's preferences, an agent~$a_{t,t'}$ for $t,t'\in T$ approves all agents $a_{t'',t'''}$ for $t'',t'''\in T$ such that agents of type $t$ approve agents of type $t''$ in $\mathcal{I}$. 
	Subsequently, we check whether $N$ fulfills the desired multilayer stability criterion in $\mathcal{J}$. 
	If this is not the case, then we continue with the next graph $G$; 
	otherwise, we create a new graph $G^*$ on~$A$ where we connect two agents $a,a'\in A$ with $a\in A_t$ and~$a'\in A_{t'}$ for some $t,t'\in T$ if $\{t,t'\}\in E(G)$. 
	We then check whether there is a perfect matching $M^*$ in $G^*$; 
	if this is the case we return yes and otherwise we continue with the next graph $G$. 
	
	Assume that the algorithm returns yes, then we claim that the computed perfect matching~$M^*$ in $G^*$ for graph $G$ is a stable matching in $\mathcal{I}$. 
	Assume for the sake of contradiction that there is a blocking pair $\{a,\hat{a}\}$ for $M^*$ in $\mathcal{I}$ under the relevant multilayer stability criterion with~$a\in A_t$ and $\hat{a}\in A_{\hat{t}}$ and that $a$ is matched to an agent $a'$ with~$a'\in A_{t'}$ and $\hat{a}$ is matched to an agent $\hat{a}'$ with $\hat{a}'\in A_{\hat{t}'}$ in~$M^*$. 
	Then, agents $a_{t,t'}$ and $a_{\hat{t},\hat{t}'}$ form a blocking pair for the matching $N$ in $\mathcal{J}$ constructed in the iteration where $M^*$ was constructed, as
	\begin{itemize}
		\item $a_{t,t'}$ approves $a_{\hat{t},\hat{t}'}$ if and only if $a$ approves $\hat{a}$,
		\item $a_{\hat{t},\hat{t}'}$ approves $a_{t,t'}$ if and only if $\hat{a}$ approves $a$,
		\item $a_{t,t'}$ approves~$N(a_{t,t'})= a_{t',t}$ if and only if $a$ approves~$M^*(a)=a'$, and
		\item $a_{\hat{t},\hat{t}'}$ approves $N(a_{\hat{t},\hat{t}'})= a_{\hat{t}',\hat{t}}$ if and only if $\hat{a}$ approves~$M^*(\hat{a})=\hat{a}'$.
	\end{itemize}
	This is a contradiction to a matching being returned for graph~$G$ as the algorithm rejected~$G$.
	
	Assume that there is a stable matching $M$ in $\mathcal{I}$.
	Without loss of generality we can assume that $M$ is perfect.
	Let $G$ be the graph where two types $t,t'\in T$ are connected if and only if there is an agent of type $t$ matched to an agent of type~$t'$ in $M$.
	We claim that $G$ was not rejected by the algorithm. 
	Assume for the sake of contradiction that $G$ was rejected because agents $a_{t,t'}$ and $a_{\hat{t},\hat{t}'}$ form a blocking pair. 
	Then, for such agents to exist there needs to be an agent $b$ of type $t$ that is matched to an agent of type $t'$ in $M$ and there needs to be an agent $c$ of type $\hat{t}$ that is matched to an agent of type $\hat{t}'$ in~$M$. 
	As
	\begin{itemize}
		\item $b$ approves $c$ if and only if $a_{t,t'}$ approves $a_{\hat{t},\hat{t}'}$,
		\item $b$ approves $M(b)$ if and only if $a_{t,t'}$  approves $N(a_{t,t'})$,
		\item $c$ approves $M(c)$ if and only if $a_{\hat{t},\hat{t}'}$  approves $N(a_{\hat{t},\hat{t}'})$, and
		\item $c$ approves $b$ if and only if $a_{\hat t, \hat t'}$ approves $a_{t, t'}$,
	\end{itemize}
	it follows that $b$ and $c$ block $M$ in $\mathcal{I}$ under the respective stability criterion, a contradiction. 
	Further, $M$ is clearly a perfect matching in $G^*$ and, thus, the algorithm returns yes.
	
	As there are $\mathcal{O}(2^{(\tau+1)^2})$ graphs $G$ over which we iterate and constructing the instance $\mathcal{J}$ and matching $N$ and checking whether $G^*$ admits a perfect matching can be done in $\mathcal{O}(n^4\cdot \ell)$ for each graph $G$, the overall running time of~$\mathcal{O}(2^{(\tau+1)^2}\cdot n^4\cdot \ell)$ follows. 
\end{proof}

\paragraph{Uniform Approvals.} 
Chen et al. \cite{DBLP:conf/sigecom/ChenNS18} showed polynomial-time solvability for some problems if agents' preferences are uniform, i.e., when within a layer all agents have the same preferences.
Hence, each agent is either approved by all or by no other agents in a layer.
For symmetric approvals, the situation becomes simple:
In each layer, either every pair of agents approves each other or every pair disapproves each other. 
Thus, for uniform symmetric approvals, all our problems are in P. 
For asymmetric approvals, we show that all our problems are in FPT with respect to $\ell$ using \Cref{pr:types} (it is open which of our problems become polynomial-time solvable for uniform asymmetric approvals).
\begin{corollary}
	\textsc{Global/Pair/Individual Weak/Strong/Super Stability} is in FPT wrt.~$\ell$ if in each layer each agent is either approved by all other agents or by no other agent.
\end{corollary}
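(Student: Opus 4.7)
The plan is to bound the number of agent types $\tau$ by a function of $\ell$ and then invoke \Cref{pr:types}, which already provides an $\mathcal{O}(2^{(\tau+1)^2}\cdot n^4\cdot \ell)$-time algorithm for all the stability notions in question.

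First I would formalize the uniformity assumption by defining, for each layer $i\in[\ell]$ and each agent $a\in A$, an indicator $b^i_a \in \{0,1\}$ which equals $1$ iff $a$ is approved by all other agents in layer $i$ and $0$ otherwise. Since in layer $i$ every agent $a'\neq a$ either approves all other agents or none, the set of agents that $a'$ approves is exactly $T^i_{a'} = \{a\in A\setminus\{a'\} : b^i_a = 1\}$. Consequently, each agent is fully described by its \emph{signature} $\sigma(a) := (b^i_a)_{i\in[\ell]} \in \{0,1\}^\ell$, and there are at most $2^\ell$ distinct signatures.

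Next I would show that any two agents $a,a'$ with $\sigma(a) = \sigma(a')$ are of the same agent type in the sense of \Cref{pr:types}. By the characterization of $T^i_\cdot$ above, both $T^i_a\setminus\{a'\}$ and $T^i_{a'}\setminus\{a\}$ equal $\{a''\in A\setminus\{a,a'\} : b^i_{a''}=1\}$; the reciprocity condition $a'\in T^i_a \Leftrightarrow a\in T^i_{a'}$ reduces to $b^i_{a'} = b^i_a$, which holds by assumption; and for any third agent $b\notin\{a,a'\}$, membership $a\in T^i_b$ (resp.\ $a'\in T^i_b$) depends only on $b^i_a$ (resp.\ $b^i_{a'}$), and these coincide. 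Hence the number of agent types satisfies $\tau \leq 2^\ell$.

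Plugging this bound into \Cref{pr:types} yields a total running time of $\mathcal{O}(2^{(2^\ell+1)^2}\cdot n^4\cdot \ell)$, establishing fixed-parameter tractability in $\ell$ for every problem handled by that theorem. I do not expect a real obstacle here: the only step that requires any care is making the implication ``uniform approvals force $T^i_{a'}$ to depend only on the signature vector'' fully explicit, and once this is in place, the type-collapse argument and the invocation of \Cref{pr:types} are immediate.
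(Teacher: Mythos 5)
Your proposal is correct and follows essentially the same route as the paper: encode each agent by the $\ell$-bit vector recording whether it is approved by everyone in each layer, observe that this vector determines the agent's type (so $\tau\le 2^{\ell}$), and invoke \Cref{pr:types}. Your write-up merely spells out the type-equality verification in more detail than the paper does.
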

\begin{proof}
	For $i\in [\ell]$ and $a\in A$, let $s^i_a$ be one if agent $a$ is approved by all agents in layer $i$. 
	Here an agent $a\in A$ is fully characterized by $(s^i_a)_{i\in[\ell]}$, as, within each layer, all agents approve the same agents. 
	Thus, there are only $2^{\ell}$ different agent types.
	The statement now follows from \Cref{pr:types}.
\end{proof}

\paragraph{Few Agents with Changing Preferences.}

Lastly, we turn to situations with only few ``changing'' agents, i.e., agents that do not approve the same set of agents in each layer. 
We focus on symmetric approvals.
The crucial observation here is that non-changing agents cannot be involved in a blocking pair in any layer because this pair would then block all layers:
\begin{restatable}{theorem}{changingprefs}
	\label{th:changing-prefs}
	Let $\beta$ be the number of agents whose approval sets are not identical in all layers. 
	For symmetric approvals, \textsc{Global/Pair/Individual Weak/Strong/Super Stability} is in FPT wrt.~$\beta$. 
\end{restatable}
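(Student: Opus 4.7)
The plan is to exploit the structural consequences of having only~$\beta$ agents whose approvals vary across layers, combined with the symmetric-approvals assumption. Let~$B \subseteq A$ denote the set of $\beta$ ``changing'' agents and~$N = A \setminus B$ the remaining ``non-changing'' agents. The first step would be to prove the following structural lemma: under symmetric approvals, for every pair~$\{u,v\}$ with at least one endpoint in~$N$, mutual-approval status is layer-invariant. Indeed, if~$u \in N$, then~$u$ approves $v$ in every layer or in none, and symmetry forces~$v$ to mirror~$u$ in every layer. A consequence is that the happiness of any~$N$-agent under a fixed matching depends only on the identity of its partner, not on the layer; only approvals strictly inside~$B$---and hence only the happiness of $B$-agents matched to other $B$-agents---can genuinely depend on the layer.

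From this observation I would deduce: blocking of any pair contained in~$N$ is layer-invariant, so $M|_N$ must be stable in the respective single-layer sense on the common $N$-approval graph; blocking of an $N$-$B$ pair depends on the layer only through the $B$-side happiness; and only $B$-$B$ pairs require a genuinely layer-by-layer analysis. Next I would enumerate all \emph{$B$-configurations}, each specifying, for every~$b \in B$, either a concrete partner in~$B$, or ``matched to an $N$-agent of $B$-type~$T$'' where~$T \subseteq B$ records which subset of~$B$ that $N$-partner approves, or ``unmatched''. The number of configurations is bounded by~$(\beta + 2^\beta + 1)^\beta$, a function of~$\beta$ alone.

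For each configuration the algorithm first verifies internal consistency: enough $N$-agents of each declared $B$-type exist, and the multilayer stability criterion holds on all pairs touching~$B$. The latter is directly checkable because the configuration fixes the happiness of every $B$-agent in every layer. From the configuration one then reads off the side constraints on the $N$-matching: a per-$B$-type count of $N$-agents to be reserved as $B$-partners, and a set~$H$ of non-reserved $N$-agents that must be happy in order that no $N$-$B$ pair blocks in too many layers. The residual task is a \emph{single-layer} stable matching problem on the remaining $N$-agents with some ``must-be-happy'' vertices, which for each of weak, strong, and super stability admits a polynomial-time algorithm by suitable adaptations of the classical single-layer routines referenced earlier in the paper (maximal-matching extension with a covering constraint for weak; modifications of the algorithms of Irving--Manlove and Kunysz for super and strong).

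The main obstacle will be that two $N$-agents sharing a $B$-type may still differ in their approvals \emph{inside}~$N$, so the specific identities of the reserved $N$-agents---beyond their $B$-types---affect the residual $N$-instance. I plan to absorb this by further enumerating, within each $B$-configuration, the $B$-type of each individual reserved slot: an extra factor of~$(2^\beta)^\beta$, still bounded by a function of~$\beta$. Correctness then reduces to an exchange argument: if an overall stable matching~$M^*$ exists, some pattern of reservations matches the $B$-types occurring in~$M^*$, and any completion that satisfies the residual single-layer problem can be lifted to a full stable matching. Putting everything together, the total running time is~$f(\beta) \cdot n^{O(1)}$, yielding the claimed FPT result uniformly for all nine variants.
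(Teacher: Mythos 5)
Your structural lemma is exactly the right starting point and matches the paper's: with symmetric approvals, any pair touching a non-changing agent has layer-invariant mutual-approval status, the happiness of every agent in $N=A\setminus B$ is layer-invariant, and only $B$--$B$ pairs (and the happiness of $B$-agents matched inside $B$) genuinely depend on the layer. The enumeration of $B$-side configurations in time $f(\beta)$ is also in the spirit of the paper's proof, which guesses, for each $b\in B$, its partner inside $B$ or the fact that it is matched outside $B$ (plus, for weak stability, whether $b$ and the agents approving $b$ must be happy).

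The gap is in your two-phase decomposition into ``reserve $N$-agents for $B$'' followed by ``solve a residual single-layer problem on the remaining $N$-agents.'' You correctly identify the obstacle---two $N$-agents with the same $B$-type can have different neighbourhoods inside $N$---but the proposed fix (additionally enumerating the $B$-type of each reserved slot) does not resolve it: the enumeration still does not determine \emph{which} $N$-agent of that type fills the slot, and that identity changes the residual instance. The asserted exchange argument (``any completion of the residual problem can be lifted'') is false in general. Concretely, suppose $a,a'\in N$ have the same $B$-type and some $c\in N$ that is forced into the must-be-happy set $H$ (because it approves a $B$-agent that your configuration leaves unhappy in too many layers) has $a$ as its only neighbour inside $N$. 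Reserving $a$ for a $B$-agent makes the residual problem infeasible, while reserving $a'$ makes it feasible; your algorithm cannot distinguish these cases, so for the ``correct'' configuration it may wrongly reject (or, if it accepts any type-consistent reservation, wrongly output an unstable matching). The same phenomenon is even more acute for strong stability, where every non-isolated vertex of the residual graph must be matched to a neighbour. The repair is to not separate the two phases: build a single graph on $N\cup\{b\in B: b$ guessed to be matched outside $B\}$ whose edges are the (layer-invariant) mutual approvals, and compute in polynomial time one matching that simultaneously covers $H$ and the relevant $B$-agents (via maximum-weight matching) and is then extended to a maximal, respectively vertex-saturating, matching---this is precisely how the paper's proof proceeds. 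With that change the rest of your argument (deriving $H$ from the configuration, checking all pairs touching $B$ directly, single-layer stability for pairs inside $N$) goes through.
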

\begin{proof}
	Let $B\subseteq A$ be the set of agents whose approval sets are not identical in all layers. 
	For~$b \in B$, let $C_b \subseteq A\setminus B$ be the set of agents from~$A\setminus B$ which $b$ approves (in all layers).
	
	We now make a case distinction for the three notions of stability. 
	
	\paragraph{Weak Stability.}
	We create a set $H$ of agents that need to be happy in all layers and a matching $M$ as follows.
	We start with $H:=\emptyset$ and $M:=\emptyset$. 
	For each agent $b\in B$, we guess whether $b$ is matched to an agent from $B$.
	If yes, then we guess to which agent from $B$ agent~$b$ is matched and add the pair to $M$. 
	If no, then we guess whether $b$ needs to be happy in all layers and if yes, add it to $H$. 
	Moreover, in both cases, we guess whether all agents from $C_b$ need to be happy in all layers or not and if yes, add $C_b$ to $H$. 
	If any of our guesses are in conflict with each other (e.\,g., we guess that an agent~$b\in B$ is matched to an agent from $A\setminus B$, yet guessed for some agent~$b'\in B\setminus \{b\}$ that it is matched to $b$), then we reject the current guess. 
	Let~$B'\subseteq B$~be the set of agents currently matched in~$M$.
	
	We create a graph $G$ containing the agents from~$A \setminus B'$ as vertices where we connect agent~${a\in A\setminus B}$ and $a'\in A\setminus B'$ if $a$ and $a'$ approve each other (note that whether~$a$ and~$a'$ approve each other is independent of the layer as $a \notin B$ and approvals are symmetric, and that we do not add edges between two agents from $B\setminus B'$ because we have guessed that they will not be matched to each other).
	We check whether there is a matching $N$ in $G$ that matches all agents from $H$ (by computing a maximum-weight matching). 
	If no such matching exists, then we reject the current guess. 
	If such a matching exists, then we extend~$N$ arbitrarily to a maximal matching in $G$ and set ${M^*:=M\cup N}$. 
	Finally, we return yes if $M^*$ fulfills the required stability criterion and otherwise reject the current guess. 
	
	It remains to prove that if there is a matching $M^*$ that fulfills the required stability criterion, then the algorithm returns yes. 
	We claim that the algorithm returns yes for the following guess starting with $\tilde{M}:=\emptyset$ and $\tilde{H}:=\emptyset$: 
	For each~$b\in B$ with~$M^*(b)\in B$, we guess that $b$ is matched to~$M^*(b)$ and add this pair to~$\tilde{M}$. 
	For each $b\in B$ with $M^*(b)\notin B$, we guess that $b$ is not matched to an agent from $B$ and guess that $b$ is part of $\tilde{H}$ if $b$ approves $M^*(b)$ in some layer (as~$b$ is matched to an agent from $A\setminus B$ and approvals are symmetric, this implies that $b$ approves~$M^*(b)$ in all layers). 
	For each~$b\in B$, we add $C_b$ to $\tilde{H}$ if for each $a\in C_b$, agent~$a$ approves $M^*(a)$ in some layer (as $C_b\subseteq A\setminus B$ this implies that $a$ approves $M^*(a)$ in all layers). 
	
	Let $G$ be the graph constructed based on $\tilde{M}$ and $\tilde{H}$. 
	Note that the set of agents~${B'\subseteq B}$ already matched by~$\tilde{M}$ is exactly the set of agents from~$B$ that are matched to agents from~$B$ in~$M^*$. 
	Moreover, note that all agents from $\tilde{H}$ are matched to an agent they approve in all layers in $M^*$ and are in particular matched to an agent from $A\setminus B'$ in $M^*$.
	Thus, the matching~$M^*$ restricted to $G$ induces a matching that matches all agents from $\tilde{H}$. 
	
	Let $N$ be some matching that matches all agents from $\tilde{H}$ in $G$ (as argued above such a matching is guaranteed to exist) and that is maximal in $G$.
	We add $N$ to $\tilde{M}$. 
	We now claim that $\tilde{M}$ fulfills the desired stability criterion. 
	For the sake of contradiction assume that~$\tilde{M}$~admits a blocking pair $\{a,a'\}$ for the considered multilayer stability notion. 
	We make a case distinction.
	
	If $a,a'\in A\setminus B$, then $a$ and $a'$ can only be blocking if they approve each other in one (and thereby all layers) and if they are either unmatched or matched to an agent they do not approve in $\tilde{M}$. 
	As all agents from $A\setminus B$ are either unmatched or matched to an agent they approve in all layers in $\tilde{M}$, this implies that both $a$ and $a'$ are unmatched in $\tilde{M}$ and thus that $\tilde{M}$ is not maximal in $G$, a contradiction. 
	
	Before considering the case that $a, a' \in B$, we observe that whenever an agent $b\in B$ is happy in some layer~$i$ in~$M^*$, then $b$ is also happy in layer~$i$ in $\tilde{M}$:
	If $b\in B'$, then ${M^*(b)=\tilde{M}(b)}$.
	Otherwise, if $b$ is happy in some layer in~$M^*$, then $b\in \tilde{H}$ and thus $b$ is happy in all layers in $\tilde{M}$. 
	
	If $a,a'\in B$, then they also form a blocking pair for $M^*$, since if they are unhappy in a layer in~$\tilde{M}$, then they are also unhappy in~$M^*$. 
	
	It remains to consider the case $a\in A\setminus B$ and $a'\in B$. 
	First note that $a$ and $a'$ need to approve each other in at least one (and in fact all) layers to be able to form a blocking pair for~$\tilde{M}$.
	As already argued above, $a'$ is happy in $\tilde{M}$ in each layer in which it is happy in $M^*$. 
	Thus for $\{a,a'\}$ not to block~$M^*$, there needs to be a layer where $a$ is happy in $M^*$ but not in~$\tilde{M}$. 
	This implies that $a$ cannot be happy in all layers in $\tilde{M}$ and thus cannot be part of~$\tilde{H}$. 
	As~$a\in C_{a'}$, this implies that there is some $a^*\in C_{a'}$ that is not happy in all (and, as $a^*\in A \setminus B$, not happy in any) layer in $M^*$. 
	However, as $a'$ approves both $a$ and $a^*$ in all layers, this implies that $\{a^*, a'\}$ blocks $M^*$ under the considered multilayer stability notion, a contradiction. 
	
	\paragraph{Strong and Super Stability.}
	We create a matching $M$ as follows.
	We start with $M:= \emptyset$. 
	For each agent $b\in B$, we guess whether $b$ is matched to an agent from $B$.
	If yes, then we guess to which agent from $B$ agent $b$ is matched and add the pair to $M$. 
	We reject the current guess if it includes a conflict. 
	Let $B'\subseteq B$ be the set of agents currently matched by $M$.
	
	We create a graph $G$ containing the agents from~$A \setminus B'$ as vertices.
	Moreover, we connect agents $a\in A\setminus B$ and $a'\in A\setminus B'$ if $a$ and $a'$ approve each other in some (and thereby all) layers. 
	We compute a maximum-cardinality matching $M'$ in $G$ and add to $M'$ an arbitrary mapping of all agents from $A\setminus B'$ that are currently unmatched in $M'$ (potentially leaving one agent unmatched).
	We return yes if $M'\cup M$ fulfills the desired stability criterion and reject the current guess otherwise. 
	
	It remains to show that if there is a matching $M^*$ that fulfills the desired stability criterion, then the algorithm returns yes.
	We claim that this is the case for the following guess: 
	We guess for each $b\in B$ with $M^*(b)\in B$ that $b$ is matched to $M^*(b)$ and add this pair to $M$ and for all $b\in B$ with $M^*(b)\notin B$ that $b$ is not matched to an agent from $B$. 
	Let $G$ be the graph constructed from this guess and $M'$ a maximum-cardinality matching in $G$ extended by an arbitrary matching of so-far unmatched agents. 
	Set $\hat M:=M\cup M'$. 
	We claim that $\hat M$ fulfills the desired stability criterion. 
	
	\paragraph{Strong Stability.}
	Let $x$ be the number of agents from $A\setminus B'$ that are incident to at least one edge in $G$.
	We claim that $|M^*\cap E(G)| = \frac{x}{2}$ (i.e., $M^* \cap E(G)$ is a perfect matching of the set of agents from~$A \setminus B'$ which are incident to at least one edge in~$G$).
	If this is not true, then there is either an agent $a\in A\setminus B$ that is not matched to an agent it approves in~$M^*$  and which approves an agent $a'\in A\setminus B'$ or an agent~$a\in B\setminus B'$ that is not matched to an agent it approves in $M^*$  and which approves an agent $a'\in A\setminus B$.
	In both cases, $\{a,a'\}$ form a blocking pair in all layers. 
	Thus, $M^*$ cannot fulfill the desired stability criterion, a contradiction.
	Because~$M'$ is a maximum-cardinality matching in~$G$, it follows that $|M' | = \frac{x}{2}$.
	In other words, $M'$ matches all vertices in $G$ that are incident to at least one edge.
	Thus, all agents from $A\setminus B$ that approve at least one agent from $A\setminus B'$ and all agents from $B\setminus B'$ that approve at least one agent from~$A\setminus B$ are happy in all layers in $\hat M$.
	Thus, any blocking pair for~$\hat M$ needs to either involve an agent from $B'$ or be between two agents from $B\setminus B'$.
	We make a case distinction.
	
	First, assume that $a,a'\in B'$ form a blocking pair for the considered multilayer stability notion. 
	However, as $\hat M(a)=M^*(a)$ and $\hat M(a')=M^*(a')$ from this it follows that $\{a,a'\}$ also blocks $M^*$, contradicting the stability of~$M^*$.
	
	Second, assume that~$a\in B'$ and~$a'\in A\setminus B'$.
	If~$a'$~is happy in all layers in~$\hat M$, then~$a$ and~$a'$~also block~$M^*$ as~$M^*(a')=M(a')$. 
	Otherwise, by our above observation it follows that~$a'$~does not approve any agents from~$A\setminus B'$ and as all agents from $B'$ are matched the same in~$M$ and~$M^*$, it follows that $a'$ is unhappy in all layers in $M^*$. 
	As $\hat M(a)=M^*(a)$ it follows that $\{a,a'\}$ also blocks $M^*$. 

	Third, assume that $a,a'\in B\setminus B'$.
	By the construction of our guess, both $a$ and $a'$ are not matched to an agent from $B$ in~$M^*$.
	If $a$ or $a'$ is unhappy in some (and thereby all) layers in~$M$, then as we have observed above $a$ and $a'$ can only be unhappy in some layer if they do not approve any agents from $A\setminus B$. 
	Thus, as both are not matched to agents from $B$ in $M^*$, this agent is also unhappy in all layers in $M^*$.
	Consequently, if~$a$ and~$a'$ are a blocking pair in a layer in~$\hat M$, then they are also a blocking pair in the same layer in~$M^*$.
	
	\paragraph{Super Stability.}
	Note that for $M^*$ to be stable, all agents from $A\setminus B$ can approve at most one agent in any (and by construction all) layers (as in case they approve two agents, they form a blocking pair together with the one they are not matched to in all layers).
	Moreover, each agent from~$A\setminus B$ that approves at least one agent needs to be matched to it, as otherwise they form a blocking pair in all layers. 
	Thus, $M^*$ and $\hat M$ both contain all edges from $G$.
	Moreover, for each~$b\in B'$, it holds that $M^*(b)= \hat M(b)$.
	Note further that in $G$ only at most two agents can have no neighbor: 
	If there are three such agents, then as for each~$b\in B\setminus B'$ it holds that~$M^*(b)\notin B$, there are three agents in~$M^*$ that are unhappy in all layers and thus at least one pair of these three agents is not matched to each other and forms a blocking pair in all layers. 
	Further note that if there are two agents in~$G$ that have no neighbor, then $M^*$ matches them together, as they otherwise form a blocking pair in all layers. 
	As $\hat M$ in the end matches so-far unmatched agents together, $\hat M$ and $M^*$ are thus identical, implying that $\hat M$ is stable. 
\end{proof}

For asymmetric approvals, obtaining even an XP-algorithm is not possible for weak stability.%
\begin{restatable}{proposition}{changingprefshard}
	\label{th:changing-prefs-hard}
	\textsc{All-Layers Weak Stability} is NP-hard for any~$\ell \geq 2$ and \textsc{Individual Weak Stability} is NP-hard for $\ell=\alpha=2$.
	Both results hold for bipartite approvals, even if there is only one agent whose approval set differs between the two layers. 
\end{restatable}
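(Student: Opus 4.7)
The plan is to reduce from \textsc{3-Sat} (or a suitable restriction thereof) to both problems. The construction will consist of variable gadgets and clause gadgets in which every agent approves the same set of agents in both layers, together with a single ``coordinator'' agent $a^*$ whose approval set is the only one differing between the two layers. The layer-$1$ approvals of $a^*$ will enforce one half of the clause constraints and its layer-$2$ approvals the other half: if the Boolean assignment encoded by the matching fails to satisfy some clause, then a blocking pair involving $a^*$ will appear in one of the two layers.

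The key observation driving the reduction is: since only $a^*$ has layer-dependent approvals, for every pair $\{u,v\}$ with $u,v \neq a^*$, whether $\{u,v\}$ blocks a given matching $M$ does not depend on the layer (each agent's approval of the other and each agent's happiness in $M$ are layer-independent). Hence any all-layers weakly stable matching, restricted to its non-$a^*$ pairs, must already be weakly stable in a single fixed approval graph, and the genuine two-layer constraint comes entirely from pairs involving $a^*$. I would first design variable gadgets whose (layer-invariant) internal weakly stable matchings correspond to the two Boolean assignments of the variable, and literal/clause gadgets that ensure each clause contains an unmatched ``witness'' agent unless one of its literals is set to true. Attaching $a^*$ so that it approves the ``true-witnesses'' in layer $1$ and the ``false-witnesses'' in layer $2$ then forces every unsatisfied clause to produce a blocking pair with $a^*$ in one of the layers. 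For the forward direction I would build $M$ explicitly from a satisfying assignment, matching $a^*$ to a neutral partner in $T_{a^*}^1 \cap T_{a^*}^2$ that makes $a^*$ happy in both layers; for the backward direction I would use the layer-independence observation above to extract a consistent Boolean assignment and argue that $a^*$'s blocking condition forces it to satisfy every clause. Bipartiteness will be maintained by placing $a^*$ and the witness agents on opposite sides of the partition from the literal agents.

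For \textsc{Individual Weak Stability} with $\ell = \alpha = 2$ the same construction carries over, because for every unmatched pair $\{u,v\}$ with $u,v \neq a^*$ the individual-stability condition coincides with the all-layers weak-stability condition by exactly the same layer-independence argument; pairs involving $a^*$ only need a small amount of bookkeeping (designating which side of each such pair serves as the ``witness'' agent $b$ from the definition of $\alpha$-individual stability). NP-hardness of \textsc{All-Layers Weak Stability} for $\ell \geq 3$ then follows by padding the construction with $\ell - 2$ empty layers in which no agent approves any other, since these layers introduce no blocking pairs and still leave $a^*$ as the unique agent with layer-dependent approvals. The main obstacle is designing the variable and clause gadgets so that they encode both Boolean choice and clause satisfaction using only layer-invariant approvals among non-$a^*$ agents while still letting $a^*$'s two distinct approval sets distinguish satisfying from non-satisfying assignments; the gadgets from Theorem~\ref{th:all-layer-weak} cannot be reused directly since they rely on many agents changing preferences across layers, so the central technical work is funneling all the layer variability through a single~agent.
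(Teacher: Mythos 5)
There is a genuine gap here, and one of your concrete design choices would break the reduction. You correctly observe that when only $a^*$ changes between layers, every pair not involving $a^*$ blocks layer-independently, so the non-$a^*$ part of the instance collapses to a single-layer weak-stability (i.e., maximal-matching) problem and all two-layer tension must flow through $a^*$. But your plan to match $a^*$ ``to a neutral partner in $T_{a^*}^1 \cap T_{a^*}^2$ that makes $a^*$ happy in both layers'' is self-defeating: under weak stability an agent that is happy in a layer can never be part of a blocking pair in that layer, so if such a partner exists in the instance, then in the backward direction a matching may simply pair $a^*$ with it, rendering every $a^*$-constraint vacuous; the remaining layer-invariant part always admits a maximal matching, so unsatisfiable formulas would still map to yes-instances. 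To extract any constraint from $a^*$ you must instead \emph{force} $a^*$ to be unhappy in some layer. But an unhappy $a^*$ in layer $i$ forces \emph{all} of its mutually-approving layer-$i$ neighbours to be happy --- a conjunction --- whereas clause satisfaction is a disjunction. A single agent's two approval sets cannot produce a per-clause OR, so the disjunction would have to be built into the layer-invariant gadgets; you flag exactly this gadget design as the ``main obstacle'' and leave it open, which is where the actual proof content lives.

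The paper sidesteps 3-SAT entirely and reduces from \textsc{Minimum Maximal Matching} on bipartite graphs, which natively matches the conjunctive structure: the unique changing agent $b$ is forced (via an auxiliary agent $c$) into a partnership it approves only in layer~$1$, so $b$ is unhappy in layer~$2$ and thereby forces all $n_V-k$ penalizing agents to be matched to vertex agents, leaving at most $k$ vertex agents free to form a matching that stability then forces to be maximal. Note also that by \Cref{th:changing-prefs} the problem with one changing agent is polynomial-time solvable for \emph{symmetric} approvals, so any correct construction here must be genuinely asymmetric (in the paper, $c$ and the penalizing agents approve $b$ in all layers while $b$ reciprocates only in some); your proposal never engages with this necessity. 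If you want to salvage your single-coordinator architecture, replace the 3-SAT source by a problem whose hardness survives being phrased as ``find a maximal matching in a fixed graph in which a prescribed vertex set is saturated,'' which is essentially what \textsc{Minimum Maximal Matching} provides.
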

\begin{proof}
	We start by focusing on all-layers weak stability and reduce from the NP-hard \textsc{Minimum Maximal Matching} problem on bipartite graphs, where given a bipartite graph $G$ and an integer $k$, the question is whether there is a maximal matching containing at most~$k$~edges \cite{yannakakis1980edge}.
	
	Given an instance $\mathcal{I}=(G=(V\cupdot U,E),k)$ with $|U|=|V|=n_V$ of \textsc{Minimum Maximal Matching}, we construct an instance $\mathcal{J}$ of \textsc{All-Layers Weak Stability} as follows. 
	We assume without loss of generality that a maximum matching in~$G$ contains at least~$k$ edges.
	For each vertex~${v\in V\cup U}$, we introduce a vertex agent $a_v$. 
	Moreover, we add~${n_V-k}$~penalizing agents~$p_1,\dots, p_{n_V-k}$.
	Lastly, we add two special agents $b$ and $c$. 
	In all layers, agents $a_v$ and~$a_{v'}$ for $v,v'\in V$ approve each other if $\{v,v'\}\in E$. 
	Each penalizing agent approves all agents from~$\{a_v\mid v\in V \}$ and~$b$ in all layers. 
	Agent~$c$ approves agent $b$ in all layers. 
	Agent $b$ approves agent $c$ in the first layer and all penalizing agents in all other layers. 
	Note that all agents except agent $b$ have the same preferences in all layers. 
	
	$(\Rightarrow)$ Given a maximal matching $M$ containing at most $k$ edges in $G$, we construct an all-layers weakly stable matching~$N$ in $\mathcal{J}$ as follows. 
	First, we assume without loss of generality that~$M$ contains exactly~$k$ edges (as maximal matchings are interpolating, i.e., whenever there exists maximal matchings of sizes~$i$ and~$j$ ($i < j$), respectively, then there exists a maximal matching of size~$i^*$ for all $i< i^* <j$).
	Let $v_1,\dots, v_{n_V-k}$ be the $n_V-k$ vertices from $V$ that are unmatched in $M$. 
	Let $N:=M\cup\{\{v_i,p_i\}\mid i\in [n_V-k]\}\cup \{\{b,c\}\}$. 
	Note that all agents except~$b$~and agents corresponding to vertices that are unmatched by~$M$ are happy in all layers in $N$. 
	As $b$ does not approve any vertex agent, it follows that a blocking pair needs to consist of two vertex agents corresponding to vertices unmatched by $M$.
	For two such agents $a_v$ and $a_{v'}$ to be blocking, they need to approve each other in some (and by construction also all) layers.
	Thus,~$\{v,v'\}\in E$~needs to hold and $M$ leaves both $v$ and $v'$ unmatched. 
	However, this contradicts the maximality of $M$ in $G$ and thus no such pair of vertex agents can exist.
	
	$(\Leftarrow)$ 
	Given an all-layers weakly stable matching $N$ in $\mathcal{J}$, we construct a maximal matching~$M$ containing at most $k$ edges in $G$ as follows. 
	First, observe that in $N$, agents~$b$ and~$c$ need to be matched to each other, as they only approve each other in the first layer and thus would form a blocking pair in this layer otherwise. 
	From this it follows that all penalizing agents need to be happy in all layers but the first and thus matched to a vertex agent from~${\{a_v\mid v\in V \}}$, as otherwise they form a blocking pair together with $b$ in the all layers but the first. 
	Let~${M:=\{\{v,v'\}\mid \{a_v,a_{v'}\}\in N\}}$. 
	
	First, matching $M$ contains at most $k$ edges: As $|\{a_v\mid v\in V \}|=n_V$  and as observed above all but $k$ of them are matched to a penalizing agent, only $k$ agents from $\{a_v\mid v\in V \}$ can be matched to another vertex agent. 
	As $G$ is bipartite, $M$ can contain at most $k$ edges.
	
	Second, matching $M$ is maximal: For the sake of contradiction assume that this is not the case because for some~$\{v,v'\}\in E$ both $v$ and $v'$ are unmatched in~$M$. 
	This implies that both~$a_v$~and~$a_{v'}$ are not matched to a vertex agent in $N$, which means that they are unhappy in all layers in~$N$. 
	However, as $\{v,v'\}\in E$, agents~$a_v$ and $a_{v'}$ approve each other, implying that they form a blocking pair for $N$ in this case, a contradiction.
	
	\medskip
	
	For \textsc{Individual Weak Stability}, the construction is the same and we set~$\ell = \alpha=2$. 
	For the forward direction of the proof of correctness note that as a blocking pair for $2$-individual weak stability needs to include two agents that are both unhappy in at least one layer and approve each other, a potentially blocking pair again needs to include two vertex agents corresponding to vertices unmatched by $M$. Then, the reasoning from above applies. 
	
	For the backward direction of the proof of correctness, recall that each $\ell$-individual weakly stable matching is also all-layers weakly stable and thereby that the reasoning from above still applies.  
\end{proof}
We leave it open which of our problems for strong and super stability that are NP-hard for asymmetric approvals are in FPT or in XP with respect to $\beta$.

\section{Conclusion}
We initiated the study of stable matchings with multilayer approval preferences.
We identified eleven stability notions and determined the computational complexity of deciding the existence of a stable matching for each notion.
While this task turned out to be NP-hard for just two or three layers for most of the notions (even if the analogous problem for strict preferences is polynomial-time solvable), we also identified several tractable cases, e.\,g., when ``similarity'' in the agents' preferences is assumed. 

For future work, note that we have posed several open questions throughout the paper, e.\,g., which of our problems become polynomial-time solvable if, within each layer, all agents approve the same agents. 
We also wonder for the two cases where we have polynomial-solvability for symmetric approvals but NP-hardness for asymmetric approvals, whether the problem is FPT with respect to the number of non-mutual approvals. 

On a more conceptual note, we have argued in the introduction that multilayer preferences also allow to model situations where fixed groups need to be matched to each other and each agent models a group. 
As groups can be of different sizes, it would be interesting to consider situations where each agent has a different number of preference relations. 
While pair and global stability seem no longer applicable, variants of individual stability still appear to be highly relevant.

It would also be interesting to consider multilayer variants of stable matching problems with ties and incomplete lists, which would notably generalize both the models studied by us and by Chen et al. \cite{DBLP:conf/sigecom/ChenNS18}. 
Thus, our strong intractability results already rule out the existence of efficient algorithm for many stability notions in this model.   

Regarding \cref{se:similarity}, one may also consider different similarity measures, e.\,g.\ (isomorphism-based) similarity of the approval graphs of the different layers.

Lastly, studying multilayer preferences in situations where agents shall be partitioned into groups of size larger than two (also known as hedonic games~\mbox{\cite{DBLP:reference/choice/AzizS16,DBLP:conf/aaai/Peters16}}) is a promising direction for future work.

\subsection*{Acknowledgments}
MB was supported by the DFG project MaMu (NI 369/19). NB was supported by the DFG project MaMu (NI 369/19) and by the DFG project ComSoc-MPMS (NI 369/22).  KH was supported by the DFG Research Training Group 2434 ``Facets of Complexity'' and by the DFG project FPTinP (NI 369/16). TK was supported by the DFG project DiPa (NI 369/21). 
This work
was started at the research retreat of the TU Berlin Algorithmics and Computational Complexity
research group held in Zinnowitz (Usedom) in September 2021.

\bibliographystyle{splncs04}

\begin{thebibliography}{10}
\providecommand{\url}[1]{\texttt{#1}}
\providecommand{\urlprefix}{URL }
\providecommand{\doi}[1]{https://doi.org/#1}

\bibitem{DBLP:conf/atal/AzizBFGHMR17}
Aziz, H., Bir{\'{o}}, P., Fleiner, T., Gaspers, S., de~Haan, R., Mattei, N.,
  Rastegari, B.: Stable matching with uncertain pairwise preferences. In: Proc.
  of {AAMAS}-2017. pp. 344--352. {ACM} (2017)

\bibitem{DBLP:journals/algorithmica/AzizBGHMR20}
Aziz, H., Bir{\'{o}}, P., Gaspers, S., de~Haan, R., Mattei, N., Rastegari, B.:
  Stable matching with uncertain linear preferences. Algorithmica
  \textbf{82}(5),  1410--1433 (2020)

\bibitem{DBLP:journals/ai/AzizBHR19}
Aziz, H., Bir{\'{o}}, P., de~Haan, R., Rastegari, B.: Pareto optimal allocation
  under uncertain preferences: uncertainty models, algorithms, and complexity.
  Artif. Intell.  \textbf{276},  57--78 (2019)

\bibitem{DBLP:journals/teco/AzizBM20}
Aziz, H., Bogomolnaia, A., Moulin, H.: Fair mixing: The case of dichotomous
  preferences. {ACM} Trans. Economics and Comput.  \textbf{8}(4),  18:1--18:27
  (2020)

\bibitem{DBLP:reference/choice/AzizS16}
Aziz, H., Savani, R.: Hedonic games. In: Handbook of Computational Social
  Choice, pp. 356--376. Cambridge University Press (2016)

\bibitem{DBLP:journals/eccc/ECCC-TR03-049}
Berman, P., Karpinski, M., Scott, A.D.: Approximation hardness of short
  symmetric instances of {MAX-3SAT}. Electron. Colloquium Comput. Complex. (49)
  (2003)

\bibitem{boccaletti2014structure}
Boccaletti, S., Bianconi, G., Criado, R., Del~Genio, C., G{\'o}mez-Gardenes,
  J., Romance, M., Sendina-Nadal, I., Wang, Z., Zanin, M.: The structure and
  dynamics of multilayer networks. Phys. Rep.  \textbf{544}(1),  1--122 (2014)

\bibitem{DBLP:journals/corr/abs-2102-07441}
Boehmer, N., Brill, M., Schmidt{-}Kraepelin, U.: Proportional representation in
  matching markets: Selecting multiple matchings under dichotomous preferences.
  In: Proc. of {AAMAS}-2022. pp. 136--144. {IFAAMAS} (2022)

\bibitem{DBLP:journals/corr/abs-2112-05777}
Boehmer, N., Heeger, K., Niedermeier, R.: Theory of and experiments on
  minimally invasive stability preservation in changing two-sided matching
  markets. CoRR  \textbf{abs/2112.05777} (2021),
  \url{https://arxiv.org/abs/2112.05777}, accepted for publication at AAAI '22

\bibitem{DBLP:conf/atal/BoehmerN21}
Boehmer, N., Niedermeier, R.: Broadening the research agenda for computational
  social choice: Multiple preference profiles and multiple solutions. In: Proc.
  of {AAMAS}-2021. pp.~1--5. {ACM} (2021)

\bibitem{DBLP:journals/jair/BouveretL08}
Bouveret, S., Lang, J.: Efficiency and envy-freeness in fair division of
  indivisible goods: Logical representation and complexity. J. Artif. Intell.
  Res.  \textbf{32},  525--564 (2008)

\bibitem{DBLP:conf/wine/BredereckHKN20}
Bredereck, R., Heeger, K., Knop, D., Niedermeier, R.: Multidimensional stable
  roommates with master list. In: Proc. of {WINE}-2020. pp. 59--73. Springer
  (2020)

\bibitem{bredereck_komusiewicz_kratsch_molter_niedermeier_sorge_2019}
Bredereck, R., Komusiewicz, C., Kratsch, S., Molter, H., Niedermeier, R.,
  Sorge, M.: Assessing the computational complexity of multilayer subgraph
  detection. Netw. Sci.  \textbf{7}(2),  215–241 (2019)

\bibitem{DBLP:conf/sigecom/ChenNS18}
Chen, J., Niedermeier, R., Skowron, P.: Stable marriage with multi-modal
  preferences. In: Proc. of {EC}-2018. pp. 269--286. {ACM} (2018)

\bibitem{DBLP:journals/tcs/DowneyF95}
Downey, R.G., Fellows, M.R.: Fixed-parameter tractability and completeness
  {II:} on completeness for {W[1]}. Theor. Comput. Sci.  \textbf{141}(1{\&}2),
  109--131 (1995)

\bibitem{DBLP:journals/dam/Irving94}
Irving, R.W.: Stable marriage and indifference. Discret. Appl. Math.
  \textbf{48}(3),  261--272 (1994)

\bibitem{DBLP:journals/jal/IrvingM02}
Irving, R.W., Manlove, D.F.: The stable roommates problem with ties. J.
  Algorithms  \textbf{43}(1),  85--105 (2002)

\bibitem{DBLP:journals/dam/IrvingMS08}
Irving, R.W., Manlove, D.F., Scott, S.: The stable marriage problem with master
  preference lists. Discret. Appl. Math.  \textbf{156}(15),  2959--2977 (2008)

\bibitem{DBLP:conf/ecai/0001T20}
Jain, P., Talmon, N.: Committee selection with multimodal preferences. In:
  Proc. of {ECAI}-2020. pp. 123--130. {IOS} Press (2020)

\bibitem{DBLP:journals/compnet/KivelaABGMP14}
Kivel{\"{a}}, M., Arenas, A., Barthelemy, M., Gleeson, J.P., Moreno, Y.,
  Porter, M.A.: Multilayer networks. J. Complex Networks  \textbf{2}(3),
  203--271 (2014)

\bibitem{DBLP:conf/esa/Kunysz16}
Kunysz, A.: The strongly stable roommates problem. In: Proc. of {ESA}-2016. pp.
  60:1--60:15. Schloss Dagstuhl - Leibniz-Zentrum f{\"{u}}r Informatik (2016)

\bibitem{DBLP:conf/ijcai/KyropoulouSV19}
Kyropoulou, M., Suksompong, W., Voudouris, A.A.: Almost envy-freeness in group
  resource allocation. In: Proc. of {IJCAI}-2019. pp. 400--406. ijcai.org
  (2019)

\bibitem{DBLP:journals/corr/abs-2007-01795}
Lackner, M., Skowron, P.: Approval-based committee voting: Axioms, algorithms,
  and applications. CoRR  \textbf{abs/2007.01795} (2020),
  \url{https://arxiv.org/abs/2007.01795}

\bibitem{DBLP:conf/asunam/MagnaniR11}
Magnani, M., Rossi, L.: The {ML}-model for multi-layer social networks. In:
  Proc. of {ASONAM}-2011. pp. 5--12. {IEEE} Computer Society (2011)

\bibitem{DBLP:books/ws/Manlove13}
Manlove, D.F.: Algorithmics of Matching Under Preferences, Series on
  Theoretical Computer Science, vol.~2. WorldScientific (2013)

\bibitem{DBLP:journals/tcs/MeeksR20}
Meeks, K., Rastegari, B.: Solving hard stable matching problems involving
  groups of similar agents. Theor. Comput. Sci.  \textbf{844},  171--194 (2020)

\bibitem{DBLP:journals/jco/MiyazakiO19}
Miyazaki, S., Okamoto, K.: Jointly stable matchings. J. Comb. Optim.
  \textbf{38}(2),  646--665 (2019)

\bibitem{DBLP:conf/aaai/Peters16}
Peters, D.: Complexity of hedonic games with dichotomous preferences. In: Proc.
  of {AAAI}-2016. pp. 579--585. {AAAI} Press (2016)

\bibitem{DBLP:conf/stoc/Schaefer78}
Schaefer, T.J.: The complexity of satisfiability problems. In: Proc. of
  {STOC}-1978. pp. 216--226. {ACM} (1978)

\bibitem{DBLP:journals/ai/Segal-HaleviS19}
Segal{-}Halevi, E., Suksompong, W.: Democratic fair allocation of indivisible
  goods. Artif. Intell.  \textbf{277},  103--167 (2019)

\bibitem{DBLP:conf/eumas/SteindlZ21}
Steindl, B., Zehavi, M.: Parameterized analysis of assignment under multiple
  preferences. In: Proc. of {EUMAS}-2021. pp. 160--177. Springer (2021)

\bibitem{DBLP:conf/eumas/SteindlZ21a}
Steindl, B., Zehavi, M.: Verification of multi-layered assignment problems. In:
  Proc. of {EUMAS}-2021. pp. 194--210. Springer (2021)

\bibitem{DBLP:journals/mss/Suksompong18}
Suksompong, W.: Approximate maximin shares for groups of agents. Math. Soc.
  Sci.  \textbf{92},  40--47 (2018)

\bibitem{DBLP:conf/aaai/TalmonF19}
Talmon, N., Faliszewski, P.: A framework for approval-based budgeting methods.
  In: Proc. of {AAAI}-2019. pp. 2181--2188. {AAAI} Press (2019)

\bibitem{posbasedmm}
Wen, Y., Zhou, A., Guo, J.: Position-based matching with multi-modal
  preferences. In: Proc. of {AAMAS}-2022. pp. 1373--1381. {IFAAMAS} (2022)

\bibitem{yannakakis1980edge}
Yannakakis, M., Gavril, F.: Edge dominating sets in graphs. SIAM J. Appl. Math.
   \textbf{38}(3),  364--372 (1980)

\end{thebibliography}

\end{document}